\documentclass[nonacm,screen,oneside]{acmart}

\makeatletter
\@twosidefalse
\@mparswitchfalse
\makeatother

\makeatletter
\renewcommand\normalsize{%
  \@setfontsize\normalsize{11pt}{13.6pt}%
  \abovedisplayskip 11\p@ \@plus3\p@ \@minus6\p@
  \abovedisplayshortskip \z@ \@plus3\p@
  \belowdisplayshortskip 6.5\p@ \@plus3.5\p@ \@minus3\p@
  \belowdisplayskip \abovedisplayskip
  \let\@listi\@listI}
\renewcommand\small{\@setfontsize\small{10pt}{12pt}}
\renewcommand\footnotesize{\@setfontsize\footnotesize{9pt}{11pt}}
\renewcommand\scriptsize{\@setfontsize\scriptsize{8pt}{9.5pt}}
\renewcommand\tiny{\@setfontsize\tiny{6pt}{7pt}}
\renewcommand\large{\@setfontsize\large{12pt}{14pt}}
\renewcommand\Large{\@setfontsize\Large{14pt}{17pt}}
\renewcommand\LARGE{\@setfontsize\LARGE{17pt}{21pt}}
\renewcommand\huge{\@setfontsize\huge{20pt}{24pt}}
\renewcommand\Huge{\@setfontsize\Huge{24pt}{28pt}}
\makeatother
\normalsize

\AtBeginDocument{%
  }

\setcopyright{none}
\copyrightyear{2027}
\acmYear{2027}
\acmJournal{PACMPL}
\acmVolume{0}
\acmNumber{POPL}
\acmArticle{0}
\acmDOI{}

\PassOptionsToPackage{textwidth=2.5cm,textsize=footnotesize,colorinlistoftodos}{todonotes}

\PassOptionsToPackage{table}{xcolor}

\usepackage[T1]{fontenc}
\usepackage{amsmath}

\usepackage{tikz}
\usetikzlibrary{arrows.meta}
\usetikzlibrary{fit,calc}
\usepackage{wrapfig}
\pgfdeclarelayer{background}
\pgfsetlayers{background,main}

\usepackage{pgf}
\usepackage{pgfplots}
\pgfplotsset{compat=1.18}

\usepackage{import}

\usepackage[all]{nowidow}
\usepackage[lastparline]{impnattypo}

\usepackage{csquotes}

\usepackage[labelfont=bf,textfont=it]{caption}

\usepackage{enumitem}
\setlist{labelindent=\parindent, leftmargin=2\parindent}

\usepackage{float}

\usepackage{multicol}
\usepackage{parcolumns}

\usepackage{centernot}

\usepackage[thicklines]{cancel}

\usepackage{tikz}

\usetikzlibrary{cd, arrows}
\usetikzlibrary{decorations.pathreplacing}
\usetikzlibrary{positioning}

\usepackage{graphicx}
\graphicspath{{./}{../}{../../}}

\usepackage{fonttable}
\usepackage{expl3}
\usepackage{xparse, xpatch}

\usepackage{todonotes}
\reversemarginpar

\usepackage{tabularx, multirow}
  \usepackage{stackrel, stackengine}

\IfFileExists{colortbl.sty}{%
  \usepackage[table]{xcolor}%
}{%
  \usepackage{xcolor}%
}
\usepackage{hhline}

\usepackage{ebproof}

  \usepackage{amsfonts, mathtools, stmaryrd}
  \let\mathscr\mathcal
\usepackage{amsopn}

\usepackage{quiver}

\usepackage{hyperref}

\usepackage{subcaption}

\usepackage[export]{adjustbox}

\usepackage{xspace}

\usepackage[frozencache]{minted2}

\newcommand{\vehicle}{\textbf{Vehicle}\xspace}
\providecommand{\lstinline}[1]{\texttt{#1}} 

\usepackage{amsthm}
\usepackage{thmtools}

\usepackage[english]{babel}%

\usepackage[capitalise,nameinlink,capitalize,noabbrev]{cleveref}

\usepackage{stmaryrd}

\theoremstyle{plain}
\newtheorem{theorem}{Theorem}[section]
\newtheorem{proposition}[theorem]{Proposition}
\newtheorem{corollary}{Corollary}[theorem]
\newtheorem{lemma}[theorem]{Lemma}

\theoremstyle{definition}
\newtheorem{remark}[theorem]{Remark}
\newtheorem{definition}[theorem]{Definition}
\newtheorem{notation}[theorem]{Notation}

\newtheorem{example}[theorem]{Example}

\crefname{diagram}{Diagram}{Diagrams}
\crefname{rule}{Rule}{Rules}
\crefname{proof}{Derivation}{Derivations}
\crefname{axiom}{Axiom}{Axioms}

\newenvironment{eqalign}{\begin{equation}\begin{aligned}}{\end{aligned}\end{equation}}
\newenvironment{eqalign*}{\begin{equation*}\begin{aligned}}{\end{aligned}\end{equation*}}

\tikzset{
  relation/.style={
    draw=none,
    every to/.append style={
      edge node={node [sloped, allow upside down, auto=false]{$#1$}}}
  }
}

\tikzcdset{
  mono/.code={
    \pgfsetarrows{tikzcd to reversed-tikzcd to}
  }
}
\tikzcdset{
  epi/.code={
    \pgfsetarrowsend{tikzcd double to}
  }
}
\tikzcdset{
  into/.code={
    \pgfsetarrows{tikzcd right hook-tikzcd to}
  }
}
\tikzcdset{
  twocell/.style={Rightarrow, shorten >= 3ex, shorten <= 3ex}
}

\newcommand{\Overset}[2]{%
  \mathop{#2}\limits^{\vbox to -.1ex{%
  \kern -1.8ex\hbox{$#1$}\vss}}%
}
\newcommand{\Underset}[2]{%
  \mathop{#2}\limits_{\vbox to .1ex{%
  \kern -.6ex\hbox{$#1$}\vss}}%
}

\mathchardef\dash="2D

\renewcommand{\exp}{\operatorname{exp}}

\DeclareMathOperator{\argmin}{\mathrm{argmin}}

\newcommand{\longto}{\longrightarrow}

\newcommand{\twoto}{\Rightarrow}

\newcommand{\R}{\mathbb{R}}

\newcommand{\biginf}{\bigwedge}

\newcommand{\conv}[1][]{\underset{{#1}}{\longrightarrow}}

\newcommand{\cat}[1]{{\mathcal{#1}}}
\newcommand{\thecat}[1]{{\mathbf{#1}}}

\newcommand{\id}{\mathrm{id}}

\newcommand{\iso}[1][]{\overset{#1}{\cong}}

\newcommand{\adj}{\dashv}

\newcommand{\Set}{\thecat{Set}}

\newcommand{\op}{\mathsf{op}}

\newcommand{\V}{{\cat{V}}}

\newcommand{\parr}{\mathbin{\mathchoice
  {\rotatebox[origin=c]{180}{$\displaystyle\&$}}
  {\rotatebox[origin=c]{180}{$\textstyle\&$}}
  {\rotatebox[origin=c]{180}{$\scriptstyle\&$}}
  {\rotatebox[origin=c]{180}{$\scriptscriptstyle\&$}}
}}

\newcommand{\psum}[2][p]{\mathchoice{{\bigoplus_{#2}}^{#1}}{\bigoplus_{#2}^{#1}}{\bigoplus_{#2}^{#1}}{\bigoplus_{#2}^{#1}}}
\newcommand{\hsum}[1]{\psum[*]{#1}}

\newcommand{\Reals}{[-\infty, +\infty]}

\newcommand{\PosReals}{[0,\infty]}

\newcommand{\MulReals}{\PosReals_\tensor}

\newcommand{\pMulReals}[1][p]{\PosReals_{\padd[#1],\tensor}}
\newcommand{\pAddReals}[1][p]{\Reals_{\pAor[#1],\add}}

\newcommand{\mulimp}{\multimap}

\newcommand{\tensor}{\otimes}
\newcommand{\cotensor}{\mathbin{\tensor^*}}
\newcommand{\add}{\mathbin{\oplus}}
\newcommand{\padd}[1][p]{\add^{#1}}
\newcommand{\pcoadd}[1][p]{\add^{-#1}}
\newcommand{\coadd}{\add^*}
\newcommand{\pAor}[1][p]{\mathbin{\cup^{#1}}}
\newcommand{\pAand}[1][p]{\mathbin{\cap^{#1}}}

\newcommand{\One}{\mathbf{1}}
\newcommand{\true}{T}
\newcommand{\false}{F}

\newcommand{\entails}{\vdash}

\newcommand{\Prop}{{\bf Prop}}

\newcommand{\sem}[1]{\llbracket{#1}\rrbracket}

\newcommand{\red}[1]{\textcolor{red}{#1}}
\newcommand{\redbin}[1]{\mathbin{\red{#1}}}
\newcommand{\redbbin}[1]{\quad\red{#1}\quad}

\newcommand{\leftrule}{_{\text{L}}}
\newcommand{\rightrule}{_{\text{R}}}

\newcommand{\QLL}{\ensuremath{\hyperref[fig:pqll]{\text{QLL}}}\xspace}
\newcommand{\pQLL}[1][p]{\ensuremath{\hyperref[fig:pqll]{#1\text{QLL}}}\xspace}
\newcommand{\Free}[2]{
\ensuremath{%
  \mathchoice
    {#1^{\cancel{#2}{\scriptstyle}}}
    {#1^{\cancel{#2}{\scriptstyle}}}
    {#1^{\cancel{#2}{\scriptscriptstyle}}}
    {#1^{\cancel{#2}{\scriptscriptstyle}}}
}\xspace}
\newcommand{\CutFreepQLL}[1][p]{\Free{\pQLL[#1]}{\CutRule}}
\newcommand{\EFQFreepQLL}[1][p]{\Free{\pQLL[#1]}{\ExFalsoRule}}
\newcommand{\pQLLStar}[1][p]{\ensuremath{\hyperref[def:pqllstar]{#1\text{QLL}^*}}\xspace}

\newcommand{\MALL}{\ensuremath{\text{MALL}}\xspace}
\newcommand{\ISOMIXMALL}{\ensuremath{\hyperref[fig:isomixmall]{\text{MALL}}}\xspace}
\newcommand{\hISOMIXMALL}{\ensuremath{\hyperref[fig:isomixmall-h]{\text{HMALL}}}\xspace}

\newcommand{\eleq}{\sqsubseteq}

\newcommand{\plor}[1][p]{\lor^{#1}}
\newcommand{\pland}[1][p]{\land^{#1}}

\newcommand{\grammareq}{\ \Coloneqq \ }
\newcommand{\grammarsep}{\ | \  }
\newcommand{\Atoms}{V}

\newcommand{\hyper}[1]{\mathcal{#1}}
\newcommand{\rules}[1]{\mathscr{#1}}

\newcommand{\validity}[1]{{\left|#1\right|}}

\newcommand{\monop}{\bullet}
\newcommand{\dumonop}{\circ}
\newcommand{\dual}{\perp}
\newcommand{\Softales}{\thecat{Sftl}}

\newcommand{\model}[1][M]{\mathcal{#1}}

\newcommand{\SynSoftale}[2][p]{\mathbf{LT}^{#1}\!\left[#2\right]}
\newcommand{\Theory}{\mathbb{T}}

\DeclareMathOperator{\expected}{\mathbb{E}}

\newcommand{\atom}[1]{\ulcorner #1 \urcorner}

\makeatletter
\if@ACM@anonymous
  \newcommand{\anonymous}[1]{}
\else
  \newcommand{\anonymous}[1]{#1}
\fi
\makeatother

\usepackage{algorithmic}
\usepackage[gen]{eurosym}

\renewcommand{\EUR}{\ensuremath{\text{\euro}}}

\ebproofset{separation=.75em}

\renewcommand{\plor}[1][]{\lor^{#1}}
\renewcommand{\pland}[1][]{\land^{#1}}

\newcommand{\TensorLeftRule}{\ensuremath{\hyperref[subfig:mult-frag]{\tensor\leftrule}}\xspace}
\newcommand{\TensorRightRule}{\ensuremath{\hyperref[subfig:mult-frag]{\tensor\rightrule}}\xspace}
\newcommand{\ParLeftRule}{\ensuremath{\hyperref[subfig:mult-frag]{\parr\leftrule}}\xspace}
\newcommand{\ParRightRule}{\ensuremath{\hyperref[subfig:mult-frag]{\parr\rightrule}}\xspace}
\newcommand{\OneLeftRule}{\ensuremath{\hyperref[subfig:mult-frag]{\One\leftrule}}\xspace}
\newcommand{\OneRightRule}{\ensuremath{\hyperref[subfig:mult-frag]{\One\rightrule}}\xspace}
\newcommand{\DualLeftRule}{\ensuremath{\hyperref[subfig:mult-frag]{{(-)}^\dual\leftrule}}\xspace}
\newcommand{\DualRightRule}{\ensuremath{\hyperref[subfig:mult-frag]{{(-)}^\dual\rightrule}}\xspace}

\newcommand{\CutRule}{\ensuremath{\hyperref[subfig:struct-frag]{\text{CUT}}}\xspace}
\newcommand{\AxRule}{\ensuremath{\hyperref[subfig:struct-frag]{\text{AX}}}\xspace}
\newcommand{\EmptyRule}{\ensuremath{\hyperref[subfig:struct-frag]{\text{EMP}}}\xspace}
\newcommand{\MixStarRule}{\ensuremath{\text{\hyperref[rule:mix-star-and-split]{MIX$^*$}}}\xspace}
\newcommand{\MixRule}{\ensuremath{\hyperref[subfig:struct-frag]{\text{MIX}}}\xspace}

\newcommand{\pOrLeftRule}[1][]{\ensuremath{\hyperref[subfig:add-frag]{\plor[#1]\leftrule}}\xspace}
\newcommand{\pOrRightRule}[1][]{\ensuremath{\hyperref[subfig:add-frag]{\plor[#1]\rightrule}}\xspace}
\newcommand{\pAndLeftRule}[1][]{\ensuremath{\hyperref[subfig:add-frag]{\pland[#1]\leftrule}}\xspace}
\newcommand{\pAndRightRule}[1][]{\ensuremath{\hyperref[subfig:add-frag]{\pland[#1]\rightrule}}\xspace}
\newcommand{\BotLeftRule}{\ensuremath{\hyperref[subfig:add-frag]{\bot\leftrule}}\xspace}
\newcommand{\ExFalsoRule}{\ensuremath{\text{\hyperref[subfig:add-frag]{EFQ}}}\xspace}
\newcommand{\TopRightRule}{\ensuremath{\hyperref[subfig:add-frag]{\top\rightrule}}\xspace}

\newcommand{\ExtWeakRule}{\ensuremath{\text{\hyperref[rule:external-structural]{EW}}}\xspace}
\newcommand{\ExtCoweakRule}{\ensuremath{\text{\hyperref[rule:external-structural]{ECOW}}}\xspace}
\newcommand{\ExtContRule}{\ensuremath{\text{\hyperref[rule:external-structural]{EC}}}\xspace}

\newcommand{\ExtExchRule}[1][]{\ensuremath{\text{\hyperref[rule:external-structural]{EE}}_{#1}}\xspace}

\newcommand{\StructuralSchemaRule}{\ref{rule:structural-schema}\xspace}
\newcommand{\AtomRightRule}[1]{\ensuremath{\hyperref[rule:grounding-schema]{#1\rightrule}}\xspace}
\newcommand{\AtomLeftRule}[1]{\ensuremath{\hyperref[rule:grounding-schema]{#1\leftrule}}\xspace}

\newcommand{\pOrLeftInvRule}[1][p]{\ensuremath{\hyperref[th:add-inv-rules]{\textrm{$\plor\rightrule$-inv}}}\xspace}
\newcommand{\pAndRightInvRule}[1][p]{\ensuremath{\hyperref[th:add-inv-rules]{\textrm{$\pland\leftrule$-inv}}}\xspace}

\newcommand{\MALLTensorLeftRule}{\ensuremath{\hyperref[subfig:mall-mult-frag]{\tensor\leftrule}}\xspace}
\newcommand{\MALLTensorRightRule}{\ensuremath{\hyperref[subfig:mall-mult-frag]{\tensor\rightrule}}\xspace}
\newcommand{\MALLParLeftRule}{\ensuremath{\hyperref[subfig:mall-mult-frag]{\parr\leftrule}}\xspace}
\newcommand{\MALLParRightRule}{\ensuremath{\hyperref[subfig:mall-mult-frag]{\parr\rightrule}}\xspace}
\newcommand{\MALLOneLeftRule}{\ensuremath{\hyperref[subfig:mall-mult-frag]{\One\leftrule}}\xspace}
\newcommand{\MALLOneRightRule}{\ensuremath{\hyperref[subfig:mall-mult-frag]{\One\rightrule}}\xspace}
\newcommand{\MALLStarLeftRule}{\ensuremath{\hyperref[subfig:mall-mult-frag]{{}^\dual\leftrule}}\xspace}
\newcommand{\MALLStarRightRule}{\ensuremath{\hyperref[subfig:mall-mult-frag]{{}^\dual\rightrule}}\xspace}

\newcommand{\MALLOrLeftRule}{\ensuremath{\hyperref[subfig:mall-add-frag]{\lor\leftrule}}\xspace}
\newcommand{\MALLOrRightRule}[1]{\ensuremath{\hyperref[subfig:mall-add-frag]{{\lor\rightrule}_{#1}}}\xspace}
\newcommand{\MALLAndLeftRule}[1]{\ensuremath{\hyperref[subfig:mall-add-frag]{{\land\leftrule}_{#1}}}\xspace}
\newcommand{\MALLAndRightRule}{\ensuremath{\hyperref[subfig:mall-add-frag]{\land\rightrule}}\xspace}
\newcommand{\MALLBotLeftRule}{\ensuremath{\hyperref[subfig:mall-add-frag]{\bot\leftrule}}\xspace}

\newcommand{\MALLTopRightRule}{\ensuremath{\hyperref[subfig:mall-add-frag]{\top\rightrule}}\xspace}

\newcommand{\MALLAxRule}{\ensuremath{\hyperref[subfig:mall-struct-frag]{\text{AX}}}\xspace}
\newcommand{\MALLEmptyRule}{\ensuremath{\hyperref[subfig:mall-struct-frag]{\text{EMP}}}\xspace}
\newcommand{\MALLCutRule}{\ensuremath{\hyperref[subfig:mall-struct-frag]{\text{CUT}}}\xspace}
\newcommand{\MALLMixRule}{\ensuremath{\hyperref[subfig:mall-mult-frag]{\text{MIX}}}\xspace}

\newcommand{\qMALLTensorLeftRule}{\ensuremath{\hyperref[subfig:qmall-mult-frag]{\tensor\leftrule}}\xspace}
\newcommand{\qMALLTensorRightRule}{\ensuremath{\hyperref[subfig:qmall-mult-frag]{\tensor\rightrule}}\xspace}
\newcommand{\qMALLParLeftRule}{\ensuremath{\hyperref[subfig:qmall-mult-frag]{\parr\leftrule}}\xspace}
\newcommand{\qMALLParRightRule}{\ensuremath{\hyperref[subfig:qmall-mult-frag]{\parr\rightrule}}\xspace}
\newcommand{\qMALLOneLeftRule}{\ensuremath{\hyperref[subfig:qmall-mult-frag]{\One\leftrule}}\xspace}
\newcommand{\qMALLOneRightRule}{\ensuremath{\hyperref[subfig:qmall-mult-frag]{\One\rightrule}}\xspace}
\newcommand{\qMALLStarLeftRule}{\ensuremath{\hyperref[subfig:qmall-mult-frag]{{}^\dual\leftrule}}\xspace}
\newcommand{\qMALLStarRightRule}{\ensuremath{\hyperref[subfig:qmall-mult-frag]{{}^\dual\rightrule}}\xspace}

\newcommand{\qMALLOrRightRule}[1]{\ensuremath{\hyperref[subfig:qmall-add-frag]{{\lor\rightrule}_{#1}}}\xspace}
\newcommand{\qMALLAndLeftRule}[1]{\ensuremath{\hyperref[subfig:qmall-add-frag]{{\land\leftrule}_{#1}}}\xspace}

\newcommand{\qMALLpOrLeftRule}[1][p]{\ensuremath{\hyperref[subfig:qmall-add-frag]{\plor[#1]\leftrule}}\xspace}
\newcommand{\qMALLpOrRightRule}[1][p]{\ensuremath{\hyperref[subfig:qmall-add-frag]{\plor[#1]\rightrule}}\xspace}
\newcommand{\qMALLpAndLeftRule}[1][p]{\ensuremath{\hyperref[subfig:qmall-add-frag]{\pland[#1]\leftrule}}\xspace}
\newcommand{\qMALLpAndRightRule}[1][p]{\ensuremath{\hyperref[subfig:qmall-add-frag]{\pland[#1]\rightrule}}\xspace}
\newcommand{\qMALLBotLeftRule}{\ensuremath{\hyperref[subfig:qmall-add-frag]{\bot\leftrule}}\xspace}
\newcommand{\qMALLExFalsoRule}{\ensuremath{\text{\hyperref[subfig:qmall-add-frag]{EFQ}}}\xspace}
\newcommand{\qMALLTopRightRule}{\ensuremath{\hyperref[subfig:qmall-add-frag]{\top\rightrule}}\xspace}

\newcommand{\qMALLAxRule}{\ensuremath{\hyperref[subfig:qmall-struct-frag]{\text{AX}}}\xspace}
\newcommand{\qMALLEmptyRule}{\ensuremath{\hyperref[subfig:qmall-struct-frag]{\text{EMP}}}\xspace}
\newcommand{\qMALLCutRule}{\ensuremath{\hyperref[subfig:qmall-struct-frag]{\text{CUT}}}\xspace}
\newcommand{\qMALLMixRule}{\ensuremath{\hyperref[subfig:qmall-struct-frag]{\text{MIX}}}\xspace}

\newcommand{\hMALLTensorLeftRule}{\ensuremath{\hyperref[subfig:hmall-mult-frag]{\tensor\leftrule}}\xspace}
\newcommand{\hMALLTensorRightRule}{\ensuremath{\hyperref[subfig:hmall-mult-frag]{\tensor\rightrule}}\xspace}
\newcommand{\hMALLParLeftRule}{\ensuremath{\hyperref[subfig:hmall-mult-frag]{\parr\leftrule}}\xspace}
\newcommand{\hMALLParRightRule}{\ensuremath{\hyperref[subfig:hmall-mult-frag]{\parr\rightrule}}\xspace}
\newcommand{\hMALLOneLeftRule}{\ensuremath{\hyperref[subfig:hmall-mult-frag]{\One\leftrule}}\xspace}
\newcommand{\hMALLOneRightRule}{\ensuremath{\hyperref[subfig:hmall-mult-frag]{\One\rightrule}}\xspace}
\newcommand{\hMALLStarLeftRule}{\ensuremath{\hyperref[subfig:hmall-mult-frag]{{}^\dual\leftrule}}\xspace}
\newcommand{\hMALLStarRightRule}{\ensuremath{\hyperref[subfig:hmall-mult-frag]{{}^\dual\rightrule}}\xspace}

\newcommand{\hMALLOrLeftRule}{\ensuremath{\hyperref[subfig:hmall-add-frag]{\lor\leftrule}}\xspace}
\newcommand{\hMALLOrRightRule}{\ensuremath{\hyperref[subfig:hmall-add-frag]{\lor\rightrule}}\xspace}
\newcommand{\hMALLAndLeftRule}{\ensuremath{\hyperref[subfig:hmall-add-frag]{\land\leftrule}}\xspace}
\newcommand{\hMALLAndRightRule}{\ensuremath{\hyperref[subfig:hmall-add-frag]{\land\rightrule}}\xspace}
\newcommand{\hMALLBotLeftRule}{\ensuremath{\hyperref[subfig:hmall-add-frag]{\bot\leftrule}}\xspace}

\newcommand{\hMALLTopRightRule}{\ensuremath{\hyperref[subfig:hmall-add-frag]{\top\rightrule}}\xspace}

\newcommand{\hMALLAxRule}{\ensuremath{\hyperref[subfig:hmall-struct-frag]{\text{AX}}}\xspace}
\newcommand{\hMALLEmptyRule}{\ensuremath{\hyperref[subfig:hmall-struct-frag]{\text{EMP}}}\xspace}
\newcommand{\hMALLCutRule}{\ensuremath{\hyperref[subfig:hmall-struct-frag]{\text{CUT}}}\xspace}
\newcommand{\hMALLMixRule}{\ensuremath{\hyperref[subfig:hmall-mult-frag]{\text{MIX}}}\xspace}

\newcommand{\hMALLExtWeakRule}{\ensuremath{\text{\hyperref[subfig:hmall-external-structural]{EW}}}\xspace}
\newcommand{\hMALLExtContRule}{\ensuremath{\text{\hyperref[subfig:hmall-external-structural]{EC}}}\xspace}

\begin{document}

\title{Adequate Losses via Quantitative Linear Logic}

\author{Matteo Capucci}
\authornotemark[1]
\affiliation{
	\institution{University of Strathclyde and Independent Researcher}
	\city{Modena}
	\country{IT}
}
\email{matteo.capucci@gmail.com}

\author{Robert Atkey}
\authornotemark[1]
\affiliation{
	\institution{University of Strathclyde}
	\city{Glasgow}
	\country{UK}
}
\email{robert.atkey@strath.ac.uk}

\author{Charles Grellois}
\affiliation{
	\institution{Department of Computer Science, University of Sheffield}
	\city{Sheffield}
	\country{UK}
}
\email{c.grellois@sheffield.ac.uk}

\author{Ekaterina Komendantskaya}
\authornote{Author supported by ARIA grant MSAI-PR01-P05.}
\affiliation{
	\institution{Heriot-Watt and Southampton Universities}
	\city{Edinburgh and Southampton}
	\country{UK}
}
\email{ek1u23@soton.ac.uk}

\author{Matthew Daggitt}
\affiliation{
	\institution{University of Western Australia}
	\city{Perth}
	\country{AU}
}
\email{matthew.daggitt@uwa.edu.au}

\renewcommand{\shortauthors}{Capucci et al.}

\begin{abstract}

	As ``neural'' components are increasingly embedded in existing ``symbolic'' software---including safety-critical systems---the question arises of how to specify and enforce the safety of the newly introduced neural parts. Unlike traditional logical specifications, these must be amenable not only to the standard Boolean interpretation, but also to training and optimisation. The latter calls for a quantitative interpretation of the logical syntax, subject to further requirements such as smoothness and differentiability. Moreover, the qualitative and quantitative sides of the logic must share a unifying proof-theoretic and categorical semantics. Finally, the new logic should link cleanly to the substructural and program logics that underpin the verification of existing symbolic programs.


	In this paper, we present a logic that ticks all of these boxes.
	We introduce a family of calculi, \pQLL, indexed by a hardness degree $p$, prove a cut-elimination theorem for them, and establish completeness with respect to enriched residuated `soft' lattices.
	At $p = \infty$, \pQLL reduces to multiplicative additive linear logic (MALL), and provability in \pQLL converges to provability in MALL as $p \to \infty$.
	We express optimisation objectives in the syntax of this logic and prove the quantitative adequacy of neuro-symbolic loss functions---a result that has eluded the neuro-symbolic machine learning community for nearly a decade.
\end{abstract}

\begin{CCSXML}
	<ccs2012>
	<concept>
	<concept_id>10003752.10003790.10003801</concept_id>
	<concept_desc>Theory of computation~Linear logic</concept_desc>
	<concept_significance>500</concept_significance>
	</concept>
	<concept>
	<concept_id>10003752.10003790.10003792</concept_id>
	<concept_desc>Theory of computation~Proof theory</concept_desc>
	<concept_significance>500</concept_significance>
	</concept>
	<concept>
	<concept_id>10003752.10010124</concept_id>
	<concept_desc>Theory of computation~Semantics and reasoning</concept_desc>
	<concept_significance>500</concept_significance>
	</concept>
	<concept>
<concept_id>10003752.10003790.10002990</concept_id>
<concept_desc>Theory of computation~Logic and verification</concept_desc>
<concept_significance>500</concept_significance>
</concept>
	 <concept>
       <concept_id>10010147.10010257.10010293</concept_id>
       <concept_desc>Computing methodologies~Machine learning approaches</concept_desc>
       <concept_significance>500</concept_significance>
       </concept>
	</ccs2012>
\end{CCSXML}

\ccsdesc[500]{Theory of computation~Linear logic}
\ccsdesc[500]{Theory of computation~Proof theory}
\ccsdesc[500]{Theory of computation~Semantics and reasoning}
\ccsdesc[500]{Theory of computation~Logic and verification}
\ccsdesc[500]{Computing methodologies~Machine learning approaches}
\keywords{Linear Logic, Differentiable Logic, Fuzzy Logic, Sequent Calculus, Neuro-Symbolic AI, Neural Network Verification, Property-Driven Training, Enriched Category Theory}


\maketitle


\section{Introduction}
\label{sec:intro}

Neural networks are rapidly becoming ordinary software components: they are embedded in controllers, pipelines, and safety-critical systems, and so inherit the specification and verification obligations that programming language research has long articulated for conventional code.
Unfortunately, Neural Network Verification (NNV) is hard~\cite{CordeiroDGIJKKLMSW25}: it can be difficult both to formulate safety specifications and to ensure that networks satisfy them.
%

To address the former, the field has produced numerous domain-specific languages (DSLs) for encoding the interaction between logical reasoning and neural networks.
These DSLs span a broad design space. Systems such as Scallop~\cite{liScallopLanguageNeurosymbolic2023} and DeepProbLog~\cite{manhaeveDeepProbLogNeuralProbabilistic2018} treat the outputs of neural networks as probabilistic inputs to a symbolic reasoning engine, so that logical inference can be integrated into the learning pipeline without directly constraining the learned function. In contrast, systems such as Logic Tensor Networks~\cite{badreddineLogicTensorNetworks2022}, DL2~\cite{fischer2019dl2}, and \vehicle~\cite{daggittVehicleBridgingEmbedding2025,daggitt2026compositional} compile logical specifications into architectures or loss functions that push the network itself to satisfy the specified properties.
This paper is motivated by the latter class of systems,
and studies \textbf{property-driven training}, a set of methods that train neural networks directly on a spec\footnote{Note that training is but one form of model repair/synthesis, see e.g.~\cite{yuviler2025enhancing,ijcai2022p767} for other forms.}.

These methods share a common computational abstraction:  given an architecture $f_\theta : \R^m \to \R^n$ parameterised by weights $\theta \in \R^v$, and a Boolean predicate $\varphi : (\R^m \to \R^n) \to \Prop$ over functions expressed in some domain-specific DSL, it is then translated into a differentiable real-valued function over the instances of the architecture $\sem{\varphi} : \R^v \to \R$.
This translation is performed using what practitioners commonly call a \emph{differentiable logic}: an interpretation of logical connectives and quantifiers as differentiable operations over real-valued truth degrees that allows the compiler to perform a compositional translation of the Boolean specification.
Subsequently, the gradient $\partial_\theta \sem{\varphi}(f_\theta)$ can be used during training to guide the optimisation process towards a $\theta$ such that $\varphi(f_\theta)$ holds.

At first glance, differentiable logics appear to be little more than a modern incarnation of many-valued logics.
Logics with truth values in some subset of the real line have been studied for at least a century, counting from the works of {\L}ukasiewicz~\cite{lukasiewicz1920three} and G\"{o}del~\cite{metcalfeProofTheoryFuzzy2009}, and moving on to fuzzy logic~\cite{zadehFuzzySets1965,cintulaHandbookMathematicalFuzzy2011}, continuous model theory \cite{yaacovModelTheoryMetric2008}, and recent work on quantitative algebra \cite{mardareQuantitativeAlgebraicReasoning2016,bacciPropositionalLogicsLawvere2023,bacciPolynomialLawvereLogic2024,bacciInductionRecursionPrinciples2025}.

Consequently, many differentiable logics and associated tools have been proposed and studied~\cite{fischer2019dl2,van2022analyzing,varnaiRobustnessMetricsLearning2020,badreddineLogicTensorNetworks2022,badreddineLogLTNDifferentiableFuzzy2023}.
The central claim common to all of these works is that performing gradient descent to minimise the output of the compiled specification drives the network to satisfy the specification's Boolean interpretation.
From a logical perspective, this amounts to claiming \emph{adequacy} of the differentiable logic (and hence of the real-valued semantics of the DSL) as a proxy for satisfaction: the logic should be \emph{sound}, in the sense that a low loss guarantees the model satisfies the specification, and \emph{complete}, in the sense that a satisfying model is reflected by a correspondingly low loss.

\subsection{The problems}
\label{sec:problem}

\begin{wrapfigure}{r}{0.42\textwidth}
	\centering
	\vspace*{-1\baselineskip}
	\resizebox{0.4\textwidth}{!}{%
	\begin{tikzpicture}[>=stealth, every node/.style={font=\small}, line width=0.6pt]
		\node[font=\footnotesize, draw, rounded corners, inner sep=4pt] (opt)  at (0,1.5)       {optimisation};
		\node[font=\footnotesize, draw, rounded corners, inner sep=4pt] (sat)  at (-1.55,-0.75) {satisfaction};
		\node[font=\footnotesize, draw, rounded corners, inner sep=4pt] (prov) at (1.55,-0.75)  {provability};
		\draw[<->] (opt)  to[bend right=20] node[pos=0.6, above, sloped, inner sep=5pt, font=\footnotesize\bfseries]{adequacy} (sat);
		\draw[<->] (sat)  to[bend right=20] node[midway, below, inner sep=1.5pt, font=\footnotesize\bfseries]{completeness} (prov);
		\draw[<->] (prov) to[bend right=20] node[pos=0.4, above, sloped, inner sep=5pt, font=\footnotesize\bfseries]{justification} (opt);
	\end{tikzpicture}%
	}
	\caption{\small{The three facets a differentiable logic must reconcile.}}
	\label{fig:triangle}
\end{wrapfigure}

Yet a century of real-valued logic has not settled the matter~\cite{van2022analyzing,flinkow2025comparing,affeldt2026foundationdifferentiablelogicsusing,ldl-coq}, because the mismatch runs deeper than the truth values of formulae.
Using loss functions as logical semantics demands three things (\cref{fig:triangle}) that no existing logic can meet at once, which we now examine in turn.

The first aspect concerns the match between \emph{satisfaction} of the specification and the result of \emph{optimisation} of the loss: we call this \emph{adequacy} (\cref{fig:triangle}).
Consider, for example, the result in \cite{fischer2019dl2}  that proves that the proposed differentiable logic (DL2) is adequate in the sense that $\sem{\varphi} = 0$ if and only if the network satisfies $\varphi$, as verified by a second Boolean semantics.
This kind of \emph{extremal} adequacy is of little use since, in practice, the loss is never \emph{exactly} $0$!
Even worse, such a notion does not explain why \emph{optimising} the logical loss should lead to better \emph{satisfaction} of the spec---which, empirically, it does.

The hard-to-swallow pill here is that a \textbf{qualitative, `yes or no' notion of satisfaction has little hope of ever being adequate}.
Indeed, the logical loss is a real-valued quantity that ranges over a broad spectrum of values: if we want these values to mean something logical, then the logic must be able to accommodate this variation and the nuance that comes with it at a syntactic/proof-theoretic level.
Secondly, the properties we are checking are seldom qualitative, clear-cut ones: they amount to measuring errors, distances, ratios, and the like.

For instance, a common specification enforced in neural classifiers is \emph{robustness} \cite{casadioNeuralNetworkRobustness2022}, which is the property of predictions being robust under small perturbations of the input.
One can formulate such a property as
\begin{equation}\label{eq:robustness}
	\tag{robustness}
	\forall x \in X,\ |x- \hat{x}| < \varepsilon \implies |f_\theta(x) - f_\theta(\hat{x})| < \delta
\end{equation}
for chosen constants $\varepsilon$ and $\delta$ and a fixed datapoint $(\hat x, \hat y) \in D$.
Our intuition tells us that there are degrees to the violation of this spec, and it is somewhat unreasonable that two nearly identical functions, like those in \cref{fig:robustness-comparison}, should receive completely opposite verdicts.
The main thesis of this paper is thus 
that
\begin{center}
	\itshape
	the verification of machine learning models should be \\ founded on a \textbf{quantitative} reconception of logic.
\end{center}

\begin{figure}[t]
	\centering
	\pgfplotsset{
		robustness plot/.style={
			width=\linewidth,
			height=.68\linewidth,
			axis lines=middle,
			axis line style={-Stealth, thin},
			xmin=-1.8, xmax=2,
			ymin=-1.55, ymax=1.55,
			xtick={-1,1},
			xticklabels={\hspace{-4ex}$-\varepsilon$, \hspace{4ex}$+\varepsilon$},
			ytick={-1,1},
			yticklabels={\raisebox{-5ex}{$-\delta$}, \raisebox{4ex}{$+\delta$}},
			ticklabel style={font=\scriptsize},
			every tick/.style={thin, black},
			clip=false,
		},
		/pgf/declare function={sigm(\x) = 0.8*tanh(3.5*\x);},
		/pgf/declare function={plot1(\x) = sigm(\x);},
		/pgf/declare function={plot2(\x) = sigm(\x) + 0.45*exp(-12*(\x-0.6)^2);}
	}
	\begin{subfigure}[t]{.48\linewidth}
		\centering
		\begin{tikzpicture}
			\begin{axis}[robustness plot]
				\draw[densely dashed, gray] (-1,-1) rectangle (1,1);
				\addplot[thick, blue, domain=-2:2, samples=200, smooth]
					{plot1(x)};
				\draw[green!60!black, very thick] (0.614,0.779) -- (0.614,1);
				\fill (0,0) circle[radius=1.2pt]
					node[above left, font=\scriptsize] {$(\hat{x}, f_\theta(\hat{x}))$};
			\end{axis}
		\end{tikzpicture}
		\label{fig:robust-jumpy}
	\end{subfigure}
	\hfill
	\begin{subfigure}[t]{.48\linewidth}
		\centering
		\begin{tikzpicture}
			\begin{axis}[robustness plot]
				\draw[densely dashed, gray] (-1,-1) rectangle (1,1);
				\addplot[thick, blue, domain=-2:2, samples=200, smooth]
					{plot2(x)};
				\draw[red, very thick] (0.614,1) -- (0.614,1.228);
				\fill (0,0) circle[radius=1.2pt]
					node[above left, font=\scriptsize] {$(\hat{x}, f_\theta(\hat{x}))$};
			\end{axis}
		\end{tikzpicture}
		\label{fig:robust-smooth}
	\end{subfigure}
	\caption{\small Two near-identical behaviours judged very differently by a binary notion of satisfaction.}
	\label{fig:robustness-comparison}
\end{figure}

The second aspect concerns the seemingly irreconcilable tension plaguing differentiable logic: the trade-off between the algebraic/proof-theoretic and analytic properties of the connectives.
In fact, to behave well during gradient-descent, the connectives of a differentiable logic should be entrywise strictly increasing differentiable operations, i.e.\ such that $\partial_i {\odot} > 0$ for both arguments $i=1,2$ of $\odot$.
This allows both operands to receive a meaningful gradient signal, avoiding situations in which one \emph{shadows} (in the words of\ \cite{varnaiRobustnessMetricsLearning2020}) the other since its truth already `suffices' to guarantee the truth of the whole expression.
The paradigmatic example is max ($\lor$): when $f(x) > g(x)$ then none of the gradient signal coming from $f(x) \lor g(x)$ will flow in the right branch, leading to fragile results (see~\cite{van2022analyzing,flinkow2025comparing}).

On the other hand, if we entrust substructural logic as a proof-theoretical foundation, we are led to organize our logical connectives in either \emph{additive} or \emph{multiplicative} ones (using Girard's terminology for linear logic \cite{Gir95}).
Roughly speaking, additives are the connectives users expect to write their specs in, since their rules reflect the most intuitive usage, compared to multiplicatives.
Then the established algebraic semantics based on residuated lattices\footnotemark~\cite{galatos2007residuated,cintulaHandbookMathematicalFuzzy2011a} forces the additives to be interpreted as the lattice operations of join ($\lor$) and meet ($\land$).
\footnotetext{Specifically, residuated structures on sublattices of $(\Reals, \leq)$. Recall that a residuated lattice is given by $(R, \land, \lor, \odot, 1, \multimap)$, where $(R, \land, \lor)$ is a lattice, $(R, \odot, 1)$ is a monoid with a neutral element $1$, and the following \emph{residuation property} holds: $a \odot b \leq c$ iff $a \leq b \multimap c$.}

Here's then the dilemma: the lattice operations on $\R$ do not enjoy the analytical properties mentioned above, and thus differentiable logics are forced to choose between empirical performance and logical elegance (the \emph{justification} arrow in \cref{fig:triangle}).
As one might expect, it is often the latter which takes the fall. For example, Varnai et al.\ prove a famous \emph{no-go result} that no logic can have connectives that are simultaneously idempotent, associative and shadow-lifting, and propose to sacrifice associativity~\cite{varnaiRobustnessMetricsLearning2020}.

And yet, on the reals there are obvious candidate `additive' and `multiplicative' operations, \emph{namely addition and multiplication themselves}!
Most importantly, these are readily interpretable operations which are already used in quantitative mathematics (probability theory, statistics, machine learning, etc.), in the very losses where the semantics of a spec is put in property-driven training.

When this \textbf{naturality principle} is not respected, even sacrificing logic for analysis does not guarantee success: \cite{flinkowQuantitativeLinearLogic2026} shows that, on a representative set of verification problems, none of the most popular differentiable logics can successfully translate minimisation of the logical loss into good verification performance, as measured by special-purpose verifiers such as Marabou \cite{katz2019marabou}.

Hence our second thesis is that
\begin{center}
	\itshape
	a logic for NNV should have semantics in \textbf{natural} and \textbf{analytically adequate} operations,
	\\\textbf{without sacrificing the additives of the proof calculus}.
\end{center}

\subsection{Our contribution: quantitative logic as a new paradigm}
In this paper we propose a new logical paradigm to solve all these challenges at once, \textbf{quantitative logic}.
Logics \emph{about quantity} (such as fuzzy logics, continuous model theory \cite{yaacovModelTheoryMetric2008}, or the logics of the Lawvere quantale from \cite{bacciPropositionalLogicsLawvere2023,bacciPolynomialLawvereLogic2024}), concern formulae with intended semantics in the reals, but their notion of logical validity remains a yes-or-no affair.
Rather, following Martin-L\"of \cite{martin-lofMeaningsLogicalConstants1996} and Lawvere \cite{lawvereMetricSpacesGeneralized1973}, we maintain that \textbf{a quantitative logic shall have quantitative \emph{proofs} and quantitative \emph{judgments}}, and thus also a quantitative notion of \emph{satisfaction}.
In particular, Lawvere argued convincingly that such \emph{generalized logics} \cite{lawvereMetricSpacesGeneralized1973} arise by \emph{enriching} the entailment relation in a suitably structured \textbf{alethic theory} (our terminology) $\V$---for us, a structure of extended reals.

\providecommand{\columncolor}[1]{}
\providecommand{\cellcolor}[1]{}
\begin{table}[tbp]
	\footnotesize
	\centering
	\begin{tabularx}{\textwidth}{|c|c|*{3}{>{\centering\arraybackslash}X|}}
		\cline{2-5}
		\multicolumn{1}{c|}{} & \textbf{polarity} & \multicolumn{2}{c|}{\textbf{additive}} & \textbf{multiplicative} \\
		\hline
		\multirow{2}{9ex}[-2ex]{\centering\parbox[c]{9ex}{\centering\textbf{duality}\\$a^* := 1/a$}} & \textbf{positive} &
		\multicolumn{2}{>{\columncolor{red!15}}c|}{\begin{tikzcd}[ampersand replacement=\&, column sep=small]
			\begin{array}{c}
				\begin{gathered}
					\bot := 0 \\[-.5ex]
					a \padd[p] b
				\end{gathered}
			\end{array}
			\&\&
			\begin{array}{c}
				\begin{gathered}
					\bot := 0 \\[-.5ex]
					a \padd[\infty] b := a \lor b
				\end{gathered}
			\end{array}
			\arrow["{p \to \infty\ }"', to=1-3, from=1-1]
		\end{tikzcd}} & \cellcolor{blue!15}$\begin{gathered} \One := 1 \\[-.5ex] a \tensor b\end{gathered}\quad { \infty \tensor 0 := 0}$ \\[2ex]
		\hhline{~|----|}
		      & \textbf{negative} & \multicolumn{2}{>{\columncolor{blue!15}}c|}{\begin{tikzcd}[ampersand replacement=\&, column sep=small]
				\begin{array}{c}
					\begin{gathered}
						\top := \infty \\[-.5ex]
						a \pcoadd[p] b
					\end{gathered}
				\end{array}
				\&\&
				\begin{array}{c}
					\begin{gathered}
						\top := \infty \\[-.5ex]
						a \pcoadd[\infty] b := a \land b
					\end{gathered}
				\end{array}
				\arrow["{p \to \infty\ }"', to=1-3, from=1-1]
			 \end{tikzcd}} & \cellcolor{red!15}$\begin{gathered} \One := 1 \\[-.5ex] a \cotensor b\end{gathered} \quad { \infty \cotensor 0 := \infty}$ \\[2ex]
		\hline
	\end{tabularx}
	\vspace*{1ex}
	\caption{\small The family of algebraic structures we collectively refer to as the \textbf{multiplicative reals} $\pMulReals$.
	Each $0 < p \leq \infty$ gives a choice of additives.
	Red and blue backgrounds denote, respectively, disjunctive and conjunctive operations, based on the behaviour of their restriction to $\{0,+\infty\}$.
	Positive and negative polarities relate to each other via De Morgan duality.}
	\label{table:posreals-operations}
\end{table}

Our first contribution is to propose a proof-theoretic counterpart to Lawvere's ideas in the quantitative realm.
We introduce \textbf{quantitative sequent calculi}, which are calculi `enriched' in one of the algebraic structures presented in \cref{table:posreals-operations} (and later in \cref{sec:pos-reals}).
In such an `enriched' calculus, we can choose any operation on $\V$ to join the premises of a rule and any constant on $\V$ to be the premise of an axiom.
The proof-theoretic properties of the calculus become then tightly related to the way this assignment is done.

Thus the choices of \cref{table:posreals-operations} and \cref{fig:pqll} are essentially dictated by cut-elimination (\cref{th:cut-elimination}), for which we need that the operations used for multiplicative rules distribute over the ones we choose for additive ones.
Indeed, having decided to use multiplication for multiplicative rules, our additives must necessarily \cite{aubrunMultiplicativePropertyCharacterizes2011,leinsterMultiplicativeCharacterizationPower2012} be \emph{$p$-sums}\footnotemark~ which are the operations defined as:
\footnotetext{Note these are Yager's `$s$-norms' from \cite{yagerGeneralClassFuzzy1980} but without truncation to $[0,1]$.}
\begin{equation*}
	0 < p < \infty, \qquad
	a \padd b = (a^p + b^p)^{1/p},
	\qquad
	a \pcoadd b = (a^{-p} + b^{-p})^{-1/p},
\end{equation*}
with the convention $a+\infty=\infty$. Notably, by taking the limit $p \to \infty$ one recovers $\pcoadd[\infty] = \land$ and $\padd[\infty] = \lor$.
We describe this structure in more detail in \cref{sec:pos-reals}.


\begin{figure}
	\centering
	\begin{tabular}{c}
			\begin{subfigure}[t]{\linewidth}
				\caption{Structural Rules}
				\vspace*{1ex}
				\label{subfig:struct-frag}
				\centering
				\scalebox{0.81}{%
				\begin{tabular}{ccccc}
					\begin{prooftree}
						\hypo{\red{1}}
						\infer1[\AxRule]{A \entails A}
					\end{prooftree}
					&
					\begin{prooftree}
						\hypo{\red{1}}
						\infer1[\EmptyRule]{\ \entails \ }
					\end{prooftree}
					&
					\begin{prooftree}
						\hypo{\red{0}}
						\infer1[\ExFalsoRule]{\Gamma \entails \Delta}
					\end{prooftree}
					&
					\begin{prooftree}
						\hypo{\Gamma \entails A, \Delta \redbbin{\tensor} \Gamma' , A \entails \Delta'}
						\infer1[\CutRule]{\Gamma, \Gamma' \entails \Delta, \Delta'}
					\end{prooftree}
					&
					\begin{prooftree}
						\hypo{\Gamma \entails \Delta \redbbin{\tensor} \Gamma' \entails \Delta'}
						\infer1[\MixRule]{\Gamma, \Gamma' \entails \Delta, \Delta'}
					\end{prooftree}
				\end{tabular}%
				}
			\end{subfigure}
		\\[2ex]
			\begin{subfigure}[t]{\linewidth}
				\caption{Rules for Multiplicatives}
				\vspace*{1ex}
				\label{subfig:mult-frag}
				\centering
				\scalebox{0.81}{%
				\begin{tabular}{c}
					\begin{tabular}{cccc}
						\begin{prooftree}
							\hypo{\red{1}}
							\infer1[\OneLeftRule]{\One \entails\ }
						\end{prooftree}
						&
						\begin{prooftree}
							\hypo{\Gamma \entails A, \Delta}
							\infer1[\DualLeftRule]{\Gamma, A^\dual \entails \Delta}
						\end{prooftree}
						&
						\begin{prooftree}
							\hypo{\Gamma, A \entails \Delta}
							\infer1[\DualRightRule]{\Gamma \entails A^\dual, \Delta}
						\end{prooftree}
						&
						\begin{prooftree}
							\hypo{\red{1}}
							\infer1[\OneRightRule]{\ \entails \One}
						\end{prooftree}
					\end{tabular}
					\\[4ex]
					\begin{tabular}{cccc}
						\begin{prooftree}
							\hypo{\Gamma, A \entails \Delta \redbbin{\tensor} \Gamma', B \entails \Delta'}
							\infer1[\ParLeftRule]{\Gamma, \Gamma', A \parr B \entails \Delta, \Delta'}
						\end{prooftree}
						&
						\begin{prooftree}
							\hypo{\Gamma, A,B \entails \Delta}
							\infer1[\TensorLeftRule]{\Gamma, A \tensor B \entails \Delta}
						\end{prooftree}
						&
						\begin{prooftree}
							\hypo{\Gamma \entails A,B, \Delta}
							\infer1[\ParRightRule]{\Gamma \entails A \parr B, \Delta}
						\end{prooftree}
						&
						\begin{prooftree}
							\hypo{\Gamma \entails A, \Delta \redbbin{\tensor} \Gamma',\entails B, \Delta'}
							\infer1[\TensorRightRule]{\Gamma,\Gamma' \entails A \tensor B, \Delta, \Delta'}
						\end{prooftree}
					\end{tabular}
				\end{tabular}%
				}
			\end{subfigure}
		\\[2ex]
			\begin{subfigure}[t]{\linewidth}
				\caption{Rules for Additives}
				\vspace*{1ex}
				\label{subfig:add-frag}
				\centering
				\scalebox{0.81}{%
				\begin{tabular}{c}
					\begin{tabular}{cc}
						\begin{prooftree}
							\hypo{\red{\infty}}
							\infer1[\BotLeftRule]{\Gamma, \bot \entails \Delta }
						\end{prooftree}
						&
						\begin{prooftree}
							\hypo{\red{\infty}}
							\infer1[\TopRightRule]{\Gamma \entails \top, \Delta}
						\end{prooftree}
					\end{tabular}
					\\[4ex]
					\setlength{\tabcolsep}{3pt}
					\begin{tabular}{cccc}
						\begin{prooftree}
							\hypo{\Gamma, A \entails \Delta \redbbin{\pcoadd[p]} \Gamma, B \entails \Delta}
							\infer1[\pOrLeftRule]{\Gamma, A \plor B \entails \Delta}
						\end{prooftree}
						&
						\begin{prooftree}
							\hypo{\Gamma, A \entails \Delta \redbbin{\padd[p]} \Gamma, B \entails \Delta}
							\infer1[\pAndLeftRule] {\Gamma, A \pland B \entails \Delta}
						\end{prooftree}
						&
						\begin{prooftree}
							\hypo{\Gamma \entails A , \Delta \redbbin{\padd[p]} \Gamma \entails B, \Delta}
							\infer1[\pOrRightRule]{\Gamma \entails A \plor B , \Delta }
						\end{prooftree}
						&
						\begin{prooftree}
							\hypo{\Gamma \entails A , \Delta \redbbin{\pcoadd[p]} \Gamma \entails B , \Delta}
							\infer1[\pAndRightRule]{\Gamma \entails A \pland B, \Delta}
						\end{prooftree}
					\end{tabular}
				\end{tabular}%
				}
			\end{subfigure}
		\end{tabular}
	\caption{Rules of $p$-hard Quantitative Linear Logic (\pQLL).}
	\label{fig:pqll}
\end{figure}

By decorating multiplicative rules with multiplication and additive rules with (harmonic) $p$-sums, we then obtain a family of quantitative analogues to (isomix) \MALL, which we call \pQLL (\emph{$p$-hard Quantitative Linear Logic})---see \cref{fig:pqll}.
The annotations on the rules are then used to define a quantitative notion of \emph{validity}, in the form of a function $\validity{-}$ that evaluates proof trees by traversing them and aggregating the constants in the leaves using the operations specified by the rules.
A sequent's \ref{eq:provability} is then the supremum of the validity of all its proofs.

With this notion in hand, we can solve all the aforementioned problems.
Here is an outline of the paper and of such a solution:

\begin{enumerate}
	\item After elaborating on \cref{table:posreals-operations} in \cref{sec:pos-reals}, in \cref{sec:QLL} we introduce quantitative calculi and \pQLL specifically.
	In \cref{sec:proof-theory}, we prove various proof-theoretic results: \pQLL admits cut-elimination (\cref{th:cut-elimination} proven in \cref{app:cut-elim}), with all its usual consequences; we show provability of a sequent in \MALL is equivalent to that sequent having positive provability in \pQLL (\cref{th:isomixmall-cons-and-adeq-over-qual-pqll}) or provability above $1$ in \pQLL[\infty] (\cref{th:isomixmall-cons-and-adeq-over-infty-qll}); and finally that provability in \pQLL converges to provability in \pQLL[\infty] as $p \to +\infty$.
	\item Then, in~\cref{sec:softale} we look at the model theory.
	We formulate a notion of algebraic model, \textbf{softales}, which generalizes residuated lattices and metric spaces by enriching the order structure in the multiplicative reals, and supports a `soft' notion of lattice connectives: in this way, for $p < \infty$, \textbf{the proof calculi \pQLL provide the first instance in the literature of additive rules with a differentiable semantics}, specifically in the canonical model $\V=\MulReals$.
	This also allows us to have a quantitative notion of satisfaction, for which we prove a kind of completeness result (\cref{th:grounded-completeness})---this addresses the last of the arrows in \cref{fig:triangle}.
	\item Finally, in \cref{sec:ML} we show how our theory works as a foundation for NNV in the restricted case of propositional specifications.
	There we formulate the \textbf{first quantitative adequacy theorem for a differentiable logic} (\cref{th:nnv-adequacy}), in the form $e^{-\sem{\varphi}} = \validity{\entails_\Theory \varphi}$ where $\Theory$ is a suitably defined \pQLL-theory pertaining to the specific verification problem at hand.

\end{enumerate}





\anonymous{
	\subsubsection*{Acknowledgments}
	We thank Jairo Marulanda--Giraldo, Enrico Marchioni, and David Jaz Myers for helpful discussions.
	We also thank Janis Bailitis for proof-reading an early draft of this paper, and for mechanizing most of the paper \cite{Bailitis2026}.
	We thank the anonymous reviewers of an earlier draft of this paper for their helpful corrections and suggestions.
}

\section{The positive reals}
\label{sec:pos-reals}
Consider the extended positive reals $\PosReals$ with their usual order $\leq$.
It supports a breadth of structure we now describe.
We start from \textbf{(conjunctive) multiplication}, i.e. the following extension of the usual arithmetic multiplication:
\begin{equation}\label{eq:conj-multiplication}
	\begin{tabular}{c|ccc}
		$a \tensor b$      & $0$ & $a \in (0,\infty)$ & $\infty$ \\
		\cline{1-4}
		$0$                & $0$ & $0$                & $0$      \\
		$b \in (0,\infty)$ & $0$ & $ab$               & $\infty$ \\
		$\infty$           & $0$ & $\infty$           & $\infty$
	\end{tabular}
\end{equation}
To see why we call it conjunctive, look at its action on $\{0,\infty\}$.
It is commutative, associative, and unital, and we call the resulting ordered monoid the \textbf{multiplicative reals} and  denote it by $\MulReals$.

Crucially, \textbf{inversion} yields a duality $(-)^* : \PosReals^\op \to \PosReals$
\begin{equation}
	\begin{tabular}{c|ccc}
		$a^*$ & $0$      & $a \in (0,\infty)$ & $\infty$ \\
		\cline{1-4}
		      & $\infty$ & $1/a$              & $0$
	\end{tabular}
\end{equation}
which is also uniquely characterized by the property
\begin{equation}\label{eq:duality}
	a \tensor b \leq c^* \iff a \leq (b \tensor c)^*.
\end{equation}
This makes $(\PosReals, \leq, 1, \tensor, (-)^*)$ into a {$*$-\nobreak{}autonomous} monoidal poset \cite{barrAutonomousCategories1979}, which is a kind of (totally ordered and) involutive commutative residuated lattice (see \cite[Chapter~IV]{cintulaHandbookMathematicalFuzzy2011}).

By De Morgan duality, we obtain a second `multiplicative' operation, $\cotensor$ (\textbf{disjunctive multiplication}), which differs from the first only in that $0 \cotensor \infty = \infty$.
In particular, $\tensor$ \emph{linearly distributes} over $\cotensor$ in the sense of \cite{cockettLinearlyDistributiveFunctors1999}:
\begin{equation}\label{eq:ldc}
	(a \cotensor b) \tensor c \leq a \cotensor (b \tensor c).
\end{equation}
Moreover, since $1^* = 1$---and thus $a \tensor b \leq a \cotensor b$---we say the operations are \emph{isomix}, in the terminology of \cite{cockettProofTheoryFull1997}.
The residuation is explicitly given by $a \mulimp  b = a^* \cotensor b$.
This means $a \leq b \mulimp c \iff a \tensor b \leq c$.
\begin{equation}
	\begin{tabular}{c|ccc}
		$a \mulimp b$      & $0$      & $a \in (0,\infty)$ & $\infty$ \\
		\cline{1-4}
		$0$                & $\infty$ & $0$                & $0$      \\
		$b \in (0,\infty)$ & $\infty$ & $b/a$              & $0$      \\
		$\infty$           & $\infty$ & $\infty$           & $\infty$
	\end{tabular}
\end{equation}
We sometimes abuse notation and write $b/a$ for $a \mulimp b$.

Now, on the poset $\PosReals$, consider \textbf{sum} $\add$, trivially extended to $\infty$, and its De Morgan dual, \textbf{harmonic sum}~\cite{grandisCategoriesNormsWeights2007}:
\begin{equation}
	a \coadd b := (a^* \add b^*)^*.
\end{equation}

Choosing $0 < p < \infty$, we can conjugate these operations by exponentiation\footnotemark~to obtain \textbf{$p$-sums} and \textbf{harmonic $p$-sums}:
\footnotetext{Extended by $0^p = 0$ and $\infty^p = \infty$ for all $p$. We do not exponentiate either by $0$ or $\infty$.}
\begin{equation*}
	a \padd[p] b := (a^p \add b^p)^{1/p},
	\quad
	a \pcoadd[p] b := (a^p \coadd b^p)^{1/p}.
\end{equation*}

These operations are all commutative and associative. The $p$-sums have unit $0$ and absorbing element $\infty$, while harmonic $p$-sums have unit $\infty$ and absorbing element $0$.
It is well-known that, as $p \to \infty$, these `soft' operations converge pointwise to `hard' ones, namely max ($\lor$) and min ($\land$) on $\PosReals$:

\begin{lemma}[Additive Collapse]
	\label{lemma:additive-collapse}
	\begin{equation}
		a \padd[p] b \conv[p \to \infty] a \lor b,
		\quad
		a \pcoadd[p] b \conv[p \to \infty] a \land b,
	\end{equation}
\end{lemma}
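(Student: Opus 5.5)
The plan is to prove the statement for $\padd[p]$ by a sandwich estimate, and then transfer it to $\pcoadd[p]$ through the duality $(-)^*$.

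\textbf{Degenerate cases.} If either argument is $\infty$, then $a \padd[p] b = \infty = a \lor b$ for every $p$, because $\infty$ is absorbing for $p$-sums. If $a = b = 0$, both sides are $0$. These cases hold trivially for all $p$.

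\textbf{Sandwich for $\padd[p]$.} Assume $a,b \in [0,\infty)$ and set $m = a \lor b > 0$. We have
\begin{equation*}
	m^p \leq a^p + b^p \leq 2m^p .
\end{equation*}
Taking $p$-th roots, which is monotone, gives
\begin{equation*}
	m \leq a \padd[p] b \leq 2^{1/p} m .
\end{equation*}
Since $2^{1/p} \to 1$ as $p \to \infty$, the first limit follows.

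\textbf{Transfer to $\pcoadd[p]$.} By definition,
\begin{equation*}
	a \pcoadd[p] b = \bigl((a^p)^* \add (b^p)^*\bigr)^{*\,1/p}.
\end{equation*}
Inversion commutes with exponentiation by $p$ and by $1/p$ on $\PosReals$, including the conventions $0^* = \infty$ and $\infty^* = 0$. Hence
\begin{equation*}
	a \pcoadd[p] b = (a^* \padd[p] b^*)^* .
\end{equation*}
The first limit, applied to $a^*$ and $b^*$, gives $a^* \padd[p] b^* \to a^* \lor b^*$.

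It remains to apply $(-)^*$ to this limit. Inversion is an order-reversing homeomorphism of $\PosReals$ with its order topology, so it is continuous and preserves limits. It also exchanges $\lor$ and $\land$, so $(a^* \lor b^*)^* = a \land b$, which gives the second limit.

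\textbf{Main obstacle.} The only delicate point is this bookkeeping at $0$ and $\infty$: checking that inversion commutes with the power maps under the conventions $0^p = 0$ and $\infty^p = \infty$, and that the limit is taken in the extended topology. Alternatively, one can avoid the duality and argue directly. For $a,b \in (0,\infty)$, use $a \land b \geq a \pcoadd[p] b \geq 2^{-1/p}(a \land b)$. When an argument is $0$, the result is $0$; when an argument is $\infty$, it acts as the unit.
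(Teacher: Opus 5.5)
Your proof is correct. The sandwich $m \le a \padd[p] b \le 2^{1/p} m$ (and its dual $2^{-1/p}(a\land b) \le a \pcoadd[p] b \le a \land b$), the bookkeeping at $0$ and $\infty$, and the transfer through the order-reversing homeomorphism $(-)^*$ together make up exactly the classical power-mean argument, which the paper does not spell out but simply cites from the literature (Mitrinovi\'c). Your explicit factor $2^{1/p}$ additionally gives a rate of convergence, and it is the same quantity the paper uses right afterwards to measure the loss of idempotency.
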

\begin{proof}
	Classical, see e.g. \cite{mitrinovicAnalyticInequalities1970}.
\end{proof}

Thus we can consider $p$ as a \emph{degree of hardness} (with its inverse $1/p$ being \emph{softness} proper\footnote{These are similar to `temperature' and `coldness' in statistical mechanics \cite{baezWhatEntropy2024}.}), which governs, ultimately, how much idempotency we have renounced, since $2^{-1/p} \leq a/(a \padd a)$ for $p < \infty$.
Despite this difference, $\padd$ and $\pcoadd$ behave remarkably like their hard counterparts:

\begin{lemma}\label{lemma:p-sums-misc}
	Let $0 < p \leq \infty$, then:
	\begin{enumerate}
		\item for $a \leq a'$, we have
		\(
		a \pcoadd b \leq a' \pcoadd  b,
		\)
		and
		\(
		a \padd b \leq a' \padd b;
		\)
		\item for all $a,b \in \PosReals$,
		\(
		a \pcoadd b \,\leq\, a \,\leq\, a \padd b;
		\)
		\item for $p \leq q$,
		\(
		a \pcoadd[p] b \leq a \pcoadd[q] b, \quad a \padd[q] b \leq a \padd[p] b;
		\)
		and thus $a \pcoadd[p] b \leq a \land b$ and $a \lor b \leq a \padd[p] b$;
	\end{enumerate}
\end{lemma}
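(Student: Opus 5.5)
A natural plan is to reduce everything to statements about the ordinary sum $\add$ on $\PosReals$, using two facts. First, for $0<p<\infty$ the map $x \mapsto x^p$ (with $0^p=0$, $\infty^p=\infty$) is an order-isomorphism of $\PosReals$, with inverse $x\mapsto x^{1/p}$. Second, inversion $(-)^*$ is an order-reversing involution. Since $a \padd b = (a^p \add b^p)^{1/p}$ and $a \pcoadd b = ((a^*)^p \add (b^*)^p)^{*\,1/p}$ (using $(x^*)^p = (x^p)^*$), each claim about $\pcoadd$ should follow from the corresponding claim about $\padd$ by De Morgan duality. The case $p=\infty$ (max and min) is immediate in (1) and (2), so throughout I would assume $p<\infty$ there.

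\textbf{Item (1).} Monotonicity of $\add$ on $\PosReals$ is clear, including the convention $a+\infty=\infty$. Conjugating by the order-isomorphism $x\mapsto x^p$ gives monotonicity of $\padd$. For $\pcoadd$, the input $a\le a'$ becomes $a'^*\le a^*$ under inversion, then monotonicity of $\padd$ applies, and inverting once more restores the original direction.

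\textbf{Item (2).} Since $a^p \le a^p + b^p$, taking $p$-th roots gives $a \le a\padd b$. Applying this to $a^*,b^*$ and inverting yields $a\pcoadd b\le a$. The edge cases where some argument is $0$ or $\infty$ are checked directly against the tables.

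\textbf{Item (3).} The main step is the monotonicity of the $\ell^p$ norm in $p$: for $p\le q$, $(a^q+b^q)^{1/q}\le (a^p+b^p)^{1/p}$. I would prove it as follows.
\begin{itemize}
\item If some argument is infinite, both sides are $\infty$. If both are $0$, both sides are $0$. Otherwise set $s=a\padd[p] b\in(0,\infty)$, so that $x=a/s$ and $y=b/s$ lie in $[0,1]$ and satisfy $x^p+y^p=1$.
\item Then $x^q\le x^p$ and $y^q\le y^p$, hence $x^q+y^q\le 1$. By homogeneity this gives $a\padd[q] b\le s$.
\item The case $q=\infty$ is just item (2) applied twice, giving $a\lor b\le a\padd[p] b$. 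Alternatively it follows from \cref{lemma:additive-collapse} by taking limits in the monotone family.
\end{itemize}
The harmonic statement $a\pcoadd[p] b\le a\pcoadd[q] b$ follows by duality: inversion is order-reversing, so applying the $\padd$ inequality to $a^*,b^*$ flips it. The consequences $a\pcoadd[p] b\le a\land b$ and $a\lor b\le a\padd[p] b$ are the $q=\infty$ instances.

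The only genuinely non-formal step is this $\ell^p$ monotonicity. The normalisation trick handles it, and the rest is bookkeeping with the extended-real conventions.
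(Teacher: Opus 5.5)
Your proof is correct. The identity $a \pcoadd[p] b = (a^* \padd[p] b^*)^*$ (this is how the paper builds $\pcoadd[p]$ from $\coadd$, using $(x^p)^* = (x^*)^p$) lets you derive every harmonic claim from the $p$-sum claim. Conjugating by the order-isomorphism $x \mapsto x^p$ then handles (1) and (2). In (3), the normalisation $x = a/s$, $y = b/s$ with $x^q \le x^p$ on $[0,1]$ settles the only non-formal step, and the case $q=\infty$ reduces to (2).

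The paper states this lemma without proof and treats it as classical, just as it defers \cref{lemma:additive-collapse} to Mitrinovi\'c. There is therefore no paper argument to compare against; your proposal supplies the standard verification the paper leaves implicit.
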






The most important fact about $p$-sums and multiplication, however, is the following:

\begin{lemma}\label{lemma:tensor-over-add-dist}
	Conjunctive multiplication distributes over $p$-sum, i.e.:
	\begin{equation}
		(a \tensor b) \padd[p] (a \tensor c) = a \tensor (b \padd[p] c).
	\end{equation}
	Dually, disjunctive multiplication distributes over harmonic $p$-sum:
	\begin{equation}
		(a \cotensor b) \pcoadd[p] (a \cotensor c) = a \cotensor (b \pcoadd[p] c).
	\end{equation}
\end{lemma}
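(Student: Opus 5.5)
We first prove the conjunctive identity for finite $p$, then obtain the disjunctive one by De Morgan duality, and handle $p=\infty$ separately. For $a,b,c \in (0,\infty)$ the identity is a one-line computation:
\[
(ab)^p + (ac)^p = a^p(b^p + c^p),
\]
and taking $p$-th roots gives $(ab)\padd[p](ac) = a\,(b\padd[p] c)$, because $x \mapsto x^{1/p}$ is multiplicative on $[0,\infty)$. The same computation covers $b=0$ or $c=0$ with $a$ finite, since $0$ is the unit of $\padd[p]$ and $a\tensor 0 = 0$.

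The only real work is the case analysis at the boundary values $0$ and $\infty$, where the convention $0\tensor\infty = 0$ must be respected. We split on $a$.
\begin{enumerate}
	\item If $a=0$, both sides are $0$: the left side is $0\padd[p]0=0$, and the right side is $0\tensor(\dots)=0$.
	\item If $a=\infty$, we split on how many of $b,c$ are nonzero.
	\begin{enumerate}
		\item If $b=c=0$, both sides are $0$.
		\item If exactly one of $b,c$ is $0$, say $c=0$, the left side is $\infty\padd[p]0=\infty$ and the right side is $\infty\tensor b=\infty$.
		\item If both are nonzero, both sides are $\infty$.
	\end{enumerate}
	\item If $a\in(0,\infty)$ and one of $b,c$ is $\infty$, both sides are $\infty$, since $\infty$ is absorbing for $\padd[p]$.
\end{enumerate}
This case check is the main place an error could occur, so we will tabulate it carefully.

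For $p=\infty$, the operation is $\lor$, and distributivity reads $ab\lor ac = a(b\lor c)$. This follows because $a\tensor -$ is monotone, which is immediate from the multiplication table. Alternatively, it follows from the finite case by \cref{lemma:additive-collapse}, though continuity at $0\cdot\infty$ fails, so we prefer the direct monotonicity argument.

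For the dual statement, we use that $(-)^*$ is an involutive order-reversing bijection. By definition, $a\pcoadd[p] b = ((a^*)\padd[p](b^*))^*$; this holds since $(x^*)^p = (x^p)^*$, and conjugating the harmonic sum by the $p$-th power commutes with inversion. Likewise, $a\cotensor b = (a^*\tensor b^*)^*$. Then
\[
(a\cotensor b)\pcoadd[p](a\cotensor c) = \bigl((a^*\tensor b^*)\padd[p](a^*\tensor c^*)\bigr)^* = \bigl(a^*\tensor(b^*\padd[p] c^*)\bigr)^* = a\cotensor(b\pcoadd[p] c),
\]
using the first part in the middle step. We will verify the identities $(x^p)^*=(x^*)^p$ and $a\cotensor b=(a^*\tensor b^*)^*$ at the extremes, as a small check of the conventions.
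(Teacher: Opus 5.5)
Your proof is correct and complete. The paper states this lemma without proof, treating it as an elementary fact about $\PosReals$, so there is no competing argument to compare against. Your verification is exactly what the paper leaves implicit: the interior identity $(ab)^p+(ac)^p=a^p(b^p+c^p)$, a boundary case split that respects $0\tensor\infty=0$, monotonicity of $a\tensor-$ for $p=\infty$, and transport of the first identity along the order-reversing involution $(-)^*$ via $(a\cotensor b)^*=a^*\tensor b^*$ and $(b\pcoadd[p]c)^*=b^*\padd[p]c^*$. It also follows the same interior-then-boundary pattern as the paper's proof of \cref{lemma:max-prod-ineq}.

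Three minor comments. First, in the $p=\infty$ case, a monotone map preserves binary $\lor$ because the order on $\PosReals$ is total; that deserves a word. Second, your caution about obtaining $p=\infty$ as a limit via \cref{lemma:additive-collapse} is unnecessary. Since $b\padd[p]c=0$ iff $b=c=0$, the only discontinuous factor, $\infty\tensor(b\padd[p]c)$, is constant in $p$, so the limit argument would also go through (though your direct argument is cleaner). Third, you can organise the finite-$p$ case more economically: $x\mapsto x^p$ commutes with $\tensor$ even at $0\tensor\infty$ and conjugates $\padd[p]$ to $+$. This reduces the whole case analysis to $p=1$, i.e.\ to distributivity of extended multiplication over extended addition.
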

\begin{corollary}\label{cor:tensor-over-add-dist}
	\begin{eqalign}
		c \mulimp (a \pcoadd[p] b) = (c \mulimp a) \pcoadd[p] (c \mulimp b),
		\qquad
		(a \padd[p] b) \mulimp c = (a \mulimp c) \pcoadd[p] (b \mulimp c).
	\end{eqalign}
\end{corollary}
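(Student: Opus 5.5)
We derive both identities of the corollary from \cref{lemma:tensor-over-add-dist} together with the De Morgan duality $(-)^*$, using the explicit description $x \mulimp y = x^* \cotensor y$ of the residuation. The only facts about inversion we need are that it is an involution, $(x^*)^* = x$, and that it exchanges the two multiplications and the two $p$-sums. Namely, $(x \tensor y)^* = x^* \cotensor y^*$ and $(x \padd[p] y)^* = x^* \pcoadd[p] y^*$, the latter holding by definition of $\pcoadd[p]$ as the De Morgan dual of $\padd[p]$, including at $p=\infty$ where it reduces to $(x \lor y)^* = x^* \land y^*$ since $(-)^*$ is order-reversing.

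For the first identity, rewrite
\[
c \mulimp (a \pcoadd[p] b) = c^* \cotensor (a \pcoadd[p] b).
\]
By the second (dual) half of \cref{lemma:tensor-over-add-dist}, with $c^*$ in place of $a$, this equals
\[
(c^* \cotensor a) \pcoadd[p] (c^* \cotensor b) = (c \mulimp a) \pcoadd[p] (c \mulimp b).
\]

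For the second identity, start from
\[
(a \padd[p] b) \mulimp c = (a \padd[p] b)^* \cotensor c = (a^* \pcoadd[p] b^*) \cotensor c.
\]
Commutativity of $\cotensor$ (inherited from $\tensor$ by duality) lets us put $c$ on the left, and the dual distributivity then gives
\[
(c \cotensor a^*) \pcoadd[p] (c \cotensor b^*) = (a \mulimp c) \pcoadd[p] (b \mulimp c).
\]

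The main point requiring care is the boundary cases involving $0$ and $\infty$, such as $0 \cotensor \infty = \infty$, and the case $p=\infty$. Both are handled by the lemma and the duality table rather than by raw arithmetic, so I would check there that $(x^{p} + y^{p})^{1/p}$ and its harmonic variant commute with inversion under the stated conventions $0^p=0$ and $\infty^p=\infty$. One way is to note that inversion commutes with $x \mapsto x^p$, so the claim reduces to $(x+y)^* = x^* \coadd y^*$, which is the definition of $\coadd$.
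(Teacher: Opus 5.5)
Your proposal is correct and matches the paper's intended argument. The paper gives no separate proof and presents the statement as an immediate corollary of \cref{lemma:tensor-over-add-dist}. It follows exactly as you do: write $x \mulimp y = x^* \cotensor y$, use $(a \padd[p] b)^* = a^* \pcoadd[p] b^*$, and apply the dual, disjunctive half of that lemma. Your reduction of the De Morgan identity to the definition of $\coadd$, via inversion commuting with $x \mapsto x^p$, correctly covers the boundary values $0$ and $\infty$ that the paper leaves implicit.
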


\begin{lemma}\label{lemma:cotensor-over-add-conormal-dist}
	We also have
	\begin{equation*}
		0 \leq a \cotensor 0
		\quad\text{and}\quad
		(a \cotensor b) \padd[p] (a \cotensor c) = a \cotensor (b \padd[p] c),
	\end{equation*}
	and dually
	\begin{equation*}
		a \tensor \infty \leq \infty
		\quad\text{and}\quad
		a \tensor (b \pcoadd[p] c) = (a \tensor b) \pcoadd[p] (a \tensor c).
	\end{equation*}
\end{lemma}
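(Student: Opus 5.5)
The plan is to treat the lemma as a finite case analysis on the extended values $0$ and $\infty$, reducing the generic case to \cref{lemma:tensor-over-add-dist}. The two inequalities are immediate: $0$ is the bottom of $\PosReals$, so $0 \leq a \cotensor 0$ holds trivially, and $\infty$ is the top, so $a \tensor \infty \leq \infty$. The real content is the two distributivity equations. The second pair is the De Morgan dual of the first: applying $(-)^*$, using $(x \cotensor y)^* = x^* \tensor y^*$ and $(x \padd[p] y)^* = x^* \pcoadd[p] y^*$, together with the fact that $(-)^*$ is an involutive bijection, turns the $\cotensor$/$\padd$ equation into the $\tensor$/$\pcoadd$ one. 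So I only need to prove
\[
(a \cotensor b) \padd[p] (a \cotensor c) = a \cotensor (b \padd[p] c).
\]

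First, the case analysis. When $a \in (0,\infty)$, both $\cotensor$ and $\tensor$ agree with ordinary multiplication by $a$, including the conventions $a \cdot 0 = 0$ and $a \cdot \infty = \infty$. The equation then follows from \cref{lemma:tensor-over-add-dist}, or directly from $(a^p b^p + a^p c^p)^{1/p} = a (b^p + c^p)^{1/p}$ for $p < \infty$ and from monotonicity of multiplication by $a$ for $p = \infty$. When $a = \infty$, we have $\infty \cotensor x = \infty$ for every $x$, including $x = 0$, so both sides equal $\infty$ because $\infty$ is absorbing for $\padd[p]$.

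The main point needing care is $a = 0$, since $\cotensor$ and $\tensor$ differ there. I would compute $0 \cotensor x$, which is $0$ if $x < \infty$ and $\infty$ if $x = \infty$, and split on $b$ and $c$:
\begin{itemize}
\item If $b, c < \infty$, then $b \padd[p] c < \infty$, and both sides are $0 \padd[p] 0 = 0$.
\item If $b = \infty$ or $c = \infty$, then $b \padd[p] c = \infty$, so the right side is $0 \cotensor \infty = \infty$. On the left, one summand is $\infty$, which is absorbing, so the left side is also $\infty$.
\end{itemize}
The fact used here is that $b \padd[p] c = \infty$ if and only if $b = \infty$ or $c = \infty$. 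This holds for $p < \infty$ by the explicit formula with the conventions $\infty^p = \infty$, and for $p = \infty$ because the operation is then $\lor$.

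Finally, I would transport the result through $(-)^*$ to obtain the dual statement. Alternatively, the $a = \infty$ versus $b, c > 0$ cases for $\tensor$ can be checked directly, using that $b \pcoadd[p] c = 0$ if and only if $b = 0$ or $c = 0$.
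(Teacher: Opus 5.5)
Your proof is correct and complete. The two inequalities are trivial because $0$ and $\infty$ are the bottom and top of $\PosReals$. Your case split on $a\in(0,\infty)$, $a=\infty$, $a=0$ (the last with the sub-split on whether $b$ or $c$ equals $\infty$) establishes the $\cotensor$/$\padd[p]$ law for every $0<p\leq\infty$. Transporting along the involution $(-)^*$, using $(x\cotensor y)^*=x^*\tensor y^*$ and $(x\padd[p] y)^*=x^*\pcoadd[p] y^*$ (both immediate from the definitions), then gives the $\tensor$/$\pcoadd[p]$ law.

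The paper states this lemma without proof, so there is no competing route to compare with. Your verification is the routine one the authors leave implicit. It follows the same pattern as the case analysis they do write out for \cref{lemma:max-prod-ineq}: settle the generic case by arithmetic on $(0,\infty)$, then dispatch the edge values $0$ and $\infty$ separately.
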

\begin{corollary}\label{cor:cotensor-over-add-conormal-dist}
	\begin{eqalign}
		c \mulimp (a \padd[p] b) = (c \mulimp a) \padd[p] (c \mulimp b),
		\qquad
		(a \pcoadd[p] b) \mulimp c = (a \mulimp c) \padd[p] (b \mulimp c).
	\end{eqalign}
\end{corollary}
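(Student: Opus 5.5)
The plan is to reduce both identities to \cref{lemma:cotensor-over-add-conormal-dist}, using the explicit form of the residual $x \mulimp y = x^* \cotensor y$ and the De Morgan duality between $\padd$ and $\pcoadd$.

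For the first identity, rewrite $c \mulimp (a \padd b)$ as $c^* \cotensor (a \padd b)$. Then apply the second equation of \cref{lemma:cotensor-over-add-conormal-dist} with its variable $a$ instantiated to $c^*$. This gives $(c^* \cotensor a) \padd (c^* \cotensor b)$, which is $(c \mulimp a) \padd (c \mulimp b)$ by the same rewriting.

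For the second identity, write $(a \pcoadd b) \mulimp c = (a \pcoadd b)^* \cotensor c$. The key intermediate fact is the De Morgan law $(a \pcoadd b)^* = a^* \padd b^*$. For $0<p<\infty$ it follows by unfolding the definitions. First, $a \pcoadd b = (a^p \coadd b^p)^{1/p} = \bigl((a^{p*} \add b^{p*})^*\bigr)^{1/p}$. Second, $(-)^*$ commutes with exponentiation by $p$ and by $1/p$, since $(x^q)^* = (x^*)^q$ including at $x \in \{0,\infty\}$ under the stated conventions. Third, $(-)^*$ is an involution. Together these yield $(a^{*p} \add b^{*p})^{1/p} = a^* \padd b^*$. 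For $p=\infty$ the law reduces to $(a\land b)^* = a^*\lor b^*$, which holds because $(-)^*$ is an order-reversing bijection.

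With the De Morgan law in hand, we have $(a\pcoadd b)\mulimp c = (a^*\padd b^*)\cotensor c$. By commutativity of $\cotensor$ and \cref{lemma:cotensor-over-add-conormal-dist} (now with its variable $a$ instantiated to $c$), this equals $(a^*\cotensor c)\padd(b^*\cotensor c) = (a\mulimp c)\padd(b\mulimp c)$.

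The only real obstacle is bookkeeping at the extended values $0$ and $\infty$. We must check that the exponentiation conventions, the rule $x+\infty=\infty$, and inversion interact consistently, so that the De Morgan law holds on all of $\PosReals$. I would verify this by a short case split on whether $a$ and $b$ lie in $\{0\}$, $(0,\infty)$ or $\{\infty\}$. The identities are immediate in each case; for example, $a=0$ gives $a\pcoadd b = 0$ and $a^*\padd b^* = \infty$.
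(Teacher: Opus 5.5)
Your proposal is correct and matches the paper's intended derivation. The paper states this corollary without proof as an immediate consequence of \cref{lemma:cotensor-over-add-conormal-dist}, obtained exactly as you do: unfold $x \mulimp y = x^* \cotensor y$, use the De Morgan duality $(a \pcoadd[p] b)^* = a^* \padd[p] b^*$, and distribute $\cotensor$ over $\padd[p]$. Your explicit check of that duality at the extended values $0$ and $\infty$ is a detail the paper leaves implicit.
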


As an anticipation of the logical meaning these operations will assume later, and in order to introduce the necessary terminology, we recommend revisiting \cref{table:posreals-operations}.


\subsection{Varying hardness}
All in all, for each $0 < p \leq \infty$, we get a structure we denote as $\pMulReals[p]$.
Later, in \cref{sec:softale}, we will introduce its abstract algebraic form---that of \emph{softale}.
For now, note that $\pMulReals[q] \iso \pMulReals[p]$ for every $0 < p,q < \infty$, where the isomorphism means that there is a bijection, given by the power map $(-)^{p/q}$, that commutes with duality, tensor, and $p$-sums.
For $q=\infty$ this map is not bijective and only \emph{lax} sum-preserving, i.e. $a^0 \padd b^0 \leq (a \lor b)^0$.

Thus there really are two isomorphism classes of the structure we just presented, the \textbf{soft} one, epitomized by $\pMulReals[1]$, and a \textbf{hard} one, $\pMulReals[\infty] \equiv \PosReals_{\lor,\tensor}$.
When it comes to practical verification, however, the choice of $p$ influences the denotations and therefore the loss values, motivating our study for any value of $p$.

Finally, while the algebraic structure just described is particularly natural on $\PosReals$, $\Reals^\op$ supports precisely the same structure, as witnessed by the so-called \emph{Napier's isomorphism} (\cite[§2.2]{capucciQuantifiersQuantitativeReasoning2024}) $-\log \adj 1/\exp$.
Through this isomorphism, the \emph{Lawvere quantale} $\mathbb{L}$ \cite{lawvereMetricSpacesGeneralized1973,bacciPropositionalLogicsLawvere2023} can be identified with the restriction at $[1,\infty]$ of the hard (multiplicative) reals.

\subsection{Qualitative truth}
When using $\pMulReals[p]$ as the object of truth values, we will often feel the need to compare it back to the usual truth values $\Prop$.
There are various possibilities, that we call \textbf{qualitative modes}, with different desirable properties.\footnote{Other choices are discussed in \cite[§2.3]{capucciQuantifiersQuantitativeReasoning2024}.}

In the \textbf{additive mode}, we consider $a=0$ `false' and everything else `true'.
Thus the map casting from `quantitative' to `qualitative' truth is
\begin{equation}\label{eq:add-mode-map}
	0<\_ : \PosReals \longto \Prop
\end{equation}
Notice this commutes with the conjunctive operations ($\tensor$ and $\pcoadd$), which are both mapped to $\land$.
Likewise $\padd$ is mapped to $\lor$, but note $\cotensor$ is not preserved.
Rather, we can only say that $a > 0$ \emph{and} $b>0$ imply $a \cotensor b > 0$ (so $\cotensor$ laxly commutes with $\land$), and that $a \cotensor b > 0$ implies $a > 0$ \emph{or} $b > 0$ ($\cotensor$ \emph{colaxly} commutes with $\lor$).
Negation is not preserved either way.
%
%

In the \textbf{multiplicative mode}, we pick $1$ as threshold of truth and consider
\begin{equation}\label{eq:mult-mode-map}
	1 \leq \_ : \PosReals \longto \Prop
\end{equation}
Conjunctive operations ($\tensor$ and $\pcoadd$) laxly commute with $\land$, while disjunctive ones ($\cotensor$ and $\padd$) colaxly commute with $\lor$.
When $p=\infty$, $\padd$ and $\pcoadd$ are preserved strongly.
We also have $a > 1$ implies $1 \leq a^*$.


\section{Quantitative Calculi}
\label{sec:QLL}

We now fix $0 < p \leq \infty$, and thus $\pMulReals[p]$ as the `alethic theory' (as introduced in \cref{sec:intro}) in which to `enrich' our logic.
We now define precisely how we make use of the structure of $\pMulReals[p]$ in a quantitative calculus.

\subsection{Definitions}\label{sec:defs}
Let $\Phi$ be a \textbf{language}, i.e. a set of \textbf{formulae}.
A \textbf{$p$-hard (quantitative) calculus} pertaining $\Phi$ deals with sequents $\Gamma \entails \Delta$ where $\Gamma, \Delta$ are cedents $\Gamma,\,\Delta\ldots$, i.e. finite multisets\footnote{Being multisets, we do not considered them ordered, and thus omit exchange rules from the calculi.} of formulae, including the empty one.

In a quantitative calculus, sequents are combined in \textbf{structures}, which are entities defined as follows
\begin{equation}\label{eq:hyperseq-grammar}
	\hyper{H} \grammareq
	\Gamma \entails \Delta \grammarsep \redbin{r} \grammarsep
	\hyper{H} \redbin{\tensor} \hyper{H} \grammarsep \hyper{H} \redbin{\cotensor} \hyper{H} \grammarsep \hyper{H} \redbin{\padd[p]} \hyper{H} \grammarsep \hyper{H} \redbin{\pcoadd[p]} \hyper{H}
\end{equation}
where $\Gamma \entails \Delta$ ranges in the possible sequents in the language $\Phi$ while $r$ ranges in $\PosReals$.

A structure of sequents involving no sequent is called \textbf{closed}, while a structure of sequents consisting of a single sequent is called \textbf{unary}.
We remark that closed structures can be \emph{evaluated} to real numbers, but \emph{per se} are syntactic objects.
We call the result of evaluating a closed structure $\hyper{K}$ its \textbf{alethic value} or \textbf{validity}, and write it as $\validity{\hyper{K}}$.


A \textbf{rule} has the form
\begin{prooftree}
	\hypo{\hyper{H}}
	\infer1{\hyper{H}'}
\end{prooftree}.
Its \textbf{premises} are the sequents appearing in $\hyper{H}$ and its \textbf{conclusions} those appearing in $\hyper{H}'$.
It is an \textbf{axiom} if the top structure of sequents is closed.
A set of rules $\rules{R}$ is a \textbf{calculus}.

\begin{definition}
	A \textbf{derivation} in the calculus $\rules{R}$ is defined inductively as follows (where $\red{\ast} \in \{\red{\tensor}, \red{\cotensor}, \red{\padd[p]}, \red{\pcoadd[p]}\}$):
	\begin{enumerate}
		\item for each $\hyper{H}$, the \textbf{identity derivation}
		      \begin{prooftree}
			      \hypo{\hyper{H}}
			      \infer1[$\id_{\hyper{H}}$]{\hyper{H}}
		      \end{prooftree}\ %
		      is a derivation;
		\item every instantiation of a rule in $\rules{R}$ is a derivation,
		\item if \begin{prooftree}
			      \hypo{\hyper{H}}
			      \ellipsis{$\delta_1$}{\hyper{I}}
		      \end{prooftree}
		      \quad and
		      \begin{prooftree}
			      \hypo{\hyper{I}}
			      \ellipsis{$\delta_2$}{\hyper{K}}
		      \end{prooftree}
		      \quad are derivations, so is \begin{prooftree}
			      \hypo{\hyper{H}}
			      \ellipsis{$\dfrac{\delta_1}{\delta_2}$}{\hyper{K}}
		      \end{prooftree}
		\item if \begin{prooftree}
			      \hypo{\hyper{H}_1}
			      \ellipsis{$\delta_1$}{\hyper{K}_1}
		      \end{prooftree}
		      \quad and
		      \begin{prooftree}
			      \hypo{\hyper{H}_2}
			      \ellipsis{$\delta_2$}{\hyper{K}_2}
		      \end{prooftree}
		      \quad are derivations, so is
		      \begin{prooftree}
			      \hypo{\hyper{H}_1 \redbin{\ast} \hyper{H}_2}
			      \ellipsis{$\delta_1 \redbin{\ast} \delta_2$}{\hyper{K}_1 \redbin{\ast} \hyper{K}_2}
		      \end{prooftree}
	\end{enumerate}
\end{definition}
We stress that a derivation can only mention finitely many rules.

\begin{notation}
	A derivation $\delta$ of $\hyper{H}$ is thus denoted as \begin{prooftree}
		\hypo{}
		\ellipsis{$\delta$}{\hyper{H}}
	\end{prooftree}
	\quad or \begin{prooftree}
		\hypo{}
		\ellipsis{$\delta'$}{}
		\infer1[$R$]{\hyper{H}}
	\end{prooftree}
	if we want to make the last rule explicit.
	It follows that a composite derivation in which the last rule of each operand is explicit can be written in either of two ways:
	\begin{equation}
		\begin{prooftree}
			\hypo{}
			\ellipsis{$\delta_1' \redbin{\ast} \delta_2'$}{}
			\infer1[$R_1 \redbin{\ast} R_2$]{\hyper{H}_1 \redbbin{\ast} \hyper{H}_2}
			\infer1[$R$]{\hyper{K}}
		\end{prooftree}
		\hspace*{10ex}
		\begin{prooftree}
			\hypo{}
			\ellipsis{$\delta_1'$}{}
			\infer1[$R_1$]{\hyper{H}_1}
			\hypo{\!\red{\ast}}
			\hypo{}
			\ellipsis{$\delta_2'$}{}
			\infer1[$R_2$]{\hyper{H}_2}
			\infer3[$R$]{\hyper{K}}
		\end{prooftree}
	\end{equation}
\end{notation}

Even though this definition accommodates derivations with DAG structure, when a calculus is \textbf{counary} (i.e. all of its rules have conclusions given by unary structures), derivations are in the shape of forests.
This will be the case for \pQLL.

A \textbf{closed} derivation is a derivation where the top structure of sequents is closed.
A \textbf{proof} is a closed derivation where the bottom structure of sequents is unary.
The \textbf{validity} $\validity{\delta}$ of a closed derivation $\delta$ is the alethic value of its top structure of sequents.

\begin{example}
	Here is the computation of the validity of a proof in \pQLL[1]:
	\begin{equation}
		\label{proof:soft-idempotency}
		\validity{\begin{prooftree}
			\hypo{\red{1}}
			\infer1[\AxRule]{A \entails A}
			\hypo{\red{\pcoadd[1]}}
			\hypo{\red{1}}
			\infer1[\AxRule]{A \entails A}
			\infer3[\pOrLeftRule]{A \plor A \entails A}
		\end{prooftree}} = \validity{\begin{prooftree}
			\hypo{\red{1}}
			\infer1[\AxRule]{A \entails A}
		\end{prooftree}} \pcoadd[1] \validity{\begin{prooftree}
			\hypo{\red{1}}
			\infer1[\AxRule]{A \entails A}
		\end{prooftree}} = 1 \pcoadd[1] 1 = 1/2.
	\end{equation}
\end{example}

The \textbf{provability} $\validity{\Gamma \entails \Delta}_{\rules{R}}$ of a sequent $\Gamma \entails \Delta$ in the calculus $\rules{R}$ is the supremum of the validity of its proofs:
\begin{equation}\label{eq:provability}
	\tag{provability}
	\validity{\Gamma \entails \Delta}_{\rules{R}}
	:= \bigvee\!\left\{
	\,\validity{\pi}\;\;\middle|\;\;
	\begin{prooftree}
		\hypo{}
		\ellipsis{$\pi$}{\Gamma \entails \Delta}
	\end{prooftree}
	\text{ closed}
	\right\}
\end{equation}
By replacing $\Gamma \entails \Delta$ with a structure of sequents, we extend the above definition%
\footnote{Observe there is a different way to extend the definition of provability to structures of sequents, by induction on their structure. For a counary calculus these two coincide, but not in general.}.
Observe that, as a consequence, any derivation from $\hyper{H}$ to $\hyper{H}'$ witnesses that $\validity{\hyper{H}} \leq \validity{\hyper{H'}}$.



Thus the mere existence of a proof does not qualify a sequent as provable, since a sequent might admit proofs of provability $0$ (\textbf{invalid} proofs); likewise, when a given sequent does not admit any proof, its provability is $\bigvee \varnothing = 0$.

Note, therefore, that to know the provability of a sequent it is not enough to find \emph{a} proof, but potentially \emph{all proofs} are necessary.
For \pQLL below, we prove cut-elimination and thus are able to \emph{decide} the provability of a sequent, i.e. find in finite time its provability, by showing only finitely many proofs need to be checked, see \cref{th:decidability}.
In any case, in applications one is usually concerned with bounding below the provability of a sequent, rather than establishing its precise value, and for that a single proof suffices.

Finally, we say that a rule \begin{prooftree}
	\hypo{\hyper{H}}
	\infer1[$R$]{\hyper{H}'}
\end{prooftree} is \textbf{admissible} in a calculus $\rules{R}$ when $\validity{\hyper{K}}_{\rules{R}} = \validity{\hyper{K}}_{\rules{R} \cup \{R\}}$ for any structure of sequents $\hyper{K}$.
Usually this means there is an effective procedure which, given a closed derivation of $\hyper{H}$ yields a closed derivation of $\hyper{H}'$ of equal or greater validity.
This is an important notion that arises variously in the rest of the paper, most prominently in \cref{th:cut-elimination}.


\subsubsection{Hypersequent calculi \& enriched calculi}
\label{sec:hyperseq}

Quantitative calculi are reminiscent of hypersequent calculi\footnote{Due to Pottinger \cite{pottinger1983uniform} and Avron \cite{avronConstructiveAnalysisRM1987}, see \cite{metcalfeProofTheoryFuzzy2009} for a modern account.}, in which sequents are also organized in a further level of structure.
Indeed, the fact that some fuzzy logics require more complex sequent structures has been observed many times~\cite{BaazCF03,CiabattoniLR21,CiabattoniG18}.
Let us show that our enriched setting coheres with (and generalises) that tradition.
Formally, a hypersequent is just a (possibly empty) multiset of sequents, written
\begin{equation}
	\hyper{H} = \Gamma_1 \entails \Delta_1 \mid \cdots \mid \Gamma_n \entails \Delta_n
\end{equation}
Rules in a hypersequent calculus have a conjunctive structure of hypersequents as premises, where a distinguished sequent is accompanied by a \emph{hypercontext} (or \emph{side sequents}), the $\hyper{G}$s below:
\begin{equation}
	\begin{prooftree}
		\hypo{\hyper{G} \mid \Gamma \entails \Delta}
		\hypo{\cdots}
		\hypo{\hyper{G}' \mid \Gamma' \entails \Delta'}
		\infer3[$R$]{\hyper{G}'' \mid \Gamma'' \entails \Delta''}
	\end{prooftree}
\end{equation}
We now comment on their link.

The above definitions can be easily generalized to alethic values other than $\pMulReals[p]$. It suffices to fix such an object $\V$ and some algebraic structure on it, and then a notion of \textbf{$\V$-enriched calculus} ensues.
Specifically, one can pick the usual truth values $\Prop$ and consider various structures on it. In sequent calculus the tacit choice of $(\Prop, \top, \land)$ is made: premises of rules are conjunctive structures of sequents and the only closed structure is $\top$.

Hypersequent calculi can be seen as $\Prop$-enriched calculi, but where $\Prop$ is now also considered with $\lor$.
Then hypersequents become truly disjunctive expressions of sequents:
\begin{equation}
	\hyper{H} = \Gamma_1 \entails \Delta_1 \redbin{\lor} \cdots \redbin{\lor} \Gamma_n \entails \Delta_n
\end{equation}
and rules are expressions in conjunctive normal form.
In fact, most hypersequent calculi have rules whose premises all share the same hypercontext $\hyper{G}$, in which case they reduce to ones of the form
\begin{equation}
	\begin{prooftree}
		\hypo{\hyper{G}}
		\hypo{\red{\lor}}
		\hypo{\red{(}\Gamma' \entails \Delta'}
		\hypo{\redbin{\land} \cdots \redbin{\land}}
		\hypo{\Gamma' \entails \Delta' \red{)}}
		\infer5[$R$]{\hyper{G} \redbbin{\lor} \Gamma'' \entails \Delta''}
	\end{prooftree}
\end{equation}
When presented in this way the hypercontext can be dropped since, by our definition of derivation in a quantitative calculus, those can always be added back.
This makes the translation from a sequent calculus completely trivial.

In \cref{sec:pqll-and-mall}, we will use these considerations to compare quantitative calculi such as \pQLL with its hypersequent relatives.

\subsection{The \pQLL Calculus}
\label{sec:pqll}
We are finally ready to introduce the quantitative calculi this work is about, parametric in a choice of $0 < p \leq \infty$.

\begin{notation}
	We deviate from traditional linear logic notation in denoting by $\top$ and $\bot$ the \emph{additive} units and with $\One$ the (coinciding) \emph{multiplicative} ones.
	We keep \emph{tensor} as $\tensor$ and \emph{par} as $\parr$ but use $\pland/\plor$ for the additive connectives.
\end{notation}

\begin{definition}\label{def:qll-formulae}
	Fix a set of \textbf{propositional variables} $\Atoms$.
	A \textbf{formula of \pQLL} with propositional variables in $\Atoms$ is defined inductively as follows:
	\begin{eqalign*}
		A \grammareq & a, a^\dual && \textit{((negated) propositional variable $a \in \Atoms$)} \\
		& \grammarsep \One \grammarsep A \tensor A \grammarsep A \parr A && \textit{(multiplicatives)}\\
		& \grammarsep  \bot \grammarsep \top  \grammarsep A \plor A \grammarsep A \pland A  && \textit{(additives)}
	\end{eqalign*}
\end{definition}

\noindent Negation $(-)^\dual$ is a syntactic transformation defined by
\begin{equation}\label{eq:qll-negation}
	\begin{gathered}
		(a)^\dual \equiv a^\dual,
		\quad
		(a^\dual)^\dual \equiv a,
		\quad
		\One^\dual \equiv \One,
		\quad
		\bot^\dual \equiv \top,
		\quad
		\top^\dual \equiv \bot,\\
		(A \tensor B)^\dual \equiv A^\dual \parr B^\dual,
		\quad
		(A \parr B)^\dual \equiv A^\dual \tensor B^\dual,
		\\
		(A \plor B)^\dual \equiv A^\dual \pland B^\dual,
		\quad
		(A \pland B)^\dual \equiv A^\dual \plor B^\dual,
	\end{gathered}
\end{equation}
We also employ the abbreviation $A \mulimp B \equiv A^\dual \parr B$.

\pQLL is the calculus presented in \cref{fig:pqll}.
For conceptual clarity it is laid out as a two-sided calculus, though there is a perfect duality which makes the left rules derivable from the right ones, and \emph{vice versa}.
This is due to the definition of negation we just gave and the rules \DualLeftRule/\DualRightRule.

\pQLL is conceived as an analogue of the usual (two-sided) sequent calculus of \MALL (see \cref{sec:pqll-and-mall}) `in the alethic metatheory of the $p$-reals'---giving a \textbf{soft} version when $p<\infty$, or a \textbf{hard} one at $p=\infty$.
Like in \MALL, the rules for connectives are split into multiplicative and additive fragments (\cref{subfig:add-frag}--\ref{subfig:mult-frag}).

The multiplicative fragment of \pQLL is essentially identical to that of \MALL, with the notable exception of being `isomix' \cite{cockettProofTheoryFull1997}, in the sense that the units for $\tensor$ (tensor) and $\parr$ (par) coincide---this is implicit by having left and right introduction rules for the same constant $\One$, i.e. \OneLeftRule and \OneRightRule.

The additive fragment is also essentially identical to the corresponding rules in hypersequent calculi for \ISOMIXMALL (see e.g. \cref{fig:isomixmall-h}), though being quantitative has its consequences.
Let's look at {\pOrRightRule}: to introduce a disjunction $A \plor B$ on the right of the sequent, either $A$ or $B$ do not suffice---rather we need to \emph{sum} a proof of $A$ and one of $B$ in the same context.
The result is a suggestive `superposition principle' for proofs involving additives, whereby we never make a choice of what we prove but we have obligations for all possible branches, which are then accumulated by sum (or dually, harmonic sum).

\begin{remark}
	Note that this `disjunctive branching' is optional in hypersequent \MALL (\cref{fig:isomixmall-h}), since the usual rules inherited from the sequent calculus (in which only one premise in \pAndLeftRule and \pOrRightRule is sufficient) are interadmissible with these ones---see \cite[Lemma~4.30]{metcalfeProofTheoryFuzzy2009}.
	This is also true for $p=\infty$ in this calculus (\cref{lemma:adm-of-MALL-additives}), but not in the soft versions.
\end{remark}

As for the structural rules (\cref{subfig:struct-frag}), these also merit some attention.
\CutRule and \AxRule are standard. We note that, while \AxRule is restricted to the introduction of unary cedents, by \MixRule (and \EmptyRule) we can easily extend it to arbitrary cedents. 
Note that \EmptyRule is derivable by cutting \OneLeftRule against \OneRightRule, though we keep it in the core calculus to avoid having to add it back in the cut-free calculus.

Finally, \ExFalsoRule (\emph{Ex Falso Quodlibet}) is also quite innovative, since usually adding `invalid' rules to the calculus is not something considered desirable. 
Despite allowing to construct a proof of any sequent, proofs thus constructed have validity $0$, thus we remain perfectly consistent (as proven later in \cref{th:consistency}).

Still, \ExFalsoRule is far from useless!
To see why, consider a proof like
\begin{equation}\label[proof]{proof:inj}
	\begin{prooftree}
		\hypo{\red{1}}
		\infer1[\AxRule]{A \entails A}
		\hypo{\red{\padd}}
		\hypo{\red{0}}
		\infer1[\ExFalsoRule]{A \entails B}
		\infer3[\pOrLeftRule]{A \entails A \plor B}
	\end{prooftree}
\end{equation}
Even though the branch for $A \entails B$ does not contribute to the validity of the proof, the whole proof---which has positive validity---could not be written down if we did not have a proof for $A \entails B$\footnote{Note that $A$ and $B$ are metavariables, thus the actual validity of $A \entails A \plor B$ might differ. Still, there are instances of sequents $A \entails B$ which are not provable otherwise, e.g. $a \entails b$ for $a,b$ propositional variables, and thus the argument still applies.}.
Predictably, this argument falls through at $p=\infty$ and there \ExFalsoRule \emph{is} eliminable, see \cref{lemma:efq-elimination}.

In fact, \ExFalsoRule might be considered a catch-all axiom for external structural rules, as can be evidenced by the proof of the following:

\begin{lemma}[Admissibility of the structural schema]
	\label[lemma]{lemma:struct-schema}
	Consider the algebraic structure defined in \eqref{eq:hyperseq-grammar}, and suppose $\red{t(}x_1\red{,} \ldots\red{,} x_n\red{)}$ and $\red{s(}x_1\red{,} \ldots\red{,} x_n\red{)}$ be terms\footnotemark~of it.
	\footnotetext{Thus, rather than `structures of sequents', $\red{t}$ and $\red{s}$ are constructed out of a countable supply of fresh variables--rather than sequents---and the same red connectives considered in \eqref{eq:hyperseq-grammar}.}
	If $\validity{t} \leq \validity{s}$ pointwise when evaluated\footnotemark in $\pMulReals[p]$, then the rule
	\footnotetext{Here we mean that, for all $a_1, \ldots, a_n \in \PosReals$, we can prove $\validity{t}(a_1, \ldots, a_n) \leq \validity{s}(a_1, \ldots, a_n)$ in the ambient theory where the structure of the alethic values live.}
	\begin{equation}
		\label{rule:structural-schema}
		\tag{\ensuremath{\mathfrak{str}}}
		\begin{prooftree}
			\hypo{\red{t(} \Gamma_1 \entails \Delta_1 \red{,}\, \ldots \red{,}\, \Gamma_n \entails \Delta_n\red{)}}
			\infer1{\red{s(} \Gamma_1 \entails \Delta_1 \red{,}\, \ldots \red{,}\, \Gamma_n \entails \Delta_n \red{)}}
		\end{prooftree}
	\end{equation}
	is admissible in any $p$-hard counary calculus admitting \ExFalsoRule.
\end{lemma}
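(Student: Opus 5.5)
We use the definition of admissibility: for every structure of sequents $\hyper{K}$, adding \eqref{rule:structural-schema} to the calculus $\rules{R}$ must not increase provability. The inequality $\validity{\hyper{K}}_{\rules{R}} \leq \validity{\hyper{K}}_{\rules{R}\cup\{\mathfrak{str}\}}$ is trivial. For the converse, it suffices to show a local fact. Given closed derivations $\pi$ of $\red{t(}\Gamma_1 \entails \Delta_1\red{,}\ldots\red{)}$ in $\rules{R}$, we want a closed derivation $\pi'$ of $\red{s(}\Gamma_1 \entails \Delta_1\red{,}\ldots\red{)}$ in $\rules{R}$ with $\validity{\pi'} \geq \validity{\pi}$. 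Once this holds, uses of \eqref{rule:structural-schema} in an arbitrary closed derivation can be removed one at a time, say topmost first, by induction on their number. Each removal changes only the subderivation above the removed instance. Monotonicity of all the red operations (\cref{lemma:p-sums-misc}(1), and monotonicity of $\tensor$, $\cotensor$) then guarantees that the overall validity does not decrease.

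To build $\pi'$ we use counarity. Since every rule of $\rules{R}$ has a unary conclusion, a closed derivation of $\red{t(}S_1\red{,}\ldots\red{,}S_n\red{)}$ decomposes, by induction on the definition of derivations, as $\red{t}$ applied to closed derivations $\pi_j$ of the individual occurrences of the $S_i$. Compositions and identities can be pushed inward because the bottom operators are the syntactic structure of $\red{t}$, and no rule can consume a compound structure into a unary one below it. A variable $x_i$ may occur several times in $t$, or not at all. For each $i$ we therefore choose among the occurrence derivations $\pi_j$ for $S_i$ one with maximal validity, $\rho_i$; if $x_i$ does not occur in $t$, we set $\rho_i := \ExFalsoRule$, which has validity $0$. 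Writing $a_i := \validity{\rho_i}$, monotonicity of $t$ gives $\validity{\pi} \leq \validity{t}(a_1,\ldots,a_n)$. The hypothesis then gives $\validity{t}(a_1,\ldots,a_n) \leq \validity{s}(a_1,\ldots,a_n)$.

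Finally, define $\pi' := \red{s(}\rho_1\red{,}\ldots\red{,}\rho_n\red{)}$, placing a copy of $\rho_i$ at every occurrence of $x_i$ in $s$. This is a legitimate closed derivation by clause (4) of the definition of derivation, and its validity is exactly $\validity{s}(a_1,\ldots,a_n) \geq \validity{\pi}$. \ExFalsoRule is needed precisely for variables that occur in $s$ but not in $t$: these have no available derivation, and \ExFalsoRule supplies one of validity $0$. This is consistent with the hypothesis, since the pointwise inequality holds in particular at $a_i = 0$.

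We expect the main obstacle to be the decomposition step, namely making rigorous that a closed derivation of a compound structure in a counary calculus splits along the term $t$. This needs an induction on derivations that commutes vertical composition with the red operations, i.e.\ uses the interchange of clauses (3) and (4), in the presence of identity derivations. A secondary subtlety is duplication: the maximal-validity choice of $\rho_i$ is what makes the argument work when $x_i$ occurs repeatedly with different proofs in $t$.
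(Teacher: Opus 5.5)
Your proposal is correct and follows the paper's own route. You use counarity to split the closed derivation of $\red{t(}\ldots\red{)}$ along $\red{t}$, take for each variable the occurrence proof of maximal validity (or \ExFalsoRule when the variable is absent from $\red{t}$), and rebuild $\red{s(}\pi_1\red{,}\ldots\red{,}\pi_n\red{)}$, with monotonicity of the red operations (the paper's remark $x \ast y \leq (x \lor y) \ast (x \lor y)$) plus the hypothesis giving the validity bound. Your reduction from arbitrary derivations to ones ending in the schema, and your flagging of the decomposition step, only add rigour that the paper leaves implicit.
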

\begin{proof}
	Consider a closed derivation $\delta$ ending in \StructuralSchemaRule.
	Since the calculus is counary (all rules end in a sequent), the derivation above \StructuralSchemaRule will be given by a combination of proofs of each of its premises, which in turn correspond to a variable occurrence in $t$.
	Thus, for each variable $x_i$ in the context of $t$, we define $\pi_i$ to be the proof of maximal validity amongst those that insist (in $\delta$) on an occurrence of $\Gamma_i \entails \Delta_i$ in the premise of \StructuralSchemaRule---note that $x \ast y \leq (x \lor y) \ast (x \lor y)$ for every connective $\ast \in \{\tensor, \cotensor, \padd[p], \pcoadd[p]\}$.
	If $x_i$ does not appear in $t$, then $\pi_i = \ExFalsoRule$.
	Excising the derivation above \StructuralSchemaRule and replacing it with the derivation $\red{s(}\pi_1\red{,} \ldots\red{,} \pi_n\red{)}$, we obtain a new derivation with validity $s(\validity{\pi_1}, \ldots, \validity{\pi_n}) \geq t(\validity{\pi_1}, \ldots, \validity{\pi_n}) \geq \validity{\delta}$, as desired.
\end{proof}

This result effectively grants us the use of any of the properties of $\tensor$, $\cotensor$, $\padd$, and $\pcoadd$ in $\pMulReals[p]$ at the level of structures of sequents, since any such `generalized' derivation is in fact rewritable to a proper derivation, as illustrated by the proof.

Its significance is in granting \pQLL (and other calculi to which it applies) the same ergonomics that traditional sequent calculi enjoy, in which we do not worry about discharging premises in a particular order.
For instance, we can show that the following `external structural rules' familiar from hypersequent calculi \cite[Chapter~4]{metcalfeProofTheoryFuzzy2009} are admissible:
\begin{equation}
\label{rule:external-structural}
	\begin{prooftree}
		\hypo{\Gamma \entails \Delta}
		\infer1[\ExtWeakRule]{\Gamma \entails \Delta \redbbin{\padd[p]} \Gamma' \entails \Delta'}
	\end{prooftree}
	\qquad
	\begin{prooftree}
		\hypo{\Gamma \entails \Delta \redbbin{\pcoadd[p]} \Gamma' \entails \Delta'}
		\infer1[\ExtCoweakRule]{\Gamma \entails \Delta}
	\end{prooftree}
	\qquad
	\begin{gathered}
		\begin{prooftree}
			\hypo{\Gamma \entails \Delta &\redbin{\bullet}\ \, \Gamma' \entails \Delta'}
			\infer1[$\ExtExchRule[\red{\bullet}]$]{\Gamma' \entails \Delta' &\redbin{\bullet}\ \, \Gamma \entails \Delta}
		\end{prooftree}
		\\
		\text{\footnotesize{for $\red{\bullet} \in \{\red{\tensor}, \red{\cotensor}, \red{\padd[p]}, \red{\pcoadd[p]}\}$}}
	\end{gathered}
\end{equation}

\begin{example}
	To illustrate the proof of \cref{lemma:struct-schema}, consider \ExtWeakRule, which corresponds to taking $\red{t}(x_1) = x_1$ and $\red{s}(x_1, x_2) = x_1 \redbin{\padd[p]} x_2$: indeed $a \leq a \padd[p] b$ for all $a,b \in \PosReals$.
	A closed derivation $\delta = \dfrac \pi \ExtWeakRule$ is depicted below left.
	Following the recipe in the proof, we set $\pi_1 := \pi$ (the unique premise insisting on $x_1$), and $\pi_2 := \ExFalsoRule$ since $x_2$ does not occur in $\red{t}$.
	Excising the application of \ExtWeakRule and replacing it with $\red{s}(\pi_1, \pi_2) = \pi_1 \redbin{\padd[p]} \pi_2$, we obtain the rewritten derivation below right whose validity is $\validity{\pi} \padd[p] 0 = \validity{\pi} = \validity{\delta}$, as desired.
	\begin{equation}
		\begin{prooftree}
			\hypo{}
			\ellipsis{$\pi$}{\Gamma \entails \Delta}
			\infer1[\ExtWeakRule]{\Gamma \entails \Delta \redbbin{\padd[p]} \Gamma' \entails \Delta'}
		\end{prooftree}
		\qquad \mapsto \qquad
		\begin{prooftree}
			\hypo{}
			\ellipsis{$\pi$}{\Gamma \entails \Delta}
			\hypo{\red{\padd[p]}}
			\hypo{\red{0}}
			\infer1[\ExFalsoRule]{\Gamma' \entails \Delta'}
			\infer3[$\id$]{\Gamma \entails \Delta \!\redbbin{\padd[p]}\! \Gamma' \entails \Delta'}
		\end{prooftree}
	\end{equation}
\end{example}


\section{Proof-theoretic and computational properties}
\label{sec:proof-theory}
In this section, we discuss the common results traditionally expected from a sequent system, and explain the key technical challenges that arise in our journey towards a combination of \emph{linearity} and \emph{quantitativity} of the truth values.

\subsection{Cut-elimination and its consequences}
Cut-elimination is a desirable property of sequent calculi as it affords the subformula property, the consistency, and even, in some cases, completeness~\cite{Buss98} of the logic in question.
Because of this, any new logic introduced in the last century is subject to the test, and so is ours:

\begin{theorem}[Cut-elimination]
	\label{th:cut-elimination}
	Let \CutFreepQLL be the calculus obtained by removing \CutRule from \pQLL.
	Then
	\(
	\validity{\Gamma \entails \Delta}_{\pQLL} = \validity{\Gamma \entails \Delta}_{\CutFreepQLL}.
	\)
	In other words, \CutFreepQLL admits \CutRule.
\end{theorem}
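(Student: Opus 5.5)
The plan is to prove the quantitative analogue of Gentzen's Hauptsatz. Concretely: given any closed derivation in \pQLL ending with a single \CutRule whose two premises are proved by cut-free proofs $\pi_1,\pi_2$, we produce a cut-free proof of the conclusion with validity at least $\validity{\pi_1}\tensor\validity{\pi_2}$. The inequality $\geq$ then gives the result. Since \CutFreepQLL is contained in \pQLL, we have $\validity{\Gamma\entails\Delta}_{\CutFreepQLL}\leq\validity{\Gamma\entails\Delta}_{\pQLL}$ for free. Conversely, every proof in \pQLL is a finite tree (the calculus is counary), so its cuts can be eliminated one at a time, topmost first. Provability is a supremum over proofs, so it is enough to show that each proof is dominated by a cut-free one.

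For a single topmost cut we argue by lexicographic induction on the pair (size of the cut formula $A$, sum of the heights of $\pi_1,\pi_2$). The case analysis follows \MALL, with each transformation checked against the annotations.

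\textbf{Axiom and degenerate cases.} \AxRule has weight $\red 1$, the unit of $\tensor$, so a cut against \AxRule simply disappears. If either premise ends in \ExFalsoRule, its validity is $0$, and $0\tensor x=0$; we replace the whole cut by \ExFalsoRule on the conclusion.

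\textbf{Commutative cases.} Here the cut formula is not principal in the last rule of $\pi_1$ (or $\pi_2$), and we permute the cut upward.
\begin{itemize}
\item For unary rules and \MixRule, \ParLeftRule, \TensorRightRule (joined by $\tensor$), the permuted derivation has validity equal to the original one by associativity and commutativity of $\tensor$.
\item For additive rules, e.g.\ $\pi_1$ ending in \pOrRightRule with weight $\padd$, the cut is duplicated into both branches. Validity changes from $(x\padd y)\tensor z$ to $(x\tensor z)\padd(y\tensor z)$, and these are equal by \cref{lemma:tensor-over-add-dist}.
\item For $\pcoadd$-weighted rules we need $(x\pcoadd y)\tensor z=(x\tensor z)\pcoadd(y\tensor z)$, which is \cref{lemma:cotensor-over-add-conormal-dist}.
\item Rules with weight $\red\infty$ (\TopRightRule, \BotLeftRule) remain instances of the same axiom after moving the cut. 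Since $\infty\tensor z\leq\infty$, again by \cref{lemma:cotensor-over-add-conormal-dist}, validity does not decrease.
\end{itemize}
This is precisely why the paper insists on distributivity: it makes the commutations validity-preserving.

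\textbf{Principal cases.} Here $A$ is introduced on both sides, and the cut is replaced by cuts on strictly smaller formulas.
\begin{itemize}
\item Multiplicative and $\One$ cases are as in \MALL; weights multiply, so equality holds. In particular, $\One$ against $\One$ becomes \EmptyRule with $1\tensor1=1$, and the $(-)^\dual$ rules just swap sides.
\item For the additive case, suppose $A\plor B$ is introduced on the right by \pOrRightRule with value $x_A\padd x_B$, and on the left by \pOrLeftRule with value $y_A\pcoadd y_B$. We cut $A$ against $A$ and $B$ against $B$, which gives derivations of value $x_A\tensor y_A$ and $x_B\tensor y_B$. We then need to combine these into one proof of $\Gamma,\Gamma'\entails\Delta,\Delta'$ of value at least $(x_A\padd x_B)\tensor(y_A\pcoadd y_B)$.
\end{itemize}

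I expect this additive principal case to be the main obstacle. The plan is to prove the pointwise inequality
\[(x_A\padd x_B)\tensor(y_A\pcoadd y_B)\leq(x_A\tensor y_A)\padd(x_B\tensor y_B)\]
in $\PosReals$. For $p=\infty$ this is a case check with max and min. For finite $p$, after raising to the $p$-th power, it is a Cauchy--Schwarz/Hölder-type inequality $(a+b)\,(c^{-1}+d^{-1})^{-1}\leq ac+bd$. This holds by the weighted harmonic--arithmetic mean inequality, with care needed at $0$ and $\infty$.

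Once the inequality is established, \cref{lemma:struct-schema} makes the rule from $(\ldots)\redbin{\padd}(\ldots)$ to a single sequent usable. This holds provided the right-hand side is still dominated, which it is by a contraction-like step, since $\padd$ of two proofs of the same sequent is bounded by the proof of larger validity only when $p=\infty$. So for finite $p$ I would instead aim to keep the $\padd$-structure by routing it through an additive rule further down. If this fails, the fallback is to strengthen the induction hypothesis to cuts on structures of sequents, so that such superposed conclusions are allowed as intermediate derivations and are flattened only at the end, again by distributivity.
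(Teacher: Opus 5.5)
Your architecture matches the paper's. You permute cuts upward, and the commutative additive cases preserve validity by distributivity of $\tensor$ over $\padd$ and $\pcoadd$. You treat \AxRule, \ExFalsoRule and the units directly, and you reduce principal cuts to smaller cut formulae. Eliminating topmost cuts by induction on (rank, height sum), instead of the paper's global lexicographic measure, is an inessential difference.

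The genuine gap is the additive principal case for finite $p$, the case you flagged yourself. The inequality you target,
\[(x_A\padd x_B)\tensor(y_A\pcoadd y_B)\leq(x_A\tensor y_A)\padd(x_B\tensor y_B),\]
is true but does not help. The two reducts (cut on $A$, cut on $B$) both prove the \emph{same} sequent $\Gamma,\Gamma'\entails\Delta,\Delta'$. Nothing in the calculus merges two proofs of one sequent along $\redbin{\padd}$ without loss of validity. Doing so would be external contraction for $\padd$, which is not admissible, since $x\padd x=2^{1/p}x>x$ for $0<x<\infty$. \Cref{lemma:struct-schema} does not rescue this, because it needs $t\leq s$ pointwise and $x\padd x\leq x$ fails. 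Both of your fallbacks fail for the same reason. The cut may be the last inference, so there is no additive rule ``further down'' to absorb the superposition. And carrying structures $\hyper{H}\redbin{\padd}\hyper{H}$ and flattening them at the end \emph{is} that contraction.

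The missing idea is that the product is dominated by the \emph{maximum} of the two reducts, not merely by their $p$-sum:
\[(x_A\padd x_B)\tensor(y_A\pcoadd y_B)\leq(x_A\tensor y_A)\lor(x_B\tensor y_B).\]
This is \cref{lemma:max-prod-ineq}. For values in $(0,\infty)$ it is immediate after raising to the $p$-th power. The left side becomes $\lambda(x_Ay_A)^p+(1-\lambda)(x_By_B)^p$ with $\lambda=y_B^p/(y_A^p+y_B^p)$, a convex combination of the two products, hence at most their maximum. The cases involving $0$ or $\infty$ are a direct check.

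With this you never need to combine the reducts. You keep only the single cut of larger validity, with ties broken by a fixed convention so the rewriting stays deterministic. That cut is on the strictly smaller formula $A$ or $B$, so your induction goes through.

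Your argument as written already works at $p=\infty$, where $\padd[\infty]=\lor$ and ``pick the better proof'' is exactly your contraction step. What fails is precisely the soft case $p<\infty$, which is the substance of the theorem.
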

\begin{proof}
	Since $\CutFreepQLL \subseteq \pQLL$ (as rule-sets), then clearly provability for the latter cannot be smaller than provability for the first (there are more proofs in \pQLL, which means the supremum of \eqref{eq:provability} runs over a larger set).
	Conversely, in \cref{app:cut-elim}, we describe a proof transformation from \pQLL to \CutFreepQLL that is non-decreasing in validity.
	Our proof method is very classical, and follows Melliès' \cite{mellies:categorical-sem-LL} and the Linear Logic Handbook's \cite{LLHandbook}%
	\footnote{We followed the latter in our reconstruction of the termination argument.}.
	The reader can read it in full in \cref{app:cut-elim}.
	Here we sketch the rough idea: given a proof
	\vspace*{-1.5ex}
	\begin{equation*}
		\begin{prooftree}
			\hypo{}
			\ellipsis{$\pi$}{\Gamma \entails A, \Delta \redbbin{\tensor} \Gamma', A \entails \Delta'}
			\infer1[\CutRule]{\Gamma, \Gamma' \entails \Delta, \Delta'}
		\end{prooftree}
	\end{equation*}
	we rewrite it to a proof $\pi'$ with the same conclusion $\Gamma, \Gamma' \entails \Delta, \Delta'$, such that either of the three conditions on applications of \CutRule in  $\pi'$ holds: their number decreases, or they appear at a lower depth in the proof, or the cut formulae pertain to a structurally smaller sequent.
	We also show that in this case $\validity{\pi} \leq \validity{\pi'}$.

	This defines a deterministic, terminating, validity non-decreasing, proof transformation, as required.
\end{proof}

This cut-elimination result offers support for our general agenda---to show that the proposed quantitative upgrade to the sequent calculus is (1) non-trivial and (2) natural from the point of view of preceding proof-theoretic tradition.

For (1), note that our cut-elimination result relies on the notion of proof validity understood in a quantitative way,
and directly reasons about quantitative validity of proofs that are subject to cut-elimination.
To the best of our knowledge, this is the first result of this kind, as all prior cut-elimination proofs for fuzzy logic (given in e.g.~\cite{metcalfeProofTheoryFuzzy2009}), while relying on quantitative formula validity,  still had a Boolean validity of sequents and proofs at the alethic level.

For (2), notice that, despite  a seemingly radical revision of the prior tradition, the structure of the inductive argument and the termination proof largely follows a classical scheme.

We observe, moreover, that in most cases that do not involve additives ($\plor$ and $\pland$) we can in fact establish that $\validity{\pi} = \validity{\pi'}$.
In the other cases, proof validity actually increases, showing that the quantitative upgrade did not break the key structural principles that any cut-elimination result relies on.
In fact, this result generalises cut elimination for MALL, where only one proof validity is possible: $\validity{\pi} = \validity{\pi'} = 1$.

As an illustration of such matters, consider the following exemplar case from our cut-elimination proof.
We are aiming to show that the proof that introduces $A \plor B$ as a cut formula:
\begin{equation*}
	\begin{prooftree}
		\hypo{}
		\ellipsis{$\pi_1$}{\Gamma \entails A, \Delta}
		\hypo{\red{\padd[p]}}
		\hypo{}
		\ellipsis{$\pi_2$}{\Gamma \entails B, \Delta}
		\infer3[\pOrRightRule]{\Gamma \entails A \plor B, \Delta}
		\hypo{\redbin{\tensor}}
		\hypo{}
		\ellipsis{$\pi_3$}{\Gamma', A \entails \Delta'}
		\hypo{\redbin{\pcoadd[p]}}
		\hypo{}
		\ellipsis{$\pi_4$}{\Gamma', B \entails \Delta'}
		\infer3[\pOrLeftRule]{\Gamma', A \plor B \entails \Delta'}
		\infer3[\CutRule]{\Gamma,\Gamma' \entails \Delta,\Delta'}
	\end{prooftree}
\end{equation*}
can be transformed into a proof with the same root sequent and an application of \CutRule on a structurally smaller formula.
We see that, from the above, we can obtain either of these two proofs:
\begin{equation*}
	\begin{prooftree}
		\hypo{}
		\ellipsis{$\pi_1$}{\Gamma \entails A, \Delta}
		\hypo{\redbin{\tensor}}
		\hypo{}
		\ellipsis{$\pi_3$}{\Gamma', A \entails \Delta'}
		\infer3[\CutRule]{\Gamma,\Gamma' \entails \Delta,\Delta'}
	\end{prooftree}
	\hspace*{5ex}
	\begin{prooftree}
		\hypo{}
		\ellipsis{$\pi_2$}{\Gamma \entails B, \Delta}
		\hypo{\redbin{\tensor}}
		\hypo{}
		\ellipsis{$\pi_4$}{\Gamma', B \entails \Delta'}
		\infer3[\CutRule]{\Gamma,\Gamma' \entails \Delta,\Delta'}
	\end{prooftree}
\end{equation*}
Following the standard inductive argument, we obtain that in either of these two cases, the \CutRule is applied on a structurally smaller formula, that is $A$ or $B$, instead of $A \plor B$. This is analogous to the additive disjunction case in Melliès in~\cite{mellies:categorical-sem-LL}.
However, we are not done---we also need to show that we can guarantee that the proof transformation increases proof validity!
To proceed, we need to prove that
\begin{equation*}
	(\validity{\pi_1} \padd[p] \validity{\pi_2}) \tensor (\validity{\pi_3} \pcoadd[p] \validity{\pi_4}) \leq (\validity{\pi_1} \tensor \validity{\pi_3}) \lor (\validity{\pi_2} \tensor \validity{\pi_4}),
\end{equation*}
which holds in the reals (\cref{lemma:max-prod-ineq}).
Thus we can, in a deterministic way, pick the option of maximal validity~(and, by convention, the one involving $A$ if both validities are equal)\footnote{
	Note that we can't `keep both proofs' by putting them together using \ExtContRule since $(a \padd[p] b) \tensor (c \pcoadd[p] d) \nleq 2^{-1/p} \tensor ((a \tensor c) \padd[p] (b \tensor d))$.
	Still, the proof transformation that makes \ExtContRule admissible (described in \cref{lemma:struct-schema}) acts exactly by choosing the maximal validity option, as we do here.
}.

Thus we can guarantee that our proof transformation will construct a proof that has equal or greater validity than the one we started from.
This completes our argument.

Cut-elimination still grants the usual consequences, namely, the subformula property and consistency.

\begin{theorem}[Consistency]
	\label{th:consistency}
	For every $0 < p \leq \infty$, \pQLL[p] is consistent, i.e. there are sequents whose provability is $0$.
\end{theorem}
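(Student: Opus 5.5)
By \cref{th:cut-elimination}, it suffices to exhibit a sequent whose provability in \CutFreepQLL is $0$. The natural candidate is the empty-context sequent $\ \entails a$ for a propositional variable $a \in \Atoms$, or alternatively $a \entails b$ for distinct variables $a,b$. I would take $\ \entails a$, since it leaves the least room for rules to apply.

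The key step is to enumerate the cut-free proofs of $\ \entails a$ by inspecting which rules of \cref{fig:pqll} (without \CutRule) can have a conclusion of this shape. The rules \AxRule, \EmptyRule, \OneLeftRule and \OneRightRule fix their conclusions, and none of these is $\ \entails a$. All logical rules introduce a compound formula or a negated variable, so they are excluded by the subformula reasoning. \BotLeftRule and \TopRightRule need $\bot$ or $\top$ in the sequent, so they are excluded as well.

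That leaves \MixRule and \ExFalsoRule. \MixRule splits the sequent into $\ \entails\ $ and $\ \entails a$; one of these premises is again $\ \entails a$. \ExFalsoRule contributes validity $0$.

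I would then argue by induction on the height of cut-free proofs that every proof $\pi$ of $\ \entails a$ has $\validity{\pi}=0$.

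\begin{itemize}
	\item The base case is \ExFalsoRule.
	\item In the \MixRule case, $\validity{\pi} = \validity{\pi_1}\tensor\validity{\pi_2}$, where $\pi_2$ proves $\ \entails a$ at smaller height. Then $\validity{\pi_2}=0$ by the induction hypothesis, and $x \tensor 0 = 0$ for every $x \in \PosReals$, including $x=\infty$, by the table \eqref{eq:conj-multiplication}.
\end{itemize}

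Hence the supremum in \eqref{eq:provability} is $0$.

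The main obstacle is to be careful about how derivations compose in a quantitative calculus. A proof of a unary structure could in principle pass through identity derivations or structural combinations. Because \pQLL is counary, however, proofs are trees whose last step is a rule instance with conclusion $\ \entails a$. I would make this explicit first, so that the case analysis on the last rule is legitimate.

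I would also note that the argument is uniform in $0<p\le\infty$, because the additive annotations never enter it.
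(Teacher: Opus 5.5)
Your route is the paper's: apply cut-elimination, then inspect the last rule of cut-free proofs of one fixed sequent. The paper uses $\ \entails \bot$ rather than $\ \entails a$; its witness has the small advantage of not needing $\Atoms \neq \varnothing$. However, your case analysis has a hole. Negation in \pQLL is a syntactic operation with $(a^\dual)^\dual \equiv a$. So \DualRightRule, instantiated at $A = a^\dual$ with empty side cedents, has conclusion $\ \entails a$ and premise $a^\dual \entails\ $. Your claim that every logical rule introduces a compound formula or a negated variable is therefore false for the duality rules. Your induction, which allows only \ExFalsoRule or \MixRule as the last rule, does not cover proofs of $\ \entails a$ ending in \DualRightRule.

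The repair is cheap. Prove simultaneously, by induction on height, that every cut-free proof of $\ \entails a$ or of $a^\dual \entails\ $ has validity $0$. The possible last rules for $a^\dual \entails\ $ are \ExFalsoRule, \MixRule with other premise $\ \entails\ $, and \DualLeftRule with premise $\ \entails a$. Symmetrically, the possible last rules for $\ \entails a$ are \ExFalsoRule, \MixRule, and \DualRightRule from $a^\dual \entails\ $. The duality rules are unary and preserve validity, and the \MixRule case goes exactly as you wrote it, using $x \tensor 0 = 0$ for all $x$. The same loop exists for the paper's witness: $\ \entails \bot$ arises by \DualRightRule from $\top \entails\ $, since $\top^\dual \equiv \bot$. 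The paper's one-line proof passes over this loop and over \MixRule, so your explicit treatment of \MixRule is an improvement once the duality case is added.
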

\begin{proof}
	In $\CutFreepQLL$, $\entails \bot$ can only possibly be proved via \ExFalsoRule, which has validity $0$.
\end{proof}

\begin{theorem}[Decidability]
	\label{th:decidability}
	For any \pQLL sequent ${\Gamma \entails \Delta}$, the supremum defining \ref{eq:provability} is in fact a maximum, i.e. there is a proof $\pi$ that realizes its provability.
\end{theorem}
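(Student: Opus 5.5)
By \cref{th:cut-elimination}, it suffices to show that in \CutFreepQLL the supremum in \eqref{eq:provability} is attained. The plan is to reduce the infinitely many cut-free proofs of ${\Gamma \entails \Delta}$ to a finite set that dominates all of them in validity. A supremum over a finite nonempty set is a maximum. If no proof exists, the claim is trivial, although \ExFalsoRule always supplies a proof of validity $0$, so the set is never empty.

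\textbf{Step 1 (subformula property and finiteness of sequents).} In \CutFreepQLL every rule other than \ExFalsoRule, \AxRule, \EmptyRule, \OneLeftRule, \OneRightRule, \BotLeftRule and \TopRightRule has premises whose formulae are subformulae of the conclusion's formulae. Every logical rule strictly decreases the total size of the sequent. \MixRule splits the multiset, so each premise is strictly smaller unless one side is empty. Hence only finitely many sequents can appear in a cut-free proof of ${\Gamma \entails \Delta}$.

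\textbf{Step 2 (pruning useless proofs).} Proofs can still be infinite in number, because of degenerate \MixRule applications with an empty premise, e.g.\ mixing with $\entails$ proved by \EmptyRule. I would show that any such application can be removed without decreasing validity. The validity of $\entails$ is at most $1$: in cut-free proofs it is obtained only from \EmptyRule, from \ExFalsoRule, or from \MixRule of two empty sequents, and by induction it is at most $1$. Multiplying by $\tensor$ with a value at most $1$ then cannot increase validity. This yields a normal form in which every rule application strictly decreases a well-founded size measure. Such proofs have bounded height and finite branching. Since there are finitely many rule instances per sequent (finitely many ways to split a multiset, choose a principal formula, etc.), there are finitely many normal proofs.

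\textbf{Step 3 (conclude).} The validity of any cut-free proof is bounded by that of some normal proof. The supremum over all proofs therefore equals the maximum over the finite set of normal proofs, realized by some $\pi$.

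\textbf{Main obstacle.} I expect Step 2 to be the delicate part: arguing rigorously that degenerate \MixRule applications, and repeated \ExFalsoRule-style leaves inside additive branches, never help, with the monotonicity facts of \cref{lemma:p-sums-misc} justifying each local replacement. I would first try a direct rewriting argument on proof trees, ordered by number of rule applications. Failing that, I would fall back on a compactness/induction argument on the sequent: define $v(\Gamma \entails \Delta)$ recursively as the maximum over the finitely many last-rule instances of the rule's operation applied to the $v$ of the premises. I would then prove by induction on size that $v$ equals provability and is attained, using monotonicity of $\tensor$, $\padd$ and $\pcoadd$.
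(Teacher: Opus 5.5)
Your overall plan matches the paper's: cut-elimination, then discarding \MixRule applications with an empty premise (using that $\validity{\ \entails\ } \leq 1$ cut-free), then a strict decrease of complexity giving finitely many proofs. However, Step~2's claim that every remaining rule application strictly decreases size is false, because you never deal with \DualLeftRule and \DualRightRule. Negation is computed syntactically, so $A^\dual$ has exactly the size of $A$, and these rules preserve the size of the sequent. Worse, they loop. From any proof $\pi$ of $\Gamma, A \entails \Delta$, apply \DualRightRule to get $\Gamma \entails A^\dual, \Delta$, then \DualLeftRule to get $\Gamma, A^{\dual\dual} \entails \Delta$. Since $A^{\dual\dual} \equiv A$, this is literally $\Gamma, A \entails \Delta$ again. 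Iterating gives infinitely many cut-free proofs, none containing a degenerate \MixRule, so your set of ``normal proofs'' is infinite and the finite-maximum argument does not go through. The same issue breaks your fallback: the recursion defining $v$ is not well-founded on size, since the dual rules (and degenerate \MixRule) relate sequents of equal size. Relatedly, the subformula claim in Step~1 holds only up to duality; for example $a \tensor b$ is not a subformula of $(a \tensor b)^\dual \equiv a^\dual \parr b^\dual$. Finiteness of the set of possible sequents survives once you close under duals, because the dual rules preserve size and no other cut-free rule increases it.

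The paper's proof contains exactly the step you are missing. Before appealing to the complexity decrease, it restricts to ``reduced'' derivations, in which explicit uses of \DualLeftRule/\DualRightRule have been eliminated without affecting validity. Within your own framework, a clean repair is to excise repeated sequents. Suppose a sequent occurs twice on the same branch of a cut-free proof. Since no cut-free rule has a premise larger than its conclusion, every rule on the segment between the two occurrences must be size-preserving. Each is therefore either a dual rule, which is validity-neutral, or a \MixRule with an empty premise, which contributes a factor $y \leq 1$ with $x \tensor y \leq x$. By monotonicity of $\tensor$, $\padd[p]$ and $\pcoadd[p]$, replacing the lower subproof by the upper one does not decrease validity. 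Restricting to proofs with no repeated sequent on any branch bounds their height by the finite number of possible sequents. With finite branching, this yields the finite dominating set you need, and it subsumes your degenerate-\MixRule pruning.
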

\begin{proof}
	The subformula property means there are only a finite number of distinct proofs; see \cref{app:decidability} for full proof.
\end{proof}



The main take-away message of this section is that, despite the addition of the quantitative proof validity, the calculus still retains the constructive nature of the sequent calculus. As further evidence of this fact, Bailitis et al.~\cite{Bailitis2026} have recently formalised \QLL in Rocq, following closely the formalisation by Laurent~\cite{Laurent26}, and adding the validity calculation via an ad-hoc recursive function, rather than as a part of the inductive relation that defines the calculus rules. As a result, the cut-elimination proof by Bailitis follows a structurally identical argument to that of Laurent, modulo an additional argument concerning the increase of validity in the cut-free proofs.

\subsection{\texorpdfstring{Hard and qualitative \pQLL}{Hard and qualitative pQLL}}
\label{sec:pqll-and-mall}
\pQLL is related to \MALL in two different ways.
The first is in a \emph{structural} way: `qualitatively', \pQLL is equivalent to (the isomix\footnotemark~variant of) the hypersequent calculus for MALL.
\footnotetext{i.e. \MALL + \MALLMixRule + \MALLEmptyRule (which in particular makes the units for tensor and par coincide), terminology from \cite{cockettProofTheoryFull1997}.}
The second way is `in the limit': as mentioned in \cref{sec:pos-reals}, at $p=\infty$ the additive operations in $\pMulReals[p]$ collapse to their `hard' variants, i.e. max and min.
The corresponding logic, \pQLL[\infty] (``\emph{hard} \QLL''), is thus even closer to \ISOMIXMALL, due to the fact that now the rules for additives are interadmissible with the classical ones.

To unburden proofs of inessential bureaucracy, we adopt the same language for formulae of both calculi.
We thus liberally confuse the sequents of both calculi\footnote{The reader may find it helpful to consult \cref{fig:isomixmall,fig:isomixmall-q,fig:isomixmall-h} in \cref{app:isomix-mall}, laying out a sequent calculus and a hypersequent calculus for \ISOMIXMALL (\hISOMIXMALL), as well as the qualitative version of \pQLL.}.

\subsubsection{Qualitative \pQLL}
By `qualitative \pQLL' we mean the calculus $\pQLL^{>0}$ obtained by replacing every $\red{\tensor}$, $\red{\pcoadd}$ appearing in its rules with $\red{\land}$ and every $\red{\padd}$ with $\red{\lor}$.
%

In \cref{app:isomix-mall}, the reader can find a detailed argument leading to the following theorem---the crucial observations have already been made in \cref{sec:hyperseq}.

\begin{theorem}\label{th:isomixmall-cons-and-adeq-over-qual-pqll}
	Let ${\Gamma \entails \Delta}$ be a sequent in qualitative \pQLL, then
	\begin{equation*}
		\text{$\Gamma \entails \Delta$ provable in \hISOMIXMALL}\ \Leftrightarrow\ \validity{\Gamma \entails \Delta}_{\pQLL[p]} > 0.
	\end{equation*}
\end{theorem}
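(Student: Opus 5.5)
The plan is to pass through the intermediate calculus $\pQLL^{>0}$, the qualitative \pQLL in which every $\red{\tensor}$ and $\red{\pcoadd}$ is read as $\red{\land}$ and every $\red{\padd}$ as $\red{\lor}$. The equivalence then splits into two steps.

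\textbf{First step.} Show that positive provability in \pQLL[p] coincides with provability in $\pQLL^{>0}$, i.e. validity $\top$ over $\Prop$. The key tool is the additive-mode map $0<\_ : \PosReals \to \Prop$ of \eqref{eq:add-mode-map}. As noted in \cref{sec:pos-reals}, it strictly commutes with $\tensor \mapsto \land$, $\pcoadd \mapsto \land$ and $\padd \mapsto \lor$. It also sends the axiom constants to the right place: $1,\infty \mapsto \top$ and $0 \mapsto \bot$.

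The operation $\cotensor$ is not preserved, but no rule of \pQLL uses it. Hence by induction on closed derivations, $0 < \validity{\pi}$ is exactly the Boolean validity of the same tree read in $\pQLL^{>0}$.

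By \cref{th:decidability} the supremum is attained. Therefore $\validity{\Gamma \entails \Delta}_{\pQLL[p]} > 0$ iff some proof has positive validity, iff some proof of $\Gamma \entails \Delta$ in $\pQLL^{>0}$ evaluates to $\top$.

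\textbf{Second step.} Relate $\pQLL^{>0}$ to \hISOMIXMALL via the observations of \cref{sec:hyperseq}: a $\Prop$-enriched calculus with $\land,\lor$ is a hypersequent calculus whose $\red{\lor}$-structures are hypersequents.

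For the direction ($\Leftarrow$), translate each \hISOMIXMALL rule instance into a $\pQLL^{>0}$ derivation. Hypercontexts $\hyper{G}$ are dropped and re-added by $\red{\lor}$-composition of derivations. External weakening and contraction are handled by \cref{lemma:struct-schema}, which applies because $\pQLL^{>0}$ is counary and admits \ExFalsoRule. Shared-context premises are recombined using the structural schema for $\land$-distributivity.

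For the direction ($\Rightarrow$), take a $\pQLL^{>0}$ proof of value $\top$ and prune it. At each $\red{\lor}$ node keep a branch of value $\top$, which discards the \ExFalsoRule leaves. At each $\red{\land}$ node both branches are $\top$. The resulting tree uses only rules with \MALL/\hISOMIXMALL counterparts:
\begin{itemize}
\item the single-premise forms of \pOrRightRule and \pAndLeftRule, which are the classical additive rules;
\item \MixRule and \EmptyRule for isomix;
\item \CutRule, which may also be eliminated first by \cref{th:cut-elimination}.
\end{itemize}
This gives an ordinary \ISOMIXMALL sequent proof, hence a \hISOMIXMALL one.

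\textbf{Main obstacle.} I expect it to be the ($\Leftarrow$) translation of genuinely hypersequent \hISOMIXMALL proofs, where a rule acts on one component while side sequents persist. I would first try to avoid this by proving that \hISOMIXMALL is conservative over the ordinary sequent calculus \ISOMIXMALL, using \cite[Lemma~4.30]{metcalfeProofTheoryFuzzy2009}-style interadmissibility together with external contraction to collapse to a single component. It then suffices to translate plain sequent proofs, where each rule maps directly to a \pQLL rule. One-premise additive rules are padded with an \ExFalsoRule branch under $\red{\lor}$, as in \cref{proof:inj}.
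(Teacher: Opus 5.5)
Your proof is correct. Half of it coincides with the paper's argument; the other half takes a different and more elementary route.

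\textbf{The direction that matches the paper.} Your translation of \hISOMIXMALL derivations into qualitative \pQLL is exactly what the paper does: hypercontexts are re-added as $\lor$-disjuncts, shared contexts are recombined by distributivity, and external weakening and contraction are made admissible by the structural schema with \ExFalsoRule. Note that \cref{lemma:struct-schema} is stated for $\pMulReals[p]$-enriched calculi. So here, as in the paper, you are really using its verbatim $\Prop$-enriched analogue.

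\textbf{Your Step~1.} You show that the additive-mode map $0<\_$ is a homomorphism for $\tensor$, $\pcoadd[p]$, $\padd[p]$ and the constants $0,1,\infty$ occurring in the rules. The paper leaves this tacit, so spelling it out is worthwhile. You do not need \cref{th:decidability} for it: a supremum of elements of $\PosReals$ is positive iff some element is.

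\textbf{The converse, where you genuinely differ.} The paper maps qualitative proofs back into \hISOMIXMALL by reading $\lor$-structures as hypersequents and \ExFalsoRule as external weakening from the empty hypersequent. You instead prune a true-valued proof, keeping a true branch at each $\lor$-node. This gives an \ExFalsoRule-free derivation in ordinary \ISOMIXMALL with one-premise additive rules. It re-embeds into \hISOMIXMALL by simulating each one-premise additive rule with \hMALLExtWeakRule.

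Your route avoids the hypersequent bookkeeping in this direction. It also yields slightly more: plain \ISOMIXMALL provability, without invoking the equivalence of \hISOMIXMALL and \ISOMIXMALL. This is in the same spirit as the paper's proof of \cref{th:isomixmall-cons-and-adeq-over-infty-qll}. The paper's route has the advantage of treating both directions uniformly through the hypersequent/$\Prop$-enriched dictionary of \cref{sec:hyperseq}.

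\textbf{Minor remarks.} Your labels $(\Leftarrow)$ and $(\Rightarrow)$ are swapped relative to the statement: translating \hISOMIXMALL proofs establishes $\Rightarrow$. Your fallback plan is also sound. Its cleanest form is the invariant, proved by induction on \hISOMIXMALL derivations, that every derivable \hISOMIXMALL hypersequent has an \ISOMIXMALL-derivable component.
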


\subsubsection{Hard \QLL}
We now turn to \pQLL[\infty]: this calculus is still quantitative, unlike $\pQLL[p]^{> 0}$, but validity of proofs will always be either $0$, $1$, or $\infty$, unless axioms of a different validity are introduced.
Indeed, the additives at $p=\infty$ are `hard' and thus expected to behave as usual.

\begin{lemma}\label{lemma:adm-of-MALL-additives}
	The rules below are admissible in \pQLL[\infty] and, \emph{vice versa}, adding these to $\Free{\pQLL[\infty]}{\pOrRightRule[\infty]}$ make \pOrRightRule[\infty] admissible again.
	\begin{eqalign*}
		\label{eq:MALL-additives}
		\begin{prooftree}
			\hypo{\Gamma \entails A, \Delta}
			\infer1[${\pOrRightRule[\infty]}_1$]{\Gamma \entails A \plor B, \Delta}
		\end{prooftree}
		\hspace*{8ex}
		\begin{prooftree}
			\hypo{\Gamma \entails B, \Delta}
			\infer1[${\pOrRightRule[\infty]}_2$]{\Gamma \entails A \plor B, \Delta}
		\end{prooftree}
	\end{eqalign*}
\end{lemma}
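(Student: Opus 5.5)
Both directions come down to rewriting a single rule instance while keeping validity from decreasing. In \pQLL[\infty] we have $\padd[\infty] = \lor$ and $\pcoadd[\infty] = \land$, so each direction reduces to an inequality about $\lor$ on $\PosReals$. We use \cref{lemma:struct-schema}, which applies because \pQLL[\infty] is counary and contains \ExFalsoRule.

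\emph{Admissibility of ${\pOrRightRule[\infty]}_1$ and ${\pOrRightRule[\infty]}_2$ in \pQLL[\infty].}
Take a closed derivation ending in ${\pOrRightRule[\infty]}_1$ whose premise is $\Gamma \entails A, \Delta$, proved by some $\pi$. Replace the last step by \ExtWeakRule followed by \pOrRightRule[\infty]:
\[
\Gamma \entails A, \Delta \;\leadsto\; \Gamma \entails A, \Delta \redbbin{\padd[\infty]} \Gamma \entails B, \Delta \;\leadsto\; \Gamma \entails A \plor B, \Delta .
\]
\ExtWeakRule is admissible by \cref{lemma:struct-schema}. Concretely, we fill the $B$-branch with \ExFalsoRule. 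The new proof has validity $\validity{\pi} \lor 0 = \validity{\pi}$, so validity is preserved. The case ${\pOrRightRule[\infty]}_2$ is symmetric, and additionally uses \ExtExchRule[\red{\padd[\infty]}]; alternatively, place the \ExFalsoRule leaf on the left. We then induct on the number of occurrences of these rules, eliminating them from the top down. This shows that provability in the extended calculus equals provability in \pQLL[\infty].

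\emph{Converse: ${\pOrRightRule[\infty]}_{1,2}$ make \pOrRightRule[\infty] admissible in $\Free{\pQLL[\infty]}{\pOrRightRule[\infty]}$.}
Take a closed derivation ending in \pOrRightRule[\infty] with premises proved by $\pi_1$ (of $\Gamma \entails A, \Delta$) and $\pi_2$ (of $\Gamma \entails B, \Delta$). Its validity is $\validity{\pi_1} \lor \validity{\pi_2}$.
\begin{itemize}
\item If $\validity{\pi_1} \geq \validity{\pi_2}$, replace the last step by ${\pOrRightRule[\infty]}_1$ applied to $\pi_1$ alone, discarding $\pi_2$.
\item Otherwise, apply ${\pOrRightRule[\infty]}_2$ to $\pi_2$.
\end{itemize}
Since $\lor$ is idempotent and selective on a total order, the result has validity exactly $\validity{\pi_1} \lor \validity{\pi_2}$, so validity is unchanged. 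Iterating over all occurrences of \pOrRightRule[\infty] removes them without lowering validity. It is convenient to proceed by induction from the leaves, so that $\pi_1$ and $\pi_2$ already contain no occurrences.

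\emph{Main obstacle.}
The only delicate point is the closed-derivation bookkeeping: derivations are built from structures, so an instance of \pOrRightRule[\infty] may sit inside a larger red structure. Because the calculus is counary, however, every closed derivation decomposes into a forest of proofs of single sequents. The local replacement therefore happens inside one tree. Validity is monotone in each argument of every red connective (\cref{lemma:p-sums-misc} together with monotonicity of $\tensor$), so a local non-decrease propagates to the whole derivation, and the suprema defining provability agree. We also check that discarding $\pi_2$ is harmless, since linearity constrains formulae, not proofs.
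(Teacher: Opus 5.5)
Your proposal is correct and follows the paper's proof. In one direction you derive the unary rules from \pOrRightRule[\infty] using \ExtWeakRule, filled in by \ExFalsoRule. In the other you observe that at $p=\infty$ the binary rule's validity $\validity{\pi_1} \lor \validity{\pi_2}$ is matched by applying the unary rule to the better branch; your remarks on counary decomposition and monotonicity only spell out bookkeeping the paper leaves implicit.
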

\begin{proof}
	Note that one can derive these two rules from $\pOrRightRule[p]$ even for finite $p$, by using \ExtWeakRule to introduce the neglected branch.
	Conversely, suppose given ${\pOrRightRule[\infty]}_1$ and ${\pOrRightRule[\infty]}_2$, now a derivation involving
	\begin{equation}
		\begin{prooftree}
			\hypo{}
			\ellipsis{$\pi_A$}{\Gamma \entails A, \Delta}
			\hypo{\red{\padd[\infty]}}
			\hypo{}
			\ellipsis{$\pi_B$}{\Gamma \entails B, \Delta}
			\infer3[\pOrRightRule[\infty]]{\Gamma \entails A \plor B, \Delta}
		\end{prooftree}
	\end{equation}
	will have validity at least $\validity{\pi_A} \lor \validity{\pi_B}$, which is the same lower bound we get from the unary rules.
	Note this reasoning only works for $p=\infty$.
\end{proof}

The above obviously dualize to the case of additive conjunction.

We can also show that:

\begin{lemma}[\ExFalsoRule-elimination]
	\label{lemma:efq-elimination}
	Let \EFQFreepQLL[\infty] be the calculus obtained by removing \ExFalsoRule from \pQLL[\infty].
	Then
	\begin{equation}
		\validity{\Gamma \entails \Delta}_{\pQLL[\infty]} = \validity{\Gamma \entails \Delta}_{\EFQFreepQLL[\infty]},
	\end{equation}
	in other words, \EFQFreepQLL[\infty] admits \ExFalsoRule.
\end{lemma}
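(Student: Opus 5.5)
The inequality $\validity{\Gamma \entails \Delta}_{\EFQFreepQLL[\infty]} \leq \validity{\Gamma \entails \Delta}_{\pQLL[\infty]}$ is immediate, since the EFQ-free calculus has fewer rules and so the supremum in \eqref{eq:provability} ranges over fewer proofs. For the converse I will show that every proof $\pi$ in \pQLL[\infty] with $\validity{\pi} > 0$ can be transformed into an \ExFalsoRule-free proof $\pi'$ of the same sequent with $\validity{\pi'} \geq \validity{\pi}$. Proofs of validity $0$ need no treatment, since $0$ is the bottom of $\PosReals$. By \cref{th:cut-elimination} I may also assume $\pi$ is cut-free, although the argument should not really need this.

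The transformation is by induction on $\pi$, with the hypothesis that any proof of positive validity can be made \ExFalsoRule-free without loss of validity. I case-split on the last rule. An \ExFalsoRule leaf has validity $0$, so it cannot be the whole of a positive proof. The axioms \AxRule, \EmptyRule, \OneLeftRule, \OneRightRule, \BotLeftRule and \TopRightRule contain no \ExFalsoRule. For the unary rules (\DualLeftRule, \DualRightRule, \TensorLeftRule, \ParRightRule) the validity is passed through unchanged, so I apply the hypothesis to the subproof. For the $\red{\tensor}$-joined rules (\CutRule, \MixRule, \ParLeftRule, \TensorRightRule), positivity of $a \tensor b$ forces both $a > 0$ and $b > 0$, because $0 \tensor x = 0$ for every $x$, including $x = \infty$. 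So both subproofs are positive and the hypothesis applies to each. The $\red{\pcoadd[\infty]} = \red{\land}$ rules (\pOrLeftRule, \pAndRightRule) behave the same way: $a \land b > 0$ forces both arguments to be positive.

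The key case is the $\red{\padd[\infty]} = \red{\lor}$ rules, \pOrRightRule and \pAndLeftRule, where one branch may have validity $0$ and contain \ExFalsoRule. There $\validity{\pi} = \validity{\pi_A} \lor \validity{\pi_B}$ equals one of the two, say $\validity{\pi_A} > 0$. Here I use \cref{lemma:adm-of-MALL-additives}: I replace the binary rule with the unary rule ${\pOrRightRule[\infty]}_1$, and dually for \pAndLeftRule. I apply the induction hypothesis to $\pi_A$ alone and discard $\pi_B$ entirely.

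The main obstacle is that \cref{lemma:adm-of-MALL-additives} derives the unary rule via \ExtWeakRule, whose admissibility (\cref{lemma:struct-schema}) itself relies on \ExFalsoRule. So I cannot simply cite it; I must close the discarded branch without \ExFalsoRule. My first attempt will be to duplicate the good branch, using $\pi_A$ also to prove $\Gamma \entails B, \Delta$, but the sequents differ, so that fails. The fallback is to strengthen the induction so that it produces \ExFalsoRule-free proofs in the calculus extended with the unary additive rules, and then show that these unary rules are admissible in \EFQFreepQLL[\infty]. I would prove this by a separate induction pushing the unary rule upward, in the style of invertibility or permutation arguments. If that proves delicate, I would instead interpret the statement as asserting admissibility modulo the MALL-style additive rules, which \cref{lemma:adm-of-MALL-additives} shows are interadmissible with the binary ones.
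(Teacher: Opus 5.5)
Your core argument is the paper's. The paper passes to the calculus $\pQLL[\infty]'$, in which \pOrRightRule[\infty] and \pAndLeftRule[\infty] are replaced by their unary versions. It invokes \cref{lemma:adm-of-MALL-additives} to get $\validity{\Gamma \entails \Delta}_{\pQLL[\infty]} = \validity{\Gamma \entails \Delta}_{\pQLL[\infty]'}$, and then observes that, with no $\red{\lor}$-joined premises left, \ExFalsoRule cannot occur in a proof of positive validity. Your case analysis spells out that last step. In $\pQLL[\infty]'$ you do not even need induction or cut-elimination: every rule joins premises by $\red{\tensor}$ or $\red{\land}$, both of which have $0$ as absorbing element, so any \ExFalsoRule leaf forces validity $0$.

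The obstacle you flag is real, and your first fallback cannot succeed. The unary rules are \emph{not} admissible in \EFQFreepQLL[\infty]; in fact the lemma as literally stated, with the binary rules kept, is false. Take distinct variables $a,b$ and the sequent $a \entails a \plor b$.

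In \pQLL[\infty], Derivation~\ref{proof:inj} (\AxRule $\red{\lor}$ \ExFalsoRule, closed by the binary \pOrRightRule[\infty]) has validity $1 \lor 0 = 1$.

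In \EFQFreepQLL[\infty] the sequent has no proof at all. To see this, interpret a sequent in $\PosReals$ as the Boolean statement $\Gamma^\monop \leq \Delta^\dumonop$, with $\monop = \tensor$, $\dumonop = \cotensor$, $a^\dual \mapsto v(a)^*$, and $\One,\bot,\top \mapsto 1,0,\infty$. Read $\plor$ as $\min$ and $\pland$ as $\max$.
\begin{itemize}
\item The binary additive rules demand \emph{both} premises and the order is total, so they preserve truth under this reading.
\item The structural and multiplicative rules preserve truth by $*$-autonomy, linear distributivity and isomix.
\item \BotLeftRule and \TopRightRule hold since $x \tensor 0 = 0$ and $x \cotensor \infty = \infty$.
\end{itemize}
Yet $v(a) = 1$, $v(b) = 0$ turns $a \entails a \plor b$ into $1 \leq 0$. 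So its provability in \EFQFreepQLL[\infty] is $\bigvee \varnothing = 0$, and no permutation argument can eliminate the unary rule.

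The paper's proof has exactly the gap you noticed. It establishes admissibility of \ExFalsoRule only in $\pQLL[\infty]'$, and silently identifies the \ExFalsoRule-free fragment of $\pQLL[\infty]'$ with \EFQFreepQLL[\infty]. Your second fallback is the correct formulation: state the lemma for the \ExFalsoRule-free calculus with unary additive rules. It is also the only form used later, in the proof of \cref{th:isomixmall-cons-and-adeq-over-infty-qll} (``we can assume $P$ is \ExFalsoRule-free and only uses the unary versions''). Commit to that statement rather than hedging.
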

\begin{proof}
	Let $\pQLL[\infty]'$ be the calculus obtained by replacing \pOrRightRule[\infty] and \pAndLeftRule[\infty] with their unary versions.
	As proven in \cref{lemma:adm-of-MALL-additives},
	$\validity{\Gamma \entails \Delta}_{\pQLL[\infty]} = \validity{\Gamma \entails \Delta}_{\pQLL[\infty]'}$,
	As discussed previously, removing \ExFalsoRule is only problematic in the presence of rules involving additive disjunctive structures as their premise.
	However, $\pQLL[\infty]'$ has no such rules.
\end{proof}

Thus \pQLL[\infty] is even more similar to \ISOMIXMALL than the soft calculi, and in fact we have a stronger conservativity theorem:


\begin{theorem}\label{th:isomixmall-cons-and-adeq-over-infty-qll}
	Let $\Gamma \entails \Delta$ be a sequent in \ISOMIXMALL or \pQLL[\infty], then
	\begin{equation}
		\text{$\Gamma \entails \Delta$ provable in \ISOMIXMALL}\ \Leftrightarrow\ \validity{\Gamma \entails \Delta}_{\pQLL[\infty]} \geq 1.
	\end{equation}
\end{theorem}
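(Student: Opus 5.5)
Both directions go through cut-free proofs, using \cref{th:cut-elimination} and \cref{th:decidability} to fix a proof attaining the provability, together with \cref{lemma:adm-of-MALL-additives} and \cref{lemma:efq-elimination} to put \pQLL[\infty] in a "MALL-shaped" form.

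Concretely, let $\pQLL[\infty]''$ be \pQLL[\infty] without \CutRule and \ExFalsoRule, with \pOrRightRule[\infty] and \pAndLeftRule[\infty] replaced by their unary versions. Combining the three results just cited, provability in \pQLL[\infty] equals provability in $\pQLL[\infty]''$, and it is attained by a single cut-free proof. I would check that the cut transformation never needs \ExFalsoRule when the additive rules are unary, or else apply \ExFalsoRule-elimination after cut-elimination.

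($\Rightarrow$) Take a cut-free \ISOMIXMALL proof of $\Gamma \entails \Delta$ and translate it rule by rule into $\pQLL[\infty]''$. Each MALL rule has a namesake. The unary $\lor\rightrule$ and $\land\leftrule$ rules go to their unary \pQLL[\infty] versions, and \MALLMixRule and \MALLEmptyRule go to \MixRule and \EmptyRule. The additive units go to \TopRightRule and \BotLeftRule, with leaf value $\infty$. I then show by induction on the tree that the validity is $\geq 1$. Leaves carry $1$ or $\infty$. Binary multiplicative rules combine with $\tensor$, and $a,b\geq 1$ implies $a\tensor b\geq 1$. The binary additive rules \pAndRightRule[\infty] and \pOrLeftRule[\infty] combine with $\pcoadd[\infty]=\land$, which also preserves $\geq 1$. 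Unary rules pass the value through unchanged. Hence provability is $\geq 1$.

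($\Leftarrow$) Take a proof $\pi$ in $\pQLL[\infty]''$ realizing $\validity{\Gamma\entails\Delta}\geq 1$. Its leaves carry only $1$ or $\infty$ (no \ExFalsoRule), and its inner operations are only $\tensor$ and $\land$. A closed structure built this way evaluates to $\min$ over its leaves, since $\tensor$ of values in $\{1,\infty\}$ is their max, which is still $\geq 1$. So every closed subderivation has validity $\geq 1$, and every rule instance has all premises "true". Erasing the red annotations turns $\pi$ literally into an \ISOMIXMALL derivation. The one thing to check is that \ISOMIXMALL proves the sequents introduced by \TopRightRule and \BotLeftRule with arbitrary contexts, which are its usual $\top$/$\bot$ axioms.

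The main obstacle I expect is the reduction to $\pQLL[\infty]''$: making sure that eliminating \ExFalsoRule, unarizing the additives, and eliminating cuts can be done simultaneously without any step decreasing validity. If this does not compose cleanly, I would fall back on \cref{th:isomixmall-cons-and-adeq-over-qual-pqll}. There I would argue that a maximal proof of validity $\geq 1$ has positive validity, so it yields an \hISOMIXMALL proof. I would then collapse the hypersequent proof to a sequent proof using the invariant that, at $p=\infty$, a branch with a side value $\geq 1$ can be selected.
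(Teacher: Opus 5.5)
Your proposal is correct and is essentially the paper's argument: a rule-by-rule translation in both directions, with \cref{lemma:adm-of-MALL-additives} mediating between unary and binary additives and \cref{lemma:efq-elimination} removing \ExFalsoRule. The detour through cut-free proofs is unnecessary, since both calculi have \CutRule (combined with $\tensor$ in the quantitative one), and the obstacle you anticipate dissolves anyway: once the additives are unarized by keeping the maximal branch, every combinator is $\tensor$ or $\land$, for which $0$ is absorbing, so any proof of validity $\geq 1$ is automatically \ExFalsoRule-free and takes a value in $\{1,\infty\}$ (this value is not a min over the leaves, as you wrote, since $\tensor$ acts as max on $\{1,\infty\}$).
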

\begin{proof}
	See \cref{app:isomix-mall}.
\end{proof}

We could summarize the section as follows: in the `additive' mode of truth (i.e. being strictly positive), \pQLL is equivalent to \ISOMIXMALL, while in the `multiplicative' mode (i.e. being equal or greater than $1$) only $\pMulReals[\infty]$ (the Lawvere quantale) recovers \ISOMIXMALL.

Joining \cref{th:isomixmall-cons-and-adeq-over-qual-pqll} and \cref{th:isomixmall-cons-and-adeq-over-infty-qll} with the well-known fact that \hISOMIXMALL is equivalent to \ISOMIXMALL, we have that if $\Gamma \entails \Delta$ is a sequent in either \ISOMIXMALL or \pQLL[p] or \pQLL[\infty], then
$
	\text{$\Gamma \entails \Delta$ provable in \ISOMIXMALL}
$
 iff %
$
	\validity{\Gamma \entails \Delta}_{\pQLL[\infty]} \geq 1
$
iff %
$
	\validity{\Gamma \entails \Delta}_{\pQLL[p]} > 0.
$

\subsubsection{Approximation}
A major motivation for the choice of $\padd/\pcoadd$ as semantics for soft additives is that the parameter $p$ can be used to tune the detachment from hard additives (\cref{lemma:additive-collapse}).
Here we reap the logical consequences of this fact.

Write as $(-)^p$ the obvious translation of proofs from \pQLL[1] (or really, any \pQLL[q] for $q < \infty$) to \pQLL[p].

\begin{theorem}[Approximation Theorem]
	\label{th:approximation}
	Let ${\Gamma \entails \Delta}$ be a \pQLL sequent, then
	\begin{equation}
		\validity{\Gamma \entails \Delta}_{\pQLL[p]} \conv[p \to \infty] \validity{\Gamma \entails \Delta}_{\pQLL[\infty]}.
	\end{equation}
\end{theorem}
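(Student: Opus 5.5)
We reduce the statement to a finite problem and then pass to the limit proof by proof. By \cref{th:cut-elimination} we may compute every provability involved in the cut-free calculus \CutFreepQLL. The set of cut-free proof \emph{shapes} of $\Gamma \entails \Delta$ does not depend on $p$: the rules of \cref{fig:pqll} are the same for every $p$, and only their red annotations change. By the subformula property used in \cref{th:decidability}, only finitely many shapes need to be considered, in the sense that every proof is dominated in validity by one of a finite list $\pi_1,\dots,\pi_N$. We would verify that this finite list can be chosen uniformly in $p$. The argument of \cref{app:decidability} should depend only on the rules, not on the operations, together with monotonicity (\cref{lemma:p-sums-misc}(1)), which holds for every $0<p\le\infty$. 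This gives
\[
\validity{\Gamma \entails \Delta}_{\pQLL[p]} = \max_{i} \validity{\pi_i^{p}},
\]
where $\pi_i^p$ is the proof $\pi_i$ read in \pQLL[p]. The same identity holds at $p=\infty$.

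Next we show that each $\validity{\pi_i^p}$ converges to $\validity{\pi_i^\infty}$. The validity of a proof is a fixed finite term built from constants in $\{0,1,\infty\}$ using $\tensor$, $\padd[p]$ and $\pcoadd[p]$. We argue by induction on the term. Constants are fixed, so they converge trivially. For $\tensor$ we use continuity of multiplication on $\PosReals$ away from the indeterminate case $0\tensor\infty$. For the additives we need a joint limit: if $a_p\to a$ and $b_p\to b$, then $a_p \padd[p] b_p \to a\lor b$.

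The joint limit is the main obstacle, because \cref{lemma:additive-collapse} only gives pointwise convergence. We would handle it with the sandwich $a_p\lor b_p \le a_p\padd[p] b_p \le 2^{1/p}(a_p\lor b_p)$, from \cref{lemma:p-sums-misc}(3) and an elementary estimate. Dually, $2^{-1/p}(a_p\land b_p)\le a_p\pcoadd[p]b_p\le a_p\land b_p$. Both bounds converge because $\lor$, $\land$ are continuous and $2^{\pm 1/p}\to1$.

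For $\tensor$, the danger is one factor tending to $0$ while the other tends to $\infty$. Here we exploit that all leaf constants lie in $\{0,1,\infty\}$. By induction, every subterm value $\validity{\pi^p}$ is either exactly $0$, exactly $\infty$ for all $p$, or lies in $[c^{-1},c]$ with $c$ a power of $2$ independent of $p$. The sandwich shows that $0$ and $\infty$ are produced only by absorption or unit laws, which are stable in $p$. So no indeterminate limit ever arises, and $\tensor$ converges by continuity on $(0,\infty)$ together with the exact absorbing laws. Conveniently, the same sandwich implies $\validity{\pi^p}\in[2^{-k/p},2^{k/p}]\cdot\validity{\pi^\infty}$ for $k$ the number of additive rules in $\pi$. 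This yields convergence directly, without a separate continuity argument.

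Finally, the maximum of finitely many convergent sequences converges to the maximum of the limits, which gives the theorem. If the uniform finiteness claim proves delicate, we would fall back on two-sided bounds. For the lower bound, fix an optimal proof $\pi$ of \pQLL[\infty] realising the maximum (\cref{th:decidability}); then $\validity{\pi^p}\ge 2^{-k/p}\validity{\pi}\to\validity{\pi}$. For the upper bound, show that every \pQLL[p] proof satisfies $\validity{\pi^p}\le 2^{K/p}\validity{\pi^\infty}\le 2^{K/p}\validity{\Gamma\entails\Delta}_{\pQLL[\infty]}$. Here $K$ bounds the number of additive rules along any branch of a cut-free proof, and it is bounded by the size of the sequent thanks to the subformula property.
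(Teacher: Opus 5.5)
Your proof is correct. Its skeleton matches the paper's: fix a proof shape $\pi$, show $\validity{\pi^p}\to\validity{\pi^\infty}$, and pass to provability. The paper stops at per-proof convergence, obtained from three ingredients:
continuity of the operations on the interior of $\PosReals$;
the pointwise \cref{lemma:additive-collapse};
a case split on whether the leaves involve $0$ or $\infty$, with the mixed cases declared trivial.
It leaves two steps implicit. First, pointwise collapse of $\padd[p]$ and $\pcoadd[p]$ must give convergence of a \emph{composite} of $p$-dependent operations. Second, the limit must commute with the supremum defining provability. You supply both. Cut-elimination at every $0<p\le\infty$ plus the subformula property give one finite, $p$-independent list of cut-free proofs. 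Its maximum is the provability for every $p$, including $p=\infty$, so the limit passes through it. Your sandwich $\validity{\pi^p}\in[2^{-k/p},2^{k/p}]\cdot\validity{\pi^\infty}$ replaces the continuity argument. Combined with the fact that whether a subterm is $0$, $\infty$ or finite positive does not depend on $p$, it also disposes of the mixed cases and gives an explicit rate. The paper's route is shorter; yours is the complete one.

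Two minor points. First, you can make uniformity in $p$ self-contained rather than leaning on the elimination of explicit dualities in \cref{app:decidability}. In a cut-free proof, a sequent can recur along a branch only through \DualLeftRule, \DualRightRule, or \MixRule with the empty sequent as a premise. A cut-free proof of the empty sequent has validity at most $1$, so excising such loops never lowers validity. Proofs with no repeated sequent on any branch therefore form a finite set defined without reference to $p$.

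Second, in your fallback the number of additive rules \emph{along a branch} is not a valid constant. Your own induction adds exponents at $\tensor$-nodes and takes at most $\max(k_1,k_2)+1$ at additive ones. For instance, the obvious proof of $\entails(\One\plor\One)\tensor(\One\plor\One)$ has one additive rule per branch but validity $2^{2/p}$, while the $p=\infty$ provability is $1$. Taking $K$ to be the number of occurrences of $\plor,\pland$ in $\Gamma\entails\Delta$ works instead: multiplicative splits partition these occurrences and each additive rule consumes one. Your main argument uses the total count $k$ for a fixed $\pi$, so it is unaffected.
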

\begin{proof}
	Let $\pi$ be a proof of $\Gamma \entails \Delta$ in \pQLL[1], and consider $\validity{\pi^p}$ as a function of $p$.
	Since it is an expression involving $\padd$, $\pcoadd$, and $\tensor$ only, it is continuous on the interior of $\PosReals$.
	Thus assume the top structure of sequents in $\pi$, call it $\hyper{H}$, does not involve either $0$ or $\infty$---by continuity and \cref{lemma:additive-collapse}, we get the desired claim.
	Now suppose instead $\hyper{H}$ involves just $0$ and $\infty$, then $\validity{\pi^p}$ is constant and hence the claim is met.
	The remaining cases trivially reduce to these two.
\end{proof}


\subsubsection{A Resource Interpretation}
\label{sec:interpretation}
Finally, we want to hint at an intuitive `interpretation' of \QLL.
Linear Logic has a \emph{resource interpretation} whereby a sequent ${\Gamma \entails A}$ asserts the convertibility of the resources in $\Gamma$ to the resource $A$.
Quantitative Linear Logic, especially \pQLL[1], refines this interpretation by quantifying convertibility: the provability of a sequent $\Gamma \entails A$ represents a \emph{conversion rate} from the resources in $\Gamma$ to $A$.
For example, $\validity{1\EUR \entails \text{Pizza}} = 5$ means a pizza costs 5\EUR. The \pQLL calculus rules can then be interpreted as a calculus of conversion rates, e.g. if $\validity{1\EUR \entails \text{Soda}} = 4$ then, by \pOrRightRule[1], $\validity{1\EUR \entails \text{Pizza} \plor \text{Soda}} = 9$.
Clearly, if the conversion rate from $A$ to $B$ is $> 0$, then we are right to say $A$ is convertible to $B$---this is \cref{th:isomixmall-cons-and-adeq-over-qual-pqll}.

\section{Softales}
\label{sec:softale}
Following \cite{lawvereMetricSpacesGeneralized1973}, the algebraic/categorical semantics of \pQLL takes place in a kind of enriched preorder we call \emph{softale}.
These are $\MulReals$-enriched preorders equipped with a monoidal structure, a suitable duality, and having all `soft $p$-joins'.
In order to explain the definition, we start by recalling some standard notions of enriched category theory \cite{kellyBasicConceptsEnriched1982}.

\begin{definition}\label{def:v-preorder}
	A \textbf{$\MulReals$-preorder} $\cat{A}=(A, \eleq)$ is a set $A$ equipped with a function ${\eleq} : A \times A \to \PosReals$ satisfying for all $a,b,c \in A$,
	\begin{gather}
		\label{eq:enriched-refl}
		\tag{reflexivity}
		1 \leq (a \eleq a),\\
		\label{eq:enriched-trans}
		\tag{transitivity}
		(a \eleq b) \otimes (b \eleq c) \leq (a \eleq c).
	\end{gather}
	A \textbf{map} or \textbf{functor of $\MulReals$-preorders} $F:\cat{A} \to \cat{B}$ is a function between their underlying sets such that for all $a,a'\in A$,
	\begin{equation}
		(a \eleq a') \leq (Fa \eleq Fa').
	\end{equation}
\end{definition}

We denote all enriched order relations as $\eleq$, since their arguments suffice to disambiguate.
%
%
Every $\MulReals$-preorder has two associated \textbf{underlying preorders}: the \textbf{multiplicative} one $\cat{A}_{\geq 1} = (A, \eleq_{\geq 1})$ and the \textbf{additive} one, defined as
\begin{equation}
	a \eleq_{\geq 1} b \iff 1 \leq (a \eleq b),
	\quad
	a \eleq_{> 0} b \iff 0 < (a \eleq b).
\end{equation}
The first is the one enriched category theory defaults to.
We say ${a,b \in \cat{A}}$ are \textbf{multiplicatively} or \textbf{additively isomorphic} and write $a \iso_{\geq 1} b$ or $a \iso_{> 0} b$ when they are equivalent in the respective underlying preorder.

\begin{definition}[Softale]
	\label{def:p-softale}
	A \textbf{($p$-)softale} is a $\MulReals$-enriched preorder $\cat S$ equipped with
	\begin{enumerate}
		\item a \textbf{multiplicative structure} consisting of a \textbf{unit} element $\One \in S$ and a \textbf{tensor} map $\monop : S \times S \to S$ satisfying
		\begin{gather}
			\label{eq:softale-interchange}
			(a \eleq b) \tensor (c \eleq d) \leq (a \monop c) \eleq (b \monop d),
			\\
			\label{eq:softale-unit-ass-comm}
			a \monop \One \iso_{\geq 1} a,
			\quad
			(a \monop b) \monop c \iso_{\geq 1} a \monop(b \monop c),
			\quad
			a \monop b \iso_{\geq 1} b \monop a,
		\end{gather}
		and a \textbf{duality} $(-)^\dual : S \to S$ satisfying
		\begin{gather}
			\label{eq:softale-involutivity}
			\One^\dual = \One, \quad a^{\dual\dual} = a,
			\\
			\label{eq:softale-duality}
			(a \eleq b) = (b^\dual \eleq a^\dual),
			\\
			\label{eq:softale-star-aut}
			(a \monop b \eleq c^\dual) = \left(a \eleq (b \monop c)^\dual\right);
		\end{gather}
		\item and an \textbf{additive structure}, consisting of a \textbf{bottom} element $\bot \in S$, and a choice of \textbf{$p$-join} $a \plor b$ for all $a,b \in \cat S$, satisfying
		\begin{align}
			\label{eq:softale-soft-bottom}
			(\bot \eleq a)     & = \infty,
			\\
			\label{eq:softale-p-joins-minimality}
			(a \eleq c) \pcoadd[p] (b \eleq c) & \leq (a \plor b \eleq c),
			\\
			\label{eq:softale-p-joins-upper-bound}
			(c \eleq a) \padd[p] (c \eleq b)   & \leq (c \eleq a \plor b),
		\end{align}
	\end{enumerate}

	A \textbf{(strict) map of softales} ${F:\cat{S} \to \cat{T}}$ is a map of $\MulReals$-preorders such that
	\begin{gather}
		\One = F\One, \quad {Fa \monop Fb = F(a \monop b)}, \quad
		F(a^\dual) = (Fa)^\dual,
		\quad
		\bot = F\bot, \quad Fa \plor Fb = F(a \plor b).
	\end{gather}
	Softales and their maps form a category named $\Softales$.
\end{definition}


%

\begin{remark}
	The multiplicative structure on a softale, by itself, could be classified as a \textbf{$*$-autonomous $\MulReals$-preorder}, and indeed induces a {$*$-autonomous} structure on the underlying ($\Set$-)preorder.
	Just as in the $\Set$-enriched case (see \cite{barrAutonomousCategories1979}), we have an induced monoidal product \textbf{par}:
	$a \dumonop b := (a^\dual \monop b^\dual)^\dual$.
	Moreover, from \eqref{eq:softale-star-aut} and then \eqref{eq:softale-involutivity} we can prove
	\begin{equation}\label{eq:softale-mix}
		1 \leq (a \monop (b \dumonop c) \eleq (a \monop b) \dumonop c),
		\quad
		1 \leq (a \monop b \eleq a \dumonop b)
	\end{equation}
	where the last is the so-called `mix' property \cite{cockettProofTheoryFull1997}.
	This makes it \emph{closed} too, in the sense that $b \to c := b^\dual \monop c$ is a residuation in the enriched sense, that is
	\begin{equation}
		(a \monop b \eleq c) = (a \eleq b \to c).
	\end{equation}
\end{remark}

\begin{remark}
	We let our definition of $p$-joins be inspired by the calculus, but also by the following observation on the `categorical' definition of joins in the hard case.
	Usually, the definition by representability
	\begin{equation}\label{eq:hard-joins-representability}
		\tag{representability}
		(a \eleq c) \land (b \eleq c) \iff (a \lor b) \eleq c
	\end{equation}
	is equivalent to the one given as a universal property, which can be split in two statements:
	\begin{align}\label{eq:hard-joins-minimality}
		\tag{minimality}
		(a \eleq c) \land (b \eleq c) & \implies (a \lor b) \eleq c,
		\\
		\label{eq:hard-joins-upper-bound}
		\tag{majorization}
		(c \eleq a) \lor (c \eleq b)  & \implies c \eleq (a \lor b).
	\end{align}
	This equivalence, however, breaks in the quantitative setting, since to prove \ref{eq:hard-joins-upper-bound} from \ref{eq:hard-joins-representability} one crucially needs $\land$ to distribute over $\lor$, and that does not apply to their soft versions\footnote{Specifically, the needed direction is $(x \land y) \lor (x \land z) \leq x \land (y \lor z)$, whereas in $\pMulReals[p]$ is the other one which holds.}.
	Conversely, given \ref{eq:hard-joins-upper-bound} and \ref{eq:hard-joins-minimality} one relies on the idempotence of $\land$ to conclude \ref{eq:hard-joins-representability}.
\end{remark}

Just as we defined $\dumonop$, one can define \textbf{$p$-meets} for a softale by setting $a \pland b \equiv (a^\dual \plor b^\dual)^\dual$.
Clearly they satisfy dual properties to \eqref{eq:softale-p-joins-minimality}--\eqref{eq:softale-p-joins-upper-bound}.

When \eqref{eq:softale-p-joins-minimality} and \eqref{eq:softale-p-joins-upper-bound} hold in the converse direction too we say the additives are \textbf{representable}, and note they make the multiplicative underlying preorder a lattice in the traditional sense.



\begin{example}[Multiplicative reals]
	The structure on $\PosReals$ we sketched in \cref{sec:pos-reals} is that of a softale, where ${\eleq} := {\mulimp}$, $\monop := \tensor$, $(-)^\dual := (-)^*$, $\dumonop \equiv \cotensor$, $\plor := \padd$, and $\pland \equiv \pcoadd$.
	This softale has representable additives, as seen in \cref{cor:tensor-over-add-dist,cor:cotensor-over-add-conormal-dist}.
\end{example}


\begin{example}[Varying hardness]
	Observe that 
	every $\pMulReals[q]$ can become a $1$-softale $\pMulReals[q]^{1/q}$ whose preorder is $a \mulimp^{1/q} b := (a \mulimp b)^{1/q}$, including when $q=\infty$. 
	However, we have $\pMulReals[1] \iso \pMulReals[q]^{1/q}$ as $1$-softales only when $q < \infty$.
\end{example}


\begin{example}[Involutive commutative residuated lattices]
	Every commutative residuated lattice $(R, \preceq, 1, \odot, \lor)$ such that $a^\dual \equiv a \to 1$ is involutive and satisfies \eqref{eq:softale-duality}, is a $p$-softale, by setting $a \eleq b = 0$ if $a \preceq b$ and $\infty$ otherwise.
\end{example}

Conversely, we have an analogous of the results we have seen in \cref{sec:pqll-and-mall}:

\begin{proposition}\label{prop:underlying-softale-is-res-lat}
	The additive underlying preorder of a $p$-softale and the multiplicative underlying preorder of an $\infty$-softale are always involutive commutative residuated lattices.\footnote{Such terms are defined in \cite[Chapter~IV]{cintulaHandbookMathematicalFuzzy2011}.}
\end{proposition}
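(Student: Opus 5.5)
We must show that the additive underlying preorder $\cat S_{>0}$ (for any $0<p\le\infty$), and the multiplicative underlying preorder $\cat S_{\geq 1}$ (for $p=\infty$), carry the structure of an involutive commutative residuated lattice: a preorder with binary joins and meets and a bottom, together with a commutative monoid $(\monop,\One)$, a residuation $b \to c = b^\dual \dumonop c$, and an involution $(-)^\dual$ with $a^\dual \iso b \to \One$-style compatibility. The plan is to treat the two structures separately, always through the maps $0<\_$ and $1\leq\_$ of \cref{sec:pos-reals}.

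\textbf{Preorder.} Transport reflexivity and transitivity through the qualitative mode maps. For $\eleq_{>0}$: $1\le (a\eleq a)$ gives $0<(a\eleq a)$, and if $0<x$ and $0<y$ then $0<x\tensor y$, so transitivity follows from \eqref{eq:enriched-trans}. For $\eleq_{\geq1}$ the same holds since $1\le x,y$ implies $1\le x\tensor y$.

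\textbf{Multiplicative part.} Using \eqref{eq:softale-interchange} with the reflexivity of one argument, $\monop$ is monotone in each underlying preorder, again because $\tensor$ preserves positivity and the property of being $\geq 1$. The isomorphisms \eqref{eq:softale-unit-ass-comm} hold with $\iso_{\geq1}$, hence also with $\iso_{>0}$, so $(S,\monop,\One)$ is a commutative monoid up to isomorphism. The residuation $(a\monop b\eleq c)=(a\eleq b\to c)$ from the remark is an \emph{equality} of reals, so it descends to both underlying preorders. Involutivity \eqref{eq:softale-involutivity} and the order reversal \eqref{eq:softale-duality} are likewise equalities, so they descend directly. This step is routine.

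\textbf{Lattice part.} Here $p$ enters, and this is the main obstacle. We must show that $a\plor b$ is a least upper bound. For upper bound, take $c=a$ in \eqref{eq:softale-p-joins-upper-bound}: $(a\eleq a)\padd(a\eleq b)\le(a\eleq a\plor b)$. By \cref{lemma:p-sums-misc}(1,2) the left side is $\geq (a\eleq a)\geq 1$, so $a\eleq a\plor b$ holds in both preorders, for any $p$. For leastness, suppose $a\eleq c$ and $b\eleq c$; then \eqref{eq:softale-p-joins-minimality} bounds $(a\plor b\eleq c)$ below by $(a\eleq c)\pcoadd(b\eleq c)$.
1. In the additive mode, $x\pcoadd y>0$ whenever $x,y>0$, since $\pcoadd$ commutes with $\land$ under $0<\_$. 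This works for every $p$.
2. In the multiplicative mode with $p<\infty$, this fails: $1\pcoadd 1<1$.
3. For $p=\infty$, $\pcoadd[\infty]=\land$, so $x,y\ge1$ gives $x\land y\ge1$.

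Meets follow dually via $a\pland b=(a^\dual\plor b^\dual)^\dual$ and the descended duality. The bottom follows from \eqref{eq:softale-soft-bottom}, since $\infty>0$ and $\infty\geq 1$, and top is $\bot^\dual$.

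Finally, we would remark that the lattice need not be distributive, and that none is required for the residuated-lattice axioms. We would check only that joins exist in the quotient poset, which follows from the above up to $\iso$.
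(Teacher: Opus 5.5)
Your proof is correct and takes the same route as the paper's, transporting the enriched axioms along the qualitative modes $0<(-)$ and $1\leq(-)$. It also spells out what the paper compresses into ``laws holding at $1$ hold positively too'', in particular that $\pcoadd$ preserves positivity, which is what gives joins in $\cat{S}_{>0}$ for finite $p$ (where minimality fails at threshold $1$); the only step you announce without checking is the paper's remark that the duality is the residuated negation, i.e.\ $a \to \One = (a \monop \One)^\dual \iso a^\dual$, a one-line consequence of \eqref{eq:softale-involutivity}, \eqref{eq:softale-unit-ass-comm} and \eqref{eq:softale-duality}.
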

\begin{proof}
	By the above considerations, $\cat S_{\geq 1}$ would clearly be commutative residuated monoidal (and the same applies to $\cat S_{> 0}$, since laws holding at $1$ hold positively too), as well as have joins in the traditional sense.
	The duality $(-)^\dual$ coincides with $a \to 1$, making this involutive too.
\end{proof}

Recall (e.g. from \cite{kellyBasicConceptsEnriched1982}) that for $\cat{A},\cat{B}$ $\MulReals$-enriched, the \textbf{preorder of maps} $\cat{B}^\cat{A}$ is again $\MulReals$-enriched, with order
\begin{equation}\label{eq:preorder-of-maps}
	F \twoto G \equiv \biginf_{a \in A} (Fa \eleq Ga).
\end{equation}

\begin{example}\label{ex:power-preorder}
	Let $\cat{A}$ an enriched order, consider the order of maps $\PosReals^{\cat{A}}$.
	This is always a $p$-softale with all the structure defined pointwise.
	The key observation is that, for any binary operation $\ast$ on $\PosReals$, `$\biginf \ast \biginf \leq \biginf \ast$' since the latter corresponds to taking the infimum over a smaller set.

\end{example}

Interestingly, $\PosReals^{\cat{A}}$ does not admit a softale structure given by convolution\footnote{In the sense of Day, see \cite{dayClosedCategoriesFunctors1970}.}: in fact $\One \eleq -$ is not a valid unit since it does not coincide with its dual, namely $(- \eleq \One)^*$.
Note also that $\PosReals^{\cat{A}}$ does not have representable additives because infima do not commute with (harmonic) $p$-sums.


\subsection{Model theory}
\label{sec:model-theory}
We prove soundness in \pQLL by showing that the axioms of softales make each rule of \pQLL into a valid reasoning scheme within a softale.
We then show completeness by constructing \emph{classifying softales}, that is, softales of \pQLL formulae and whose enriched order is given by \ref{eq:provability} of sequents.
Finally, we discuss a version of completeness for $\PosReals$, limited to a certain kind of theories which arise in applications to neural network verification.

\subsubsection{Soundness and completeness}
\label{sec:soundness-and-completeness}
Fix a $p$-softale $\cat{S}$.
A \emph{cedent in $\cat{S}$} is a multiset $\Gamma = \{A_1, \ldots, A_n\}$ of elements of $\cat{S}$.
We write $\Gamma^\monop$ for the element $A_1 \monop \cdots \monop A_n$, and $\Gamma^\dumonop$ for $A_1 \dumonop \cdots \dumonop A_n$.
The empty cedent reduces to $\One$ in both ways.
A \emph{sequent in $\cat{S}$} is an inequality expression of the form $\Gamma^\monop \eleq \Delta^\dumonop$ for $\Gamma$ and $\Delta$ cedents in $\cat{S}$.
A \emph{structure of sequents in $\cat{S}$} is formed as per the grammar in \eqref{eq:hyperseq-grammar}, by replacing sequents with sequents in $\cat{S}$.

The \emph{semantic validity of a sequent in $\cat{S}$}, written $\Gamma^\monop \eleq \Delta^\dumonop$, is just its value as assigned by the enriched order relation $\eleq$ of $\cat{S}$.
This extends to structures of sequents in $\cat{S}$ by induction, yielding a semantic valuation $\validity{-}_{\cat{S}}$.

\begin{proposition}[Soundness]
	\label[proposition]{prop:soundness}
	For every rule \begin{prooftree}
		\hypo{\hyper{H}}
		\infer1{\hyper{H}'}
	\end{prooftree} of \pQLL,
	\(
	\validity{\hyper{H}}_{\cat{S}} \leq \validity{\hyper{H'}}_{\cat{S}}.
	\)
\end{proposition}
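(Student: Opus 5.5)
The plan is a rule-by-rule check. Each sequent $\Gamma \entails \Delta$ is interpreted as $\Gamma^\monop \eleq \Delta^\dumonop$, and each red connective acts on the values in $\PosReals$. For every rule of \cref{fig:pqll}, I will show that the value of the premise structure is bounded by the value of the conclusion, using only the softale axioms of \cref{def:p-softale} and the arithmetic of $\pMulReals[p]$ from \cref{sec:pos-reals}. The workhorse is the enriched residuation $(a \monop b \eleq c) = (a \eleq b^\dual \dumonop c)$, i.e.\ \eqref{eq:softale-star-aut} rewritten with involutivity. It lets me move formulae across the turnstile without changing the value. This reduces every rule with context $\Gamma, \Delta$ to a statement about a single principal formula, against one side, with the context absorbed into a single element.

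\textbf{Axioms and structural rules.} \AxRule is reflexivity, \EmptyRule is reflexivity at $\One$, and \OneLeftRule and \OneRightRule are $\One \eleq \One$ since $\One^\dual = \One$. \ExFalsoRule is trivial because its premise is $0$. \BotLeftRule is \eqref{eq:softale-soft-bottom} after residuating the context: $(\bot \eleq \Gamma^{\dual}\dumonop\Delta^\dumonop) = \infty$. \TopRightRule is its dual via \eqref{eq:softale-duality}.

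\textbf{Multiplicative and duality rules.} \TensorLeftRule and \ParRightRule hold with equality by associativity, up to $\iso_{\geq 1}$ isomorphisms. I will first record that composing with such isomorphisms preserves values, by transitivity. \DualLeftRule and \DualRightRule are exactly residuation. For \CutRule, I residuate the two premises to $\Gamma^\monop \monop \Delta^{\dual\monop}\eleq A$ and $A \eleq \Gamma'^{\dual\dumonop}\dumonop\Delta'^\dumonop$. Transitivity then gives the composite, and rearranging via residuation yields the conclusion. For \MixRule, \TensorRightRule and \ParLeftRule, I use \eqref{eq:softale-interchange} to obtain $(\Gamma^\monop \eleq X)\tensor(\Gamma'^\monop\eleq Y) \leq (\Gamma\Gamma')^\monop \eleq X \monop Y$. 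I then post-compose with the mix and linear-distributivity maps of \eqref{eq:softale-mix}, which have value $\geq 1$, to reach $X \dumonop Y$ or to redistribute $A \tensor B$ alongside $\Delta,\Delta'$. This works because transitivity, together with a factor $\geq 1$, does not decrease the value.

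\textbf{Additive rules and the main obstacle.} For \pOrLeftRule, I residuate to a single context element $c = \Gamma^{\dual\dumonop}\dumonop\Delta^\dumonop$, so the rule becomes $(A\eleq c)\pcoadd[p](B\eleq c)\leq (A\plor B\eleq c)$, which is \eqref{eq:softale-p-joins-minimality}. For \pOrRightRule, I residuate to $\Gamma^\monop\monop\Delta^{\dual\monop}\eleq A$, and \eqref{eq:softale-p-joins-upper-bound} gives the claim. The rules for $\pland$ follow by duality, using \eqref{eq:softale-duality} and the definition $A\pland B = (A^\dual\plor B^\dual)^\dual$. The delicate point, and where I expect the real work, is the bookkeeping that residuation is an \emph{equality} of enriched values, so that it commutes with the $\padd[p]$ and $\pcoadd[p]$ combinations without loss. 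A second point is that the formula interpretation of negation agrees with the syntactic De Morgan clauses of \eqref{eq:qll-negation}, up to isomorphisms of value $\geq 1$; I will check this by induction on formulae. Monotonicity of $\tensor$, $\padd[p]$ and $\pcoadd[p]$ (\cref{lemma:p-sums-misc}) then lets these rule-wise inequalities pass through arbitrary structures of sequents.
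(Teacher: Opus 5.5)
Your proposal is correct and follows the paper's proof essentially verbatim. Both are a rule-by-rule check against the softale axioms. Both use \eqref{eq:softale-star-aut}, together with involutivity, as an exact residuation to move context across the turnstile. Both then use transitivity for \CutRule, interchange plus mix for \MixRule, and $p$-join minimality or majorization for the additive rules.

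The only difference is cosmetic. For \TensorRightRule and \ParLeftRule the paper first moves all the context to one side and then applies \eqref{eq:softale-interchange} directly, so no linear-distributivity step is needed. Also, your De Morgan check holds strictly rather than only up to isomorphisms of value $\geq 1$, since $a^{\dual\dual} = a$.
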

\begin{proof}
	The proof is rather straightforward as the algebraic properties of softales are directly related to the rules of \pQLL.
	We cover the argument in detail in \cref{app:soundness}.
\end{proof}

Furthermore, we can get completeness by exhibiting a `syntactic softale' construction, simply by constructing their order relations out of the rules of \pQLL and potentially additional axioms.

\begin{definition}
	A \textbf{\pQLL theory} $\Theory$ is a choice of atomic propositions $\Atoms$ plus a set of \textbf{axioms} given by rules of the form\ \ %
	\begin{prooftree}
		\hypo{\red{r}}
		\infer1{\Gamma \entails \Delta}
	\end{prooftree}.
\end{definition}

\begin{definition}\label{def:classifying-softale}
	Let $\Theory$ be a \pQLL theory, its \textbf{classifying $p$-softale} $\SynSoftale{\Theory}$ is defined as follows:
	\begin{enumerate}
		\item its elements are the formulae of \cref{def:qll-formulae} with the atomic propositions of $\Theory$,
		\item its order relation is defined by provability in the theory
		      \begin{equation}
			      \varphi \Rightarrow \psi := \validity{\varphi \entails_\Theory \psi},
		      \end{equation}
		      where the right hand side denotes the \ref{eq:provability} of a sequent;
		\item its multiplicative and additive structures are given by the connectives of \pQLL[p], i.e. respectively by $\tensor$, $(-)^*$, and $\plor$.
	\end{enumerate}
	The proof of the necessary laws is in \cref{app:syn-softale}.
\end{definition}

\begin{definition}\label{def:model}
	Any map $\model:\SynSoftale{\Theory} \to \cat{S}$ exhibits a \textbf{model} (or \emph{sound interpretation}) of $\Theory$ in the $p$-softale $\cat{S}$, meaning syntactic validity (provability) bounds below semantic validity:
	\begin{equation}
		\label{eq:soundness}
		\validity{\varphi \entails \psi} \leq (\model(\varphi) \eleq \model(\psi)).
	\end{equation}
	In particular a model satisfies the axioms of $\Theory$, i.e. $r \leq (\model(\Gamma) \eleq \model(\Delta))$ for every axiom \begin{prooftree}
		\hypo{\red{r}}
		\infer1{\Gamma \entails \Delta}
	\end{prooftree} in $\Theory$.
\end{definition}

The classifying softale $\SynSoftale{\Theory}$ enjoys the following universal property: for every $\model$ as above that satisfies the axioms of $\Theory$ there exists a unique lift $\hat \model$ (defined identically to $\model$ on formulae):
\begin{equation}
	\begin{tikzcd}[ampersand replacement=\&]
		{\SynSoftale{\Atoms}} \& {\cat{S}} \\[-1ex]
		{\SynSoftale{\Theory}}
		\arrow["\model", from=1-1, to=1-2]
		\arrow[two heads, from=1-1, to=2-1]
		\arrow["{{\exists! \hat \model}}"'{pos=0.7}, curve={height=12pt}, dashed, from=2-1, to=1-2]
	\end{tikzcd}
\end{equation}



When $\Theory$ is the empty theory on a set of atoms $\Atoms$, a model $\SynSoftale{\Atoms} \to \cat{S}$ is completely determined by its value on propositional variables, and \emph{vice versa}: on compound formulae the value of $\model$ is forced by preservation of the connectives, while \eqref{eq:soundness} ensures that this assignment is enriched functorial.
Thus we get:

\begin{proposition}\label{prop:freeness}
	The classifying $p$-softales of empty theories are free over sets, i.e.
	\begin{equation}
		\Softales(\SynSoftale{\Atoms}, \cat{S}) \iso \Set(\Atoms, S).
	\end{equation}
\end{proposition}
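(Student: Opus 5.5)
We will prove the bijection by exhibiting mutually inverse maps. In one direction, a strict softale map $F : \SynSoftale{\Atoms} \to \cat{S}$ is sent to its restriction $a \mapsto F(a)$ on propositional variables. In the other direction, a function $v : \Atoms \to S$ is sent to its homomorphic extension $\hat v$, defined by recursion on \cref{def:qll-formulae}:
\[ \hat v(a) = v(a), \quad \hat v(a^\dual) = v(a)^\dual, \quad \hat v(\One) = \One, \quad \hat v(\bot) = \bot, \quad \hat v(\top) = \bot^\dual, \]
\[ \hat v(A \tensor B) = \hat v(A) \monop \hat v(B), \quad \hat v(A \parr B) = \hat v(A) \dumonop \hat v(B), \quad \hat v(A \plor B) = \hat v(A) \plor \hat v(B), \quad \hat v(A \pland B) = \hat v(A) \pland \hat v(B). \]
Since $a^{\dual\dual} = a$ and $\One^\dual = \One$ hold in $\cat{S}$ by \eqref{eq:softale-involutivity}, and par and $p$-meets are defined there by De Morgan, a straightforward induction on formulae shows $\hat v(A^\dual) = \hat v(A)^\dual$ for the syntactic negation \eqref{eq:qll-negation}. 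This induction gives strict preservation of all the structure of \cref{def:classifying-softale}.

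Uniqueness is then immediate. Any strict map $F$ preserves $\One$, $\monop$, $(-)^\dual$, $\bot$ and $\plor$. Every formula is built from atoms using these operations, because $\parr$, $\pland$ and $\top$ are De Morgan derived. So by induction $F = \widehat{F|_{\Atoms}}$, and restricting $\hat v$ to atoms gives back $v$. Thus the two assignments are mutually inverse, provided $\hat v$ is actually a map of softales.

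The substantive step is \emph{enriched functoriality}: we must show $\validity{\varphi \entails \psi} \leq (\hat v\varphi \eleq \hat v\psi)$ for all formulae $\varphi, \psi$. The plan is to extend $\hat v$ to sequents by $\Gamma \entails \Delta \mapsto (\hat v(\Gamma)^\monop \eleq \hat v(\Delta)^\dumonop)$, and to structures of sequents by interpreting the red connectives in $\pMulReals[p]$. Then $\validity{-}_{\cat{S}}$ is defined exactly as in \cref{sec:soundness-and-completeness}. By induction on a closed derivation $\pi$, using \cref{prop:soundness} rule by rule, we get $\validity{\pi} \leq \validity{\text{conclusion}}_{\cat{S}}$. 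Composition of derivations is handled by transitivity of $\leq$. The case $\delta_1 \redbin{\ast} \delta_2$ is handled by monotonicity of $\tensor, \cotensor, \padd, \pcoadd$ (\cref{lemma:p-sums-misc}). Axioms \EmptyRule, \AxRule and \ExFalsoRule hold since $\One \eleq \One \geq 1$ by reflexivity, and $0$ is below everything. Taking the supremum over proofs of $\varphi \entails \psi$ yields the inequality, since a unary cedent gives $\Gamma^\monop = \varphi$ and $\Delta^\dumonop = \psi$ (up to $\iso_{\geq 1}$ for units, which is harmless by transitivity).

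The main obstacle I expect is the bookkeeping in the soundness induction. One issue is that cedents are multisets, while $\Gamma^\monop$ is only defined up to multiplicative isomorphism. These isomorphisms have value $\geq 1$, so they can be absorbed by \eqref{eq:enriched-trans}. The other issue is the two-sided rules \DualLeftRule and \DualRightRule, which must be checked via \eqref{eq:softale-star-aut} together with the derived residuation. Since \cref{prop:soundness} is assumed, I would only verify that its formulation matches the interpretation $\hat v$, and otherwise cite it.
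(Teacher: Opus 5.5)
Your proposal is correct and follows the paper's argument. A strict map is forced on compound formulae by preservation of the connectives (with $\parr$, $\pland$ and $\top$ being De Morgan derived). The homomorphic extension is enriched functorial because each \pQLL rule is sound, the inequality propagates through derivations, and it then passes to the supremum defining provability. Your version spells out details the paper leaves implicit, such as the induction showing $\hat v(A^\dual)=\hat v(A)^\dual$. One small fix: the derivation induction should be stated for all derivations, not only closed ones, since your composition clause needs it for non-closed subderivations. Also, \cref{lemma:p-sums-misc} only gives monotonicity of the $p$-sums; monotonicity of $\tensor$ and $\cotensor$ has to be checked separately.
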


Therefore for any valuation $v:\Atoms \to S$ of the propositional variables, we get a unique model $\sem{-}_v$ that exhibits soundness in the sense of provability.

\begin{corollary}[Completeness]
	\pQLL is sound and complete for softales, i.e. for every \pQLL sequent $\varphi \entails \psi$, any $p$-softale and any valuation $v:\Atoms \to S$, we have
	$
		\validity{\varphi \entails \psi} \leq (\sem{\varphi}_v \eleq \sem{\psi}_v),
	$
	and in fact the left hand side is a minimum:
	\begin{equation}\label{eq:completeness}
		\biginf_{\cat{S} \in \Softales} \biginf_{v:\Atoms \to S} (\sem{\varphi}_v \eleq \sem{\psi}_v) = \validity{\varphi \entails \psi}.
	\end{equation}
\end{corollary}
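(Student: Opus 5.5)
The plan splits the statement into its soundness half and its tightness half. The second half will follow from the classifying softale of the empty theory on $\Atoms$, together with the freeness of \cref{prop:freeness}.

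\emph{Soundness.} Fix a $p$-softale $\cat{S}$ and a valuation $v:\Atoms \to S$. By \cref{prop:freeness}, $v$ extends uniquely to a strict map of softales $\sem{-}_v : \SynSoftale{\Atoms} \to \cat{S}$. Being a map of $\MulReals$-preorders, it satisfies
\[
(\varphi \eleq \psi) \leq (\sem{\varphi}_v \eleq \sem{\psi}_v),
\]
and in $\SynSoftale{\Atoms}$ the left-hand side is by definition $\validity{\varphi \entails \psi}$. To see why $\sem{-}_v$ really is enriched-functorial, one argues directly as follows. Interpret every sequent $\Gamma \entails \Delta$ as $\sem{\Gamma}_v^\monop \eleq \sem{\Delta}_v^\dumonop$. \Cref{prop:soundness} says each rule of \pQLL is non-decreasing for $\validity{-}_{\cat{S}}$. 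Derivations are built from rules using identity, vertical composition and the red operations, and these operations are monotone in $\PosReals$ by \cref{lemma:p-sums-misc}(1) and monotonicity of $\tensor$ and $\cotensor$. So an induction on derivations gives $\validity{\pi} \leq \validity{\varphi \entails \psi}_{\cat{S},v}$ for every closed proof $\pi$. Taking the supremum over proofs yields the inequality. Hence the left-hand side of \eqref{eq:completeness} is bounded below by $\validity{\varphi \entails \psi}$.

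\emph{Tightness.} Take $\cat{S} := \SynSoftale{\Atoms}$, which is a $p$-softale by \cref{def:classifying-softale} and \cref{app:syn-softale}, and take $v$ to be the inclusion of atoms as formulae. The induced model $\sem{-}_v$ is a strict map preserving all connectives and agreeing with the identity on atoms, so by uniqueness in \cref{prop:freeness} it is the identity. Therefore
\[
\sem{\varphi}_v \eleq \sem{\psi}_v \;=\; \varphi \Rightarrow \psi \;=\; \validity{\varphi \entails \psi}.
\]
Thus the infimum is attained at this softale and valuation, and equals $\validity{\varphi \entails \psi}$. This is why the statement can say the left-hand side is a minimum.

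\emph{Main obstacle.} The real work is hidden in the fact that $\SynSoftale{\Atoms}$ is a softale, which we take from \cref{app:syn-softale}. In particular, transitivity of the enriched order needs \CutRule to be validity-sound, composing a proof of $\varphi \entails \chi$ with one of $\chi \entails \psi$ using $\tensor$. It also needs the sup-of-products bound
\[
\Bigl(\bigvee_\pi \validity{\pi}\Bigr) \tensor \Bigl(\bigvee_{\pi'} \validity{\pi'}\Bigr) \leq \bigvee \validity{\CutRule(\pi,\pi')},
\]
which holds by continuity of multiplication, with the convention $0 \tensor \infty = 0$ checked separately. One size issue also remains. The infimum over ``all softales'' ranges over a proper class, but it is an infimum of a class of reals, so it is still well-defined.
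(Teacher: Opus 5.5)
Your proposal is correct and follows essentially the paper's route: soundness is \cref{prop:soundness} propagated through whole derivations by monotonicity of the red operations, which is exactly what makes the free extension of \cref{prop:freeness} an enriched map, and tightness comes from evaluating in the classifying softale $\SynSoftale{\Atoms}$ at the generic valuation, where the induced model is the identity. Your side remark that the softale laws of $\SynSoftale{\Atoms}$ (e.g.\ transitivity) need $\tensor$ to commute with the suprema defining provability is a point the paper leaves implicit, and it does hold because $\tensor$ is residuated.
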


\subsection{Grounded completeness}
Despite its technical elegance, \pQLL falls short of having a complete semantics in $\pMulReals[p]$.
However, for our goals, a milder form of completeness suffices, which is satisfied by some theories for the following extension of \pQLL.

\begin{definition}\label{def:pqllstar}
	Let \pQLLStar be the quantitative calculus obtained by adding the following rule to \pQLL:
	\begin{equation}\label{rule:mix-star-and-split}
		\begin{prooftree}
			\hypo{\Gamma_1 \entails \Delta_1 \redbbin{\cotensor} \Gamma_2 \entails \Delta_2}
			\infer1[\MixStarRule]{\Gamma_1, \Gamma_2 \entails \Delta_1, \Delta_2}
		\end{prooftree}
	\end{equation}
	Correspondingly, call a $p$-softale \textbf{mix*} when it satisfies
	\begin{equation}\label{eq:softale-interchange-star}
		(a \eleq b) \cotensor (c \eleq d) \leq (a \monop c) \eleq (b \monop d);
	\end{equation}
\end{definition}

All the definitions we gave in the previous section apply trivially to \pQLLStar too (note also that \pQLLStar enjoys the same proof-theoretic properties as \pQLL, including cut-elimination).
Moreover, \cref{prop:soundness} still applies (note how soundness of \MixStarRule can be proven just like that of \MixRule by swapping \eqref{eq:softale-interchange} for \eqref{eq:softale-interchange-star}), and therefore we have:

\begin{proposition}
	\pQLLStar theories admit a classifying mix* $p$-softale construction, and it is sound and complete in mix* $p$-softales.
\end{proposition}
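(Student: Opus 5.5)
The plan is to replay, for \pQLLStar, the three steps used above for \pQLL: soundness, the classifying construction, and the freeness/completeness argument. The only genuinely new ingredient is the rule \MixStarRule, matched on the semantic side by the mix* law \eqref{eq:softale-interchange-star}.

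\emph{Soundness.} We extend \cref{prop:soundness} to mix* $p$-softales $\cat S$. Every rule of \pQLL remains sound, since a mix* softale is in particular a softale. For \MixStarRule we must show
\[
(\Gamma_1^\monop \eleq \Delta_1^\dumonop) \cotensor (\Gamma_2^\monop \eleq \Delta_2^\dumonop) \leq (\Gamma_1^\monop \monop \Gamma_2^\monop \eleq \Delta_1^\dumonop \monop \Delta_2^\dumonop).
\]
This is \eqref{eq:softale-interchange-star} applied to the four elements involved. We then post-compose, using \eqref{eq:enriched-trans}, with the mix inequality $1 \leq (x \monop y \eleq x \dumonop y)$ from \eqref{eq:softale-mix}. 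Together with associativity and commutativity up to $\iso_{\geq 1}$, this gives $\Delta_1^\dumonop \monop \Delta_2^\dumonop \eleq (\Delta_1,\Delta_2)^\dumonop$. This is exactly how soundness of \MixRule is handled in \cref{app:soundness}, with \eqref{eq:softale-interchange} replaced by \eqref{eq:softale-interchange-star}. Finally, monotonicity of $\tensor, \cotensor, \padd, \pcoadd$ in $\PosReals$ propagates rule-wise soundness to whole derivations, so provability bounds semantic validity from below.

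\emph{Classifying softale.} We define $\SynSoftale{\Theory}$ as in \cref{def:classifying-softale}, with provability now taken in \pQLLStar. The softale laws are verified exactly as in \cref{app:syn-softale}, since those arguments only exhibit derivations, and derivations in \pQLL are also derivations in \pQLLStar. The one extra law to check is \eqref{eq:softale-interchange-star} for $\Rightarrow$.

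\begin{itemize}
\item Take proofs $\pi_1$ of $a \entails b$ and $\pi_2$ of $c \entails d$, and apply \MixStarRule to get $a, c \entails b, d$ with validity $\validity{\pi_1} \cotensor \validity{\pi_2}$.
\item Then apply \ParRightRule and \TensorLeftRule to get $a \tensor c \entails b \tensor d$. These rules are unary, so the validity is unchanged.
\item Taking suprema gives the inequality, provided $\cotensor$ commutes with suprema of the form $\bigvee_i x_i \cotensor \bigvee_j y_j = \bigvee_{i,j} x_i \cotensor y_j$.
\end{itemize}

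\textbf{Obstacle.} I expect this step to be the main obstacle. The relevant case is the corner $0 \cotensor \infty = \infty$, where the supremum of the products may fail to reach $\infty$. I would sidestep it by invoking \cref{th:decidability} for \pQLLStar: since cut-elimination and the subformula property carry over, the suprema are attained as maxima and no limit argument is needed. If that transfer is doubted, the fallback is to treat the boundary cases $0$ and $\infty$ by hand.

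\emph{Freeness and completeness.} Freeness over sets follows as in \cref{prop:freeness}. By construction, $\SynSoftale{\Atoms}$ is itself a mix* $p$-softale. Taking $\cat S = \SynSoftale{\Atoms}$ with the tautological valuation therefore attains the infimum in \eqref{eq:completeness}, which, together with soundness, yields completeness in mix* $p$-softales.
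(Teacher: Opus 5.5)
Your route is the same as the paper's. The paper simply asserts that everything carries over, with \MixStarRule sound by swapping \eqref{eq:softale-interchange} for \eqref{eq:softale-interchange-star}. But the one step you make explicit does not work.

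\textbf{The derivation proves a different law.} After \MixStarRule, applying \ParRightRule and \TensorLeftRule gives $a \tensor c \entails b \parr d$, not $a \tensor c \entails b \tensor d$. So your derivation only witnesses $\validity{a \entails b} \cotensor \validity{c \entails d} \leq \validity{a \tensor c \entails b \parr d}$. No other derivation can give the law as written, with $b \monop d$ on the right, because it is false in $\SynSoftale{\Theory}$ already for the empty theory:
\begin{itemize}
\item $\validity{\One \entails \top} = \infty$ by \TopRightRule;
\item soundness in $\MulReals$ (with $\bot, \top$ read as $0, \infty$) gives $\validity{\One \entails \bot} = 0$;
\item the same soundness gives $\validity{\One \tensor \One \entails \top \tensor \bot} \leq 1 \mulimp (\infty \tensor 0) = 0$.
\end{itemize}
So the left side is $\infty \cotensor 0 = \infty$ while the right side is $0$. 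The law even fails in $\MulReals$ itself, at $a = c = 1$, $b = \infty$, $d = 0$. The law that \MixStarRule actually expresses is $(a \eleq b) \cotensor (c \eleq d) \leq (a \monop c \eleq b \dumonop d)$. This version:
\begin{itemize}
\item yields soundness of \MixStarRule directly, without the mix map;
\item holds with equality in $\MulReals$, since $(x \mulimp y) \cotensor (z \mulimp w) = (x \tensor z) \mulimp (y \cotensor w)$;
\item is exactly what your derivation proves.
\end{itemize}
You need to adopt it explicitly. As written, your soundness step uses one law and your classifying step silently proves a different, weaker one.

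\textbf{The $0 \cotensor \infty$ corner is a real failure.} \Cref{th:decidability} is proved for pure \pQLL. For a theory with non-logical axioms, neither cut elimination nor finiteness of proofs is given, and a theory may have infinitely many axioms. Take atoms $a, c$ and axioms $\entails a$ of validity $n$ for every $n \in \mathbb{N}$. Then $\validity{\One \entails a} = \infty$ (never attained) and $\validity{\One \entails c} = 0$, but $\validity{\One \tensor \One \entails c \parr a} = 0$. To see the last claim:
\begin{itemize}
\item cut elimination extends routinely to atomic axioms $\entails a$;
\item in a cut-free proof, the occurrence of $c$ can only end in an \ExFalsoRule leaf;
\item every other leaf is finite, and leaves are combined only by $\tensor$ and $\cotensor$.
\end{itemize}
So even the corrected law fails in this $\SynSoftale{\Theory}$. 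Completeness itself fails for this theory: any valuation into a mix* softale satisfying the axioms has $(\One \eleq \sem{a}) = \infty$, which forces the value $\infty$ on that sequent. Hence no boundary-case analysis can rescue the general claim.

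What your argument does prove is the statement for theories whose provability suprema are attained. An example is the grounded theories $\Theory_v$, once you check that cut elimination survives the grounding axioms: a cut of $\entails a$ against $a \entails$ becomes \EmptyRule, since $v(a) \tensor v(a)^* \leq 1$. That is the case used in \cref{th:grounded-completeness}.
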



We follow the usual terminology and say that a formula is \emph{grounded} if it does not contain propositional variables.
Similarly, in \pQLLStar, we consider theories defined via a set of atoms $\Atoms$ together with a valuation $v:\Atoms \to \PosReals$ that is used to axiomatically ground such atoms.
The \textbf{grounded \pQLLStar-theory $\Theory_v$} is defined by the following axiom schema:
\begin{equation}\label{rule:grounding-schema}
	\begin{prooftree}
		\hypo{\red{v(a)}}
		\infer1[\AtomRightRule{a}]{\ \entails a}
	\end{prooftree}
	\hspace*{10ex}
	\begin{prooftree}
		\hypo{\red{v(a)^*}}
		\infer1[\AtomLeftRule{a}]{a \entails\ }
	\end{prooftree}
	\quad
	\text{for all $a \in \Atoms$.}
\end{equation}
We denote its classifying mix* softale as $\SynSoftale{\Atoms, v}$.

\begin{example}[Bayesian probability]\label{ex:prob}


	As a natural example, we consider the grounded theory associated to a finite set $\Omega$ of \emph{outcomes}, endowed with a probability density ${p:\Omega \to [0,1]}$---we claim this is a logic environment for Bayesian probability.

	Indeed, every event $A \subseteq \Omega$ corresponds to a formula in this theory, namely $\bigvee_{a \in A} \atom{a}$ (where $\atom{a}$ means $a$ considered as constant of the theory), which we denote\footnotemark~again by $A$.
	\footnotetext{This encoding is not perfect, since $A$ is considered a multiset in this way. A first-order encoding is much more satisfactory.}
	Every sequent $A \entails B$ where $A,B$ are such propositions, corresponds then to the relative probability of $B$ compared to $A$, i.e. the \emph{odds} of $B$ to $A$.\footnote{
		This encoding of Bayesian probability in line with the conversion rate interpretation seen in \cref{sec:interpretation}, since \emph{odds} can be conceived as a conversion rate between bets and payoffs (see \cite{definettiTheoryProbabilityCritical2017}).
	}
	In particular, in such a theory the sequents $A \entails A \cap B$ evaluate to $p(B \mid A)$:
	\begin{equation}
		\begin{prooftree}
			\hypo{\red{\psum[]{b \in A \cap B}\hsum{a \in A}\ p(a)^* \tensor p(b)}}
			\infer1[$\red{\psum[]{b}\hsum{a}}\, (\AtomLeftRule{a} \redbin{\tensor} \AtomRightRule{b})$]{\red{\psum[]{b \in A \cap B}\hsum{a \in A}}\ (a \entails \redbin{\tensor} \entails b)}
			\infer1[$\red{\psum[]{b}\hsum{a}}\,\MixRule$]{\red{\psum[]{b \in A \cap B}\hsum{a \in A}}\ a \entails b}
			\infer1[$\red{\psum[]{b}}\,\pOrLeftRule[1]$]{\red{\psum[]{b \in A \cap B}}\ A \entails b}
			\infer1[\pOrRightRule[1]]{A \entails A \cap B}
		\end{prooftree}
	\end{equation}
	and this proof has validity
	\begin{equation}
		\dfrac{\psum[]{b \in A \cap B} p(b)}{\psum[]{a \in A} p(a)} = \dfrac{p(A \cap B)}{p(A)} = p(B \mid A).
	\end{equation}
	Note when $p(A) = 0$, \pQLLStar[1] says $p(B \mid A) = \infty$ when $p(B) > 0$ and $0$ otherwise.
\end{example}

For grounded theories, we have the following completeness theorem:

\begin{theorem}[Grounded Completeness]\label{th:grounded-completeness}
	Let $\varphi$ be a \pQLL formula and let $v:\Atoms \to \PosReals$ be a grounding, and let $\sem{-}_v : \SynSoftale{\Atoms, v} \to \PosReals$ be the unique model of $\Theory_v$ induced by $v$.
	Then
	\begin{equation}
		\sem{\varphi}_v = \validity{\entails_{\Theory_v} \varphi}.
	\end{equation}
\end{theorem}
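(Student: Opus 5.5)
The plan is to prove the two inequalities separately. Soundness gives one for free. For the other, I would build, by induction on $\varphi$, an explicit proof of ${\entails \varphi}$ in $\Theory_v$ whose validity is exactly $\sem{\varphi}_v$.

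For $\validity{\entails_{\Theory_v} \varphi} \leq \sem{\varphi}_v$, I use that $\sem{-}_v$ is a model of $\Theory_v$ in the mix* softale $\PosReals$. The grounding axioms hold there with equality: $v(a) \leq (1 \mulimp v(a))$, and $v(a)^* \leq (v(a) \mulimp 1) = v(a)^*$. So by \eqref{eq:soundness} (soundness of \pQLLStar for mix* softales), applied to the sequent ${\entails \varphi}$, we get
\[
  \validity{\entails_{\Theory_v} \varphi} \leq (\One \eleq \sem{\varphi}_v) = 1 \mulimp \sem{\varphi}_v = \sem{\varphi}_v .
\]
Here the empty left cedent is interpreted as $\One$, and $1 \mulimp x = x$ follows from the residuation table.

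For the reverse inequality, I show by structural induction on $\varphi$ that there is a closed proof $\pi_\varphi$ of ${\entails \varphi}$ with $\validity{\pi_\varphi} = \sem{\varphi}_v$. The cases are:
\begin{enumerate}
\item For an atom $a$, take \AtomRightRule{a}, of validity $v(a)$.
\item For $a^\dual$, take \AtomLeftRule{a} followed by \DualRightRule, of validity $v(a)^* = \sem{a^\dual}_v$.
\item For $\One$, use \OneRightRule (validity $1$). For $\top$, use \TopRightRule (validity $\infty$). For $\bot$, whose denotation is $0$, use \ExFalsoRule (validity $0$).
\item For $B \tensor C$, combine $\pi_B \redbin{\tensor} \pi_C$ with \TensorRightRule (empty contexts), giving $\sem{B}_v \tensor \sem{C}_v$.
\item For $B \plor C$ and $B \pland C$, combine $\pi_B$ and $\pi_C$ with $\redbin{\padd[p]}$ and $\redbin{\pcoadd[p]}$ via \pOrRightRule and \pAndRightRule respectively, giving $\sem{B}_v \padd[p] \sem{C}_v$ and $\sem{B}_v \pcoadd[p] \sem{C}_v$.
\item For $B \parr C$, apply \MixStarRule to $\pi_B \redbin{\cotensor} \pi_C$ to get ${\entails B, C}$, then \ParRightRule, giving $\sem{B}_v \cotensor \sem{C}_v = \sem{B \parr C}_v$.
\end{enumerate}
In each case the value matches because $\sem{-}_v$ is a strict map of softales. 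It therefore sends $\tensor, \parr, \plor, \pland$, $(-)^\dual$ and the units to $\tensor, \cotensor, \padd[p], \pcoadd[p]$, $(-)^*$, $1, \infty, 0$ in $\pMulReals[p]$, and validity is computed compositionally by the same operations.

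The step I expect to need the most care is the $\parr$ case together with the degenerate values $0$ and $\infty$. Only \MixStarRule, not \MixRule, yields $\cotensor$, so this is exactly where the extension to \pQLLStar is needed. I also have to check that the conventions $\infty \cotensor 0 = \infty$ and $\infty \tensor 0 = 0$, as realised by the rules, agree with the values of $\sem{B \parr C}_v$ and $\sem{B \tensor C}_v$. I would verify this directly from the tables in \cref{sec:pos-reals}. A secondary check is that the denotations of $\bot$ and $\top$ in the softale $\PosReals$ are indeed $0$ and $\infty$, as \cref{table:posreals-operations} indicates. This justifies using \ExFalsoRule and \TopRightRule as the base cases.

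With both inequalities established, $\validity{\entails_{\Theory_v}\varphi} \geq \validity{\pi_\varphi} = \sem{\varphi}_v$ and $\validity{\entails_{\Theory_v}\varphi} \leq \sem{\varphi}_v$, which gives the claimed equality.
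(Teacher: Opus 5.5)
Your proposal is correct and follows the paper's proof. The inequality $\validity{\entails_{\Theory_v}\varphi}\leq\sem{\varphi}_v$ is just soundness of the model, and the converse is the same induction on $\varphi$: grounding axioms for (negated) atoms, right introduction rules for the units, $\tensor$, $\plor$ and $\pland$, and \MixStarRule followed by \ParRightRule for $\parr$. Your explicit proofs of validity exactly $\sem{\varphi}_v$ are a concrete form of the paper's chain $\sem{\alpha}_v \ast \sem{\beta}_v \leq \validity{\entails \alpha} \ast \validity{\entails \beta} \leq \validity{\entails \alpha \ast \beta}$, and since formulae are in negation normal form you rightly omit the paper's extra $\psi^\dual$ case via \cref{lemma:dualism}.
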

\begin{proof}
	Since we have soundness already, it suffices to prove the left hand side is $\leq$ than the right hand side.
	The result follows from an easy induction on $\varphi$ which the reader can find in \cref{app:grounded-completeness-proof}.
	In particular, the grounding schema \eqref{rule:grounding-schema} is used for the base case when $\varphi$ is an atom or the dual of an atom, while \MixStarRule is used to prove the induction step for $\varphi = \varphi_1 \parr \varphi_2$.
\end{proof}



\section{QLL in neuro-symbolic learning and verification}
\label{sec:ML}

In this section, we finally deploy our theoretical framework in the context of neuro-symbolic learning, and specifically property-driven training.
We start with an opinionated account of property-driven training of neural models.

We assume to be training a parameterised model $f_\theta : X \to Y$ with trainable parameters $\theta \in \Theta$, where $X= \R^m$, $Y=\R^n$ and $\Theta = \R^v$ are task-specific smooth manifolds.
In the following, we conceive of $f_\theta$ as a classifier, so that we think of $n$ as a \emph{set} of \emph{classes} or \emph{labels}, and each $f_\theta(x)_i$ as a logit assigned to class $i \in n$.

We are also given a logical property specifying the desired behaviour of $f_\theta$ on certain regions of the input space.
Here, we always assume the property to have the following form
\begin{equation}\label{eq:constraint}
	\forall x \in X,\  a \leq x \leq b \implies \varphi(x)
\end{equation}
where $a$, $b$, and $\varphi$ may depend on a choice of point $(\hat{x}, \hat{y}) \in \mathcal{D}$ in a sample of the training data.
The \ref{eq:robustness} property we presented in the introduction is an example of this.

We call $\varphi$ the \textbf{specification} proper, and we conceive it as a proposition in a propositional theory parameterized by $\hat{x}, \hat{y}, x$ and $f_\theta$ that we describe in \cref{sec:nesy-theory} below.

Common practice in machine learning~\cite{van2022analyzing,flinkow2025comparing,flinkowQuantitativeLinearLogic2026} broadly follows the work of Fischer et al.~\cite{fischer2019dl2} in incorporating~\eqref{eq:constraint} into the optimisation objective as follows:
\begin{equation}\label{eq:nesy-objective}
	\argmin\limits_\theta\ \expected_{(\hat{x},\hat{y}) \sim \mathcal{D}} \biggl[\mathcal{L}_{\mathrm{pred}}(f_\theta(\hat{x}), \hat{y})\ \pAand[1]\ \lambda \sup_{a \leq x \leq b} \sem{\varphi}(f_\theta(x), \hat{y})\biggr],
\end{equation}
where (1) $\mathcal{D} \subseteq X \times Y$ is a sample of the data distribution, (2) $\mathcal{L}_{\mathrm{pred}}$ is any standard loss function concerning the prediction task, (3) $\pAand[1]$ is an operation we define and motivate below (usually it's just $+$), and (4) $\sem{-}$ is the logic-specific semantic function that interprets the property $\varphi$ as a real-valued function.

The inner supremum, computed via Projected Gradient Descent (PGD) \cite{mosbachLogitPairingMethods2019,croceScalingRandomizedGradientFree2020}, makes the optimisation \emph{adversarial}: at each step the optimiser searches for a worst-case violator of $\sem{\varphi}$ in $H = \{a \leq x \leq b\}$, and then nudges $\theta$ so as to reduce the violation. The latter part of the optimisation objective is responsible for ``property-driven training''.

The function $\sem{-}$ is where a differentiable logic comes in, `compiling' the specification language to (differentiable) real-valued functions.
As discussed in \cref{sec:intro}, the state of the art treats this compiler as a heuristic: each choice of $\sem{-}$ comes with its own ad-hoc justification, and none with a theorem relating the \emph{value} of $\sem{\varphi}$ to the logical status of $\varphi$.

Rather, we consider $\sem{-}$ to be a \emph{model} (\cref{def:model}), so that we can directly tie the value of $\sem{\varphi}$ to a notion of logical satisfaction of a certain theory, namely the theory of neuro-symbolic specifications at hand.
This is adequacy (\cref{fig:triangle}).
By \cref{th:grounded-completeness}, we can furthermore equate satisfaction in the model with provability in the theory---a \emph{logical} form of adequacy.

The rest of this section constructs such a result in the framework of \QLL.

\subsection{Specifications as formulae in a theory}\label{sec:nesy-theory}
The property \eqref{eq:constraint} is really a first-order sequent, so it would ideally be accounted for in the logic via first-order facilities, but given a propositional calculus, we content ourselves with accounting for it externally: the adversarial search for $x \in H$ described above discharges the quantifier, and we are left with the propositional specification $\varphi$, instantiated at concrete reals.
These are the formulae of the following \pQLLStar-theory:

\begin{definition}
	The \textbf{theory of neuro-symbolic specifications} for $f_\theta$ at $(\hat{x}, \hat{y}) \in \mathcal{D}$ and $x \in X$, is the grounded theory with set of propositional variables
	\begin{equation}
		K = \{\atom{\hat{y}_1}, \ldots, \atom{\hat{y}_n},\ \atom{y_1}, \ldots, \atom{y_n}\}
	\end{equation}
	and induced by the valuation
	\begin{equation}
		\sem{\atom{\hat{y}_i}}_p = \hat{y}_i,
		\qquad
		\sem{\atom{y_i}}_p = e^{-f_\theta(x)_i}.
	\end{equation}
	This amounts to the rules of \pQLL plus \MixStarRule plus the following grounding schema
	\begin{equation}
		\tag{grounding schema}
		\begin{prooftree}
			\hypo{\red{\hat{y}_i}}
			\infer1{\entails \atom{\hat{y}_i}}
		\end{prooftree}
		\qquad
		\begin{prooftree}
			\hypo{\red{e^{-f_\theta(x)_i}}}
			\infer1{\entails \atom{y_i}}
		\end{prooftree}
		\qquad
		\begin{prooftree}
			\hypo{\red{\hat{y}_i^*}}
			\infer1{\entails \atom{\hat{y}_i}^\dual}
		\end{prooftree}
		\qquad
		\begin{prooftree}
			\hypo{\red{e^{f_\theta(x)_i}}}
			\infer1{\entails \atom{y_i}^\dual}
		\end{prooftree}
	\end{equation}
	We denote such theory by $\Theory_{x,\hat{y},f_\theta}$.
\end{definition}

The propositional variables $\atom{y_i}$ represent the outputs of the network, while $\atom{\hat{y}_i}$ represent the labels.
The key idea here is to interpret a classifier's outputs and labels as quantitative propositions in their own right: the atoms of the theory are not statements \emph{about} the network, they are the network's outputs, read logically.

\subsection{Losses as elements of a model}\label{sec:ml-algebraic}
Loss functions for neural models are most commonly given by \emph{additive} quantities: the predictions coming out of classifiers are logits (logarithms of unnormalized probabilities), and so is the cross-entropy used in the prediction part of the loss---with this attitude, $\Reals$ is commonly known as \emph{logspace} \cite{badreddineLogLTNDifferentiableFuzzy2023}.

Therefore, when augmenting a loss function with a logical term $\sem{\varphi}$ as in \eqref{eq:nesy-objective}, we need to target the same `units'---this is our \emph{naturality} principle from \cref{sec:intro}, guiding our choice of semantics.

We therefore adopt the \emph{additive} semantics of \QLL, which is the family of softales defined by the data in \cref{table:add-reals}.
Note this \emph{semantic} `additive' vs `multiplicative' distinction is unrelated to the \emph{proof-theoretic} one involving the connectives.
The semantic distinction concerns which arithmetic operation naturally aggregates such quantities, specifically when the objects they pertain to are combined `independently' (e.g. the entropy of independent random variables, the energy of non-interacting physical systems, etc.).

Clearly, additive and multiplicative reals form isomorphic softales for any choice of $p$, as witnessed by \emph{Napier's isomorphism} $-\log \adj 1/\exp$: conjunctive multiplication $\tensor$ becomes ordinary addition, $\padd$ becomes `harmonic' log-sum-exp, and $\pcoadd$ becomes plain log-sum-exp. To avoid confusion with Table~\ref{table:posreals-operations}, let us use a different notation for additive $\padd$ and $\pcoadd$, and write instead $\pAand[p]$ and $\pAor[p]$:
\begin{equation}\label{eq:lse-defs}
	a \pAand[p] b := -\tfrac{1}{p}\log\bigl(e^{-pa}+e^{-pb}\bigr),
	\qquad
	a \pAor[p] b := \tfrac{1}{p}\log\bigl(e^{pa}+e^{pb}\bigr).
\end{equation}
Note that this isomorphism is order-reversing, meaning additive quantities will be `more true' as they get minimised, and this indeed matches how training losses are interpreted.

\begin{table}[tbp]
	\footnotesize
	\centering
	\begin{tabularx}{\textwidth}{|c|c|*{3}{>{\centering\arraybackslash}X|}}
		\cline{2-5}
		\multicolumn{1}{c|}{} & \textbf{polarity} & \multicolumn{2}{c|}{\textbf{additive}} & \textbf{multiplicative} \\
		\hline
		\multirow{2}{9ex}[-2ex]{\centering\parbox[c]{9ex}{\centering\textbf{duality}\\$a^* := -a$}} & \textbf{positive} &
		\multicolumn{2}{>{\columncolor{red!15}}c|}{\begin{tikzcd}[ampersand replacement=\&, column sep=small]
			\begin{array}{c}
				\begin{gathered}
					\bot := +\infty \\[-.5ex]
					a \pAor[p] b
				\end{gathered}
			\end{array}
			\&\&
			\begin{array}{c}
				\begin{gathered}
					\bot := +\infty \\[-.5ex]
					a \pAor[\infty] b := \min(a, b)
				\end{gathered}
			\end{array}
			\arrow["{p \to \infty\ }"', to=1-3, from=1-1]
		\end{tikzcd}} & \cellcolor{blue!15}$\begin{gathered} \One := 0 \\[-.5ex] a + b\end{gathered}\quad { -\infty + \infty := +\infty}$ \\[2ex]
		\hhline{~|----|}
		      & \textbf{negative} & \multicolumn{2}{>{\columncolor{blue!15}}c|}{\begin{tikzcd}[ampersand replacement=\&, column sep=small]
				\begin{array}{c}
					\begin{gathered}
						\top := -\infty \\[-.5ex]
						a \pAand[p] b
					\end{gathered}
				\end{array}
				\&\&
				\begin{array}{c}
					\begin{gathered}
						\top := -\infty \\[-.5ex]
						a \pAand[\infty] b := \max(a, b)
					\end{gathered}
				\end{array}
				\arrow["{p \to \infty\ }"', to=1-3, from=1-1]
			 \end{tikzcd}} & \cellcolor{red!15}$\begin{gathered} \One := 0 \\[-.5ex] a +^- b\end{gathered}\quad { -\infty +^- \infty := -\infty}$ \\[2ex]
		\hline
	\end{tabularx}
	\vspace*{1ex}

	\textbf{implication}: $a \mulimp b = a^* +^- b$, thus $b - a$,
	\hspace*{4ex}
	\textbf{enriched order}: $a \eleq b := e^{-(a \mulimp b)}$,

	\vspace*{1ex}
	\caption{\small The family of softales we collectively refer to as the \textbf{additive reals} $\pAddReals$, analogue of \cref{table:posreals-operations}.}
	\label{table:add-reals}
\end{table}

\begin{remark}[Analytical considerations]
	Crucially, for finite $p$, the operations are still smooth, strictly entrywise monotone, and have non-vanishing partial derivatives:
	\begin{equation}
		\partial_x (x \pAand[p] y) = \frac{e^{py}}{e^{px}+e^{py}},
		\qquad
		\partial_x (x \pAor[p] y) = \frac{e^{px}}{e^{px}+e^{py}}.
	\end{equation}
	The price for this---loss of idempotency---is precisely quantified by Derivation~\ref{proof:soft-idempotency}: idempotency can be paid for by a factor of $\tfrac1p \log 2$, a cost which vanishes as $p \to \infty$ and is only relevant in the proof-theory, where this is addressed otherwise.

	An independent experimental evaluation of this version of
	\QLL against the existing differentiable logics is given by Flinkow et al.~\cite{flinkowQuantitativeLinearLogic2026}---we refer the reader there for the particulars from the optimisation point of view.
\end{remark}

\subsection{Adequacy and completeness: loss is satisfaction is provability}
\label{sec:ml-adequacy}
Every theory of neuro-symbolic specifications has then a canonical interpretation in $\pAddReals$, induced by the valuation of propositional variables
\begin{equation}
	\sem{\atom{\hat{y}_i}}_p = -\log \hat{y}_i,
	\qquad
	\sem{\atom{y_i}}_p = f_\theta(x)_i.
\end{equation}
For each $0 < p \leq \infty$, we thus get a model we can suggestively denote as
\begin{equation}
	\hat{x}, \hat{y}, x, f_\theta \models^p - : \SynSoftale[p]{\Theory_{x,\hat{y},f_\theta}} \to \pAddReals
\end{equation}

\begin{definition}
	The \textbf{quantitative satisfaction} of a specification $\varphi$ by $f_\theta$ at $(\hat{x}, \hat{y})$  and $a \leq x \leq b$ is the semantic validity of $\entails \varphi$ according to the canonical interpretation of $\SynSoftale{\Theory_{x,\hat{y},f_\theta}}$ in the additive reals:
	\begin{equation}
		\hat{x}, \hat{y}, x, f_\theta \models^p \varphi
		\ =\ %
		\One \eleq \sem{\varphi}_p
		\ =\ %
		e^{-\sem{\varphi}_p}.
	\end{equation}
\end{definition}

This is adequacy (\cref{fig:triangle}): every value of the loss $\sem{\varphi}$ has an interpretation in terms of satisfaction by a model.
By \cref{th:grounded-completeness}, we can also conclude a logical adequacy result for property-driven training:

\begin{theorem}\label{th:nnv-adequacy}
	Given $f_\theta$ as above, given $(\hat{x}, \hat{y}) \in \mathcal{D}$ and for each $a \leq x \leq b$, we have
	\begin{equation}
		\hat{x}, \hat{y}, x, f_\theta \models^p \varphi\ =\ |\entails^p_{\Theory_{x,\hat{y},f_\theta}} \varphi|.
	\end{equation}
\end{theorem}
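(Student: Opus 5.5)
The plan is to reduce the statement to \cref{th:grounded-completeness} by transporting it along Napier's isomorphism. By definition, the left-hand side is $\One \eleq \sem{\varphi}_p = e^{-\sem{\varphi}_p}$, computed in the additive reals $\pAddReals$. The right-hand side is the provability of $\entails \varphi$ in the grounded \pQLLStar-theory $\Theory_{x,\hat{y},f_\theta}$. This theory is exactly $\Theory_v$ for the multiplicative grounding $v(\atom{\hat{y}_i}) = \hat{y}_i$ and $v(\atom{y_i}) = e^{-f_\theta(x)_i}$, since its grounding schema assigns $v(a)$ to $\entails a$ and $v(a)^*$ to $\entails a^\dual$. (The latter is interderivable with the left axiom $a \entails$ of \eqref{rule:grounding-schema} via \DualLeftRule/\DualRightRule.) So \cref{th:grounded-completeness} gives $\validity{\entails_{\Theory_v}\varphi} = \sem{\varphi}_v$, where $\sem{-}_v$ is the unique model into the multiplicative reals $\pMulReals[p]$.

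The remaining task is to show $e^{-\sem{\varphi}_p} = \sem{\varphi}_v$. The map $N(a) = e^{-a}$, from $\pAddReals$ to $\pMulReals[p]$, should be a strict isomorphism of softales, i.e. the inverse of $-\log$. I would check this connective by connective.
\begin{itemize}
\item Tensor: $e^{-(a+b)} = e^{-a}\tensor e^{-b}$, including the infinite conventions $-\infty+\infty=+\infty$, which maps to $\infty\tensor 0 = 0$.
\item Duality: $e^{a} = (e^{-a})^*$.
\item Units: $\One = 0 \mapsto 1$, and $\bot=+\infty \mapsto 0$.
\item Additives: $e^{-(a \pAor[p] b)} = (e^{pa}+e^{pb})^{-1/p} = e^{-a}\pcoadd[p]e^{-b}$. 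This shows the labelling of \cref{table:add-reals} is order-reversed relative to \cref{table:posreals-operations}. I would have to match the $\plor$ of one softale with the correct operation of the other, i.e.\ $\padd$ corresponds to harmonic log-sum-exp, as stated in the text.
\item Enriched order: $a \eleq b = e^{-(b-a)} = e^{-b}/e^{-a} = N a \mulimp N b$.
\end{itemize}
For $p=\infty$, I would use max/min in place of the log-sum-exp computation.

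Given that $N$ is a strict softale map, $N\circ\sem{-}_p$ is a model of $\Theory_v$ into $\pMulReals[p]$. On atoms it gives $N(-\log\hat{y}_i)=\hat{y}_i$ and $N(f_\theta(x)_i)=e^{-f_\theta(x)_i}$, so it agrees with $v$. By the uniqueness part of \cref{prop:freeness} (the universal property of the classifying softale), $N\circ\sem{-}_p=\sem{-}_v$. Hence
\[
e^{-\sem{\varphi}_p} = \sem{\varphi}_v = \validity{\entails^p_{\Theory_{x,\hat{y},f_\theta}}\varphi},
\]
as required.

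I expect the main obstacle to be bookkeeping rather than mathematics. The key risk is the polarity and naming convention: which of $\pAor/\pAand$ interprets $\plor$ must be consistent with $N$ being a strict morphism, otherwise the equality fails by a constant factor (e.g.\ $2^{\pm1/p}$). I also need to handle the edge values $0,\infty$ and the degenerate labels $\hat{y}_i=0$, where $-\log$ gives $+\infty$. I would settle the convention first by checking the case $\varphi = \atom{y_1}\plor\atom{y_2}$ directly against the \pOrRightRule proof in the theory, then run the general argument.
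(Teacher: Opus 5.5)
Your proposal is correct and follows the paper's own route. The paper obtains this theorem directly from \cref{th:grounded-completeness}, using two facts: $\Theory_{x,\hat{y},f_\theta}$ is the grounded theory $\Theory_v$, and Napier's isomorphism $e^{-(-)}$ identifies the additive reals with $\pMulReals[p]$ as softales, so that $\hat{x},\hat{y},x,f_\theta \models^p \varphi = e^{-\sem{\varphi}_p} = \sem{\varphi}_v$. Your plan makes exactly this transport explicit.

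Your resolution of the labelling clash between \eqref{eq:lse-defs} and \cref{table:add-reals} is the right one: interpret $\plor$ in the additive reals by the harmonic log-sum-exp, the Napier image of $\padd[p]$. Under the other reading the equality fails outright, not merely up to a constant factor, since the ratio $\bigl(e^{-f_1}\padd[p]e^{-f_2}\bigr)/\bigl(e^{-f_1}\pcoadd[p]e^{-f_2}\bigr)$ depends on $f_1-f_2$.
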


These results finally allow us to (1) interpret every value of the loss logically, not just the extremal ones, and (2) formally motivate logical loss minimisation as a way to \emph{improve} satisfaction of the specification, since $\sem{\varphi}_p$ decreases as $\hat{x}, \hat{y}, x, f_\theta \models^p \varphi$ increases.

Moreover, the theorem holds for every $p$, thus working both for the soft semantics used at training time and for the hard semantics which can be used at test time to give $\hat{x}, \hat{y}, x, f_\theta \models^\infty \varphi$ a more traditional logical interpretation.
In fact, by \cref{th:isomixmall-cons-and-adeq-over-infty-qll} (and \cref{prop:underlying-softale-is-res-lat}), we can also deduce a traditional, binary, satisfaction result: $\hat{x}, \hat{y}, x, f_\theta \models \varphi$ if and only if $|\entails^p_{\Theory_{x,\hat{y},f_\theta}} \varphi| > 0$.

\section{Conclusions, related and future work}
\label{sec:conc}

We have set out to supply logical meaning to neuro-symbolic methods by rethinking the fundamentals---calculi, proof validity, and satisfaction---in a quantitative way. We hope that our results invite the community to reconceive differentiable logics not as numerical heuristics, but as bona fide logics in their own right.

\begin{figure}[t]
	\centering
	\resizebox{1.0\textwidth}{!}{\begin{tikzpicture}[
  scale=1,
  node distance=2.2cm and 2.2cm,
  node font=\Large,
  box/.style={
    draw,
    rectangle,
    minimum width=3.2cm,
    text width=3.2cm,
    minimum height=1.2cm,
    align=center,
    fill=white
  },
  tool/.style={
    draw,
    rectangle,
    minimum width=2.8cm,
    minimum height=0.9cm,
    align=center,
    fill=white
  },
  arrow/.style={->, thick}
]

\node[box] (surface) {Surface \\ language};
\node[box, above=1.5cm of surface] (core) {Core \\ Language};
\node[box, right=1.3cm of core] (intermediate) {Intermediate \\ ITP Language};

\coordinate (branch) at ($(core) + (-2.65, 0)$);
\draw[] (core) -- (branch);

\node[box, left=1.6cm of core] (training) {Loss Functions \\ for Training};
\draw[arrow] (core) -- (training);


\node[box, below=1.45cm of training] (queries) {VNNLIB Queries \\ for Verification};

\draw[arrow] (branch) |- node(branch2) [pos=0.76] {} (queries);
\coordinate (branch3) at (branch2);

\node[tool, right=1.6cm of intermediate] (isabelle) {Isabelle/HOL \\ code};
\node[tool, below=0.4cm of isabelle] (rocq) {Rocq \\ code};
\node[tool, below=0.4cm of rocq] (agda) {Agda \\ code};
\node[tool, left=0.4cm of agda] (imandra) {Imandra \\ code};

\draw[arrow] (surface) -- node(ttc) [text width=1.2cm,align=center, fill=white, draw]{Type-checker} (core);
\draw[arrow] (core) -- (intermediate);

\draw[arrow] (intermediate) -- (agda);
\draw[arrow] (intermediate) -- (rocq);
\draw[arrow] (intermediate) -- (isabelle);
\draw[arrow] (intermediate) -- (imandra);

\draw[dashed] ($(core) + (2.3,2)$) -- ($(core) + (2.3,-3.9)$);

\draw[dashed] ($(core) + (-2.3,2)$) -- ($(core) + (-2.3,-3.9)$);

\node[above=0.32cm of intermediate, text width=3.5cm, align=center, xshift=2.4cm] (symbolic) {\textbf{\underline{Symbolic world}}};

\node[above=0.5cm of core, text width=3.5cm, align=center, yshift=-0.1cm] (neural) {\textbf{\underline{Interface}}};

\node[above=0.5cm of training, text width=3.5cm, align=center, yshift=-0.1cm] (neural) {\textbf{\underline{Neural world}}};

\draw[double=red, double distance=2pt, {Implies[length=10pt,width=14pt]}-{Implies[length=6pt,width=8pt]}, color=red] (training) -- (queries);

\node[left=6.5cm of ttc, text width=4cm, align=center] (justification) {The correspondance justified by this paper};

\draw[->] (justification) -- ([xshift=-4pt] $(training)!0.5!(queries)$);

\begin{pgfonlayer}{background}
\node[draw=none, fit=(core)(ttc)(surface), fill=pink] (core_outer) {};


\node[draw=none, text width=5cm, fit=(intermediate) (isabelle) (agda) (imandra), fill=green!20] (itp_outer1) {};

\node[draw=none, fit=(queries), fill=yellow!20] (itp_outer1) {};

\node[draw=none, fit=(training), fill=blue!20] (itp_outer1) {};
\end{pgfonlayer}
\end{tikzpicture}}
	\caption{\small{The architecture of \vehicle: a surface language in which neural network specifications are written and type-checked, and then connected to both ``neural'' and ``symbolic'' backends.}}
	\label{fig:vehicle-structure}
\end{figure}

\begin{listing}[t]
\centering
\begin{minipage}[t]{0.5\linewidth}
\begin{minted}[numbers=left, fontsize=\scriptsize, xleftmargin=1.5em, breaklines]{haskell}
@parameter
p : Real

qllAdditive : DifferentiableTensorLogic
qllAdditive =
  { trueElement            = -infinity
  , falseElement           = infinity
  , pointwiseNegation      = \x -> -x
  , pointwiseConjunction   = \{dims} x y ->
      const (1/p) dims *
      log(exp(const p dims * x) +
          exp(const p dims * y))
  , pointwiseDisjunction   = \{dims} x y ->
      const (1/p) dims *
      log(exp(const (-p) dims * x) +
          exp(const (-p) dims * y))
  , pointwiseLessEqualThan = \x y -> x - y
  , pointwiseEqual         = \x y ->
      max (x - y) (y - x)
  , pointwiseNotEqual      = \x y ->
      - max (x - y) (y - x)
  , ...
  }

type Image = Tensor Real [28, 28]
type Label = Index 10

validImage : Image -> Bool
validImage x = forall i j . 0 <= x ! i ! j <= 1
\end{minted}
\end{minipage}%
\begin{minipage}[t]{0.46\linewidth}
\begin{minted}[numbers=left, firstnumber=31, fontsize=\scriptsize, xleftmargin=1.5em, breaklines]{haskell}
@network
classifier : Image -> Tensor Real [10]

advises : Image -> Label -> Bool
advises x i =
  forall j . classifier x ! i >= classifier x ! j

@parameter
epsilon : Real

boundedByEpsilon : Image -> Bool
boundedByEpsilon x =
    forall i j . -epsilon <= x ! i ! j <= epsilon

robustAround : Image -> Label -> Bool
robustAround image label = forall perturbation .
  boundedByEpsilon perturbation and
  validImage (image + perturbation) =>
  advises (image + perturbation) label

@dataset
image : Vector Image n

@dataset
label : Vector Label n

@property
robust : Bool
robust = forall i .
  robustAround (image ! i) (label ! i)
\end{minted}
\end{minipage}
\caption{A \vehicle specification expressing the \ref{eq:robustness} property for MNIST task. It starts by specifying the additive semantics of \QLL (\cref{eq:lse-defs}). Here \mintinline{haskell}{x ! i} means accessing the $i$-th row of the tensor $x$.}
\label{lst:robustness-mnist}
\end{listing}

From a practical perspective, the long-term goal of this work is to give a formal account of why property-driven training succeeds. More precisely, we seek to characterise the relationship between minimising the differentiable loss generated from a logical specification and satisfying that specification after training. Although this correspondence motivates much of the existing literature on differentiable logics, it has largely remained implicit.

\vehicle~\cite{daggittVehicleBridgingEmbedding2025,daggitt2026compositional} provides a concrete illustration of why such a correspondence is valuable. \vehicle unifies property-driven training, neural network verification, and interactive theorem proving through a single dependently typed specification language, as illustrated in \cref{fig:vehicle-structure}. A specification is written and type-checked \emph{once}, before being compiled to loss functions for training, to satisfiability queries for neural network verifiers~\citep{katzMarabouFrameworkVerification2019,wuMarabouVersatileFormal2024} , and to code for interactive theorem provers.
\cref{lst:robustness-mnist} shows a \vehicle specification of the classic robustness property introduced in \cref{sec:problem}. Using additive \pQLL, which is itself expressible in the \vehicle DSL (lines 1-23), this specification can be used to train a neural network and then verify that the resulting network satisfies it. Our results provide the first crucial step towards providing a formal justification for why success in the former compilation target should assist success in the latter. The red arrow in \cref{fig:vehicle-structure} is a place where \cref{th:nnv-adequacy} can be applied to give a such a justification.

However, the present work is restricted to the propositional fragment of the language. Consequently, it naturally handles quantification over finite index sets, such as the quantifiers on Lines 29, 36, 43, 59 of \cref{lst:robustness-mnist}, which are simply syntactic sugar for finite conjunctions. In contrast, the quantifier on Line~46 ranges over the continuous space $\mathbb{R}^{28\times28}$ and therefore lies beyond the scope of the current theory.
The natural next step is therefore to extend \pQLL with first-order quantification. We envisage \emph{soft} quantifiers interpreted as integration over probability spaces, following the ideas of~\cite{capucciQuantifiersQuantitativeReasoning2024}. In the longer term, we view \pQLL as the foundation of a quantitative metatheory for increasingly expressive neuro-symbolic specification languages, ultimately encompassing the full range of compilation targets supported by systems such as \vehicle.
Initial evidence suggests that such an approach will have practical as well as theoretical benefits: a recent, yet to be reviewed, empirical study~\cite{flinkowQuantitativeLinearLogic2026} has shown that deploying \pQLL connectives as part of standard neuro-symbolic Python libraries that implement the optimisation objective of~\cref{eq:nesy-objective} outperforms other differentiable logics as measured by the verifiability of the trained network.

Even at the propositional level, many questions remain open.
From a linear logic perspective, we would like to study normal forms for \pQLL proofs (`quantitative proof nets'), understand how the exponential modalities fit into the picture, and probe potential connections with differential linear logic \cite{ehrhard}.
Closer to \pQLL itself, the most pressing puzzle is the status of \MixStarRule as a logical rule, which hints at a duality theory at the sequent level that we have yet to uncover; questions of completeness likewise remain, alongside the further exploration of links with fuzzy logics and hypersequent calculi, and a clarification of the peculiar `additive universal properties' we only touched upon in \cref{sec:softale}.

Ultimately, the programme we envision requires work on all three sides of the logic-computation-categories trinity.
Indeed, while Lawvere tackled the latter aspect and this work begins to approach the first, we are unaware of a computational reading of these ideas.

\bibliographystyle{ACM-Reference-Format}
\bibliography{popl27.bib}

@article{affeldt2026foundationdifferentiablelogicsusing,
      title={A Foundation for Differentiable Logics using Dependent Type Theory},
      author={Reynald Affeldt and Alessandro Bruni and Ekaterina Komendantskaya and Natalia Ślusarz and Kathrin Stark},
      year={2026},
      journal={Journal of Automated Reasoning},
      url={https://arxiv.org/abs/2602.23878},
}

@misc{badreddineLogLTNDifferentiableFuzzy2023,
  title = {{{logLTN}}: {{Differentiable Fuzzy Logic}} in the {{Logarithm Space}}},
  shorttitle = {{{logLTN}}},
  author = {Badreddine, Samy and Serafini, Luciano and Spranger, Michael},
  year = 2023,
  month = jun,
  number = {arXiv:2306.14546},
  eprint = {2306.14546},
  primaryclass = {cs.AI},
  publisher = {arXiv},
  doi = {10.48550/arXiv.2306.14546},
  url = {http://arxiv.org/abs/2306.14546},
  urldate = {2026-07-10},
  archiveprefix = {arXiv},
}

@misc{mosbachLogitPairingMethods2019,
  title         = {Logit {{Pairing Methods Can Fool Gradient-Based Attacks}}},
  author        = {Mosbach, Marius and Andriushchenko, Maksym and Trost, Thomas and Hein, Matthias and Klakow, Dietrich},
  year          = {2019},
  month         = mar,
  publisher     = {arXiv},
  number        = {arXiv:1810.12042},
  doi           = {10.48550/arXiv.1810.12042},
  urldate       = {2025-01-14},
  eprint        = {1810.12042},
  primaryclass  = {cs},
  archiveprefix = {arXiv}
}

@article{croceScalingRandomizedGradientFree2020,
  title   = {Scaling up the {{Randomized Gradient-Free Adversarial Attack Reveals Overestimation}} of {{Robustness Using Established Attacks}}},
  author  = {Croce, Francesco and Rauber, Jonas and Hein, Matthias},
  year    = {2020},
  month   = apr,
  journal = {International Journal of Computer Vision},
  volume  = {128},
  number  = {4},
  pages   = {1028--1046},
  doi     = {10.1007/s11263-019-01213-0},
  issn    = {1573-1405},
  urldate = {2025-01-14},
  langid  = {english}
}

@article{ehrhard,
  title = {An Introduction to Differential Linear Logic: Proof-Nets, Models and Antiderivatives},
  shorttitle = {An Introduction to Differential Linear Logic},
  author = {Ehrhard, Thomas},
  year = 2018,
  month = aug,
  journal = {Mathematical Structures in Computer Science},
  volume = {28},
  number = {7},
  pages = {995--1060},
  issn = {0960-1295, 1469-8072},
  doi = {10.1017/S0960129516000372},
  url = {https://www.cambridge.org/core/product/identifier/S0960129516000372/type/journal_article},
  urldate = {2026-07-08},
  copyright = {https://www.cambridge.org/core/terms},
  langid = {english},
}

@misc{casadioNeuralNetworkRobustness2022,
  title = {Neural {{Network Robustness}} as a {{Verification Property}}: {{A Principled Case Study}}},
  shorttitle = {Neural {{Network Robustness}} as a {{Verification Property}}},
  author = {Casadio, Marco and Komendantskaya, Ekaterina and Daggitt, Matthew L. and Kokke, Wen and Katz, Guy and Amir, Guy and Refaeli, Idan},
  year = 2022,
  month = jul,
  number = {arXiv:2104.01396},
  eprint = {2104.01396},
  primaryclass = {cs},
  publisher = {arXiv},
  doi = {10.48550/arXiv.2104.01396},
  url = {http://arxiv.org/abs/2104.01396},
  urldate = {2026-04-30},
  archiveprefix = {arXiv}
}

@article{aubrunMultiplicativePropertyCharacterizes2011,
  title         = {The Multiplicative Property Characterizes $\ell_p$ and ${{L}}_p$ Norms},
  author        = {Aubrun, Guillaume and Nechita, Ion},
  year          = 2011,
  month         = dec,
  journal       = {Confluentes Mathematici},
  volume        = {03},
  number        = {04},
  eprint        = {1102.2618},
  primaryclass  = {math},
  pages         = {637--647},
  issn          = {1793-7442, 1793-7434},
  doi           = {10.1142/S1793744211000485},
  url           = {http://arxiv.org/abs/1102.2618},
  urldate       = {2024-08-19},
}

@article{avronConstructiveAnalysisRM1987,
  title     = {A Constructive Analysis of {{RM}}},
  author    = {Avron, Arnon},
  year      = 1987,
  journal   = {The Journal of symbolic logic},
  volume    = {52},
  number    = {4},
  pages     = {939--951},
  publisher = {Cambridge University Press},
  url       = {https://www.cambridge.org/core/journals/journal-of-symbolic-logic/article/constructive-analysis-of-rm/342F0DE51D0878BEC48C03E65FBA03BC},
  urldate   = {2026-01-16}
}

@article{BaazCF03,
  author    = {Matthias Baaz and
               Agata Ciabattoni and
               Christian G. Ferm{\"{u}}ller},
  title     = {Hypersequent Calculi for G{\"{o}}del Logics - a Survey},
  journal   = {J. Log. Comput.},
  volume    = {13},
  number    = {6},
  pages     = {835--861},
  year      = {2003},
  url       = {https://doi.org/10.1093/logcom/13.6.835},
  doi       = {10.1093/LOGCOM/13.6.835},
  bibsource = {dblp computer science bibliography, https://dblp.org}
}

@inproceedings{bacciInductionRecursionPrinciples2025,
  title         = {Induction and {{Recursion Principles}} in a {{Higher-Order Quantitative Logic}} for {{Probability}}},
  author        = {Bacci, Giorgio and M{\o}gelberg, Rasmus Ejlers},
  booktitle     = {Proceedings of the 41st Annual ACM/IEEE Symposium on Logic in Computer Science (LICS 2026)},
  year          = 2026,
  eprint        = {2501.18275},
  archiveprefix = {arXiv},
  url           = {https://arxiv.org/abs/2501.18275},
}

@misc{bacciPolynomialLawvereLogic2024,
  title         = {Polynomial {{Lawvere Logic}}},
  author        = {Bacci, Giorgio and Mardare, Radu and Panangaden, Prakash and Plotkin, Gordon},
  year          = 2024,
  month         = oct,
  number        = {arXiv:2402.03543},
  eprint        = {2402.03543},
  publisher     = {arXiv},
  doi           = {10.48550/arXiv.2402.03543},
  url           = {http://arxiv.org/abs/2402.03543},
  urldate       = {2024-11-01},
}

@misc{bacciPropositionalLogicsLawvere2023,
  title         = {Propositional {{Logics}} for the {{Lawvere Quantale}}},
  author        = {Bacci, Giorgio and Mardare, Radu and Panangaden, Prakash and Plotkin, Gordon},
  year          = 2023,
  month         = feb,
  number        = {arXiv:2302.01224},
  eprint        = {2302.01224},
  primaryclass  = {cs},
  publisher     = {arXiv},
  doi           = {10.48550/arXiv.2302.01224},
  url           = {http://arxiv.org/abs/2302.01224},
  urldate       = {2023-03-01},
}

@misc{baezWhatEntropy2024,
  title         = {What Is {{Entropy}}?},
  author        = {Baez, John C.},
  year          = 2024,
  month         = sep,
  number        = {arXiv:2409.09232},
  eprint        = {2409.09232},
  primaryclass  = {cond-mat},
  publisher     = {arXiv},
  doi           = {10.48550/arXiv.2409.09232},
  url           = {http://arxiv.org/abs/2409.09232},
  urldate       = {2026-01-23},
}

@book{barrAutonomousCategories1979,
  title     = {$*$-{{Autonomous Categories}}},
  author    = {Barr, Michael},
  year      = 1979,
  series    = {Lecture {{Notes}} in {{Mathematics}}},
  volume    = {752},
  publisher = {Springer},
  address   = {Berlin, Heidelberg},
  doi       = {10.1007/BFb0064579},
  url       = {http://link.springer.com/10.1007/BFb0064579},
  urldate   = {2024-03-20},
  isbn      = {978-3-540-09563-7 978-3-540-34850-4}
}

@incollection{Buss98,
  title     = {An Introduction to Proof Theory},
  booktitle = {Handbook of Proof Theory},
  author    = {Buss, Samuel R.},
  editor    = {Buss, Samuel R.},
  year      = {1998},
  pages     = {1--78},
  publisher = {Elsevier},
  location  = {Amsterdam}
}

@misc{capucciQuantifiersQuantitativeReasoning2024,
  title         = {On {{Quantifiers}} for {{Quantitative Reasoning}}},
  author        = {Capucci, Matteo},
  year          = 2024,
  month         = jun,
  number        = {arXiv:2406.04936},
  eprint        = {2406.04936},
  primaryclass  = {cs, math},
  publisher     = {arXiv},
  doi           = {10.48550/arXiv.2406.04936},
  url           = {http://arxiv.org/abs/2406.04936},
  urldate       = {2024-06-26},
  copyright     = {Creative Commons Attribution-ShareAlike 4.0 International Licence (CC-BY-SA)},
}

@article{CiabattoniG18,
  author    = {Agata Ciabattoni and
               Francesco A. Genco},
  title     = {Hypersequents and Systems of Rules: Embeddings and Applications},
  journal   = {{ACM} Trans. Comput. Log.},
  volume    = {19},
  number    = {2},
  pages     = {11:1--11:27},
  year      = {2018},
  url       = {https://doi.org/10.1145/3180075},
  doi       = {10.1145/3180075},
  bibsource = {dblp computer science bibliography, https://dblp.org}
}

@article{CiabattoniLR21,
  author    = {Agata Ciabattoni and
               Timo Lang and
               Revantha Ramanayake},
  title     = {Bounded-analytic Sequent Calculi and Embeddings for Hypersequent Logics},
  journal   = {J. Symb. Log.},
  volume    = {86},
  number    = {2},
  pages     = {635--668},
  year      = {2021},
  url       = {https://doi.org/10.1017/jsl.2021.42},
  doi       = {10.1017/JSL.2021.42},
  bibsource = {dblp computer science bibliography, https://dblp.org}
}

@book{cintulaHandbookMathematicalFuzzy2011,
  title     = {Handbook of Mathematical Fuzzy Logic},
  author    = {Cintula, Petr and Hajek, Petr and Noguera, Carles},
  year      = 2011,
  series    = {Studies in Logic},
  volume    = {1},
  publisher = {College Publications},
  address   = {London, UK},
  isbn      = {978-1-84890-039-4},
  langid    = {english}
}

@book{cintulaHandbookMathematicalFuzzy2011a,
  title     = {Handbook of Mathematical Fuzzy Logic},
  author    = {Cintula, Petr and Hajek, Petr and Noguera, Carles},
  year      = 2011,
  series    = {Studies in Logic},
  volume    = {2},
  publisher = {College Publications},
  address   = {London, UK},
  isbn      = {978-1-84890-054-7},
  langid    = {english}
}

@article{cockettLinearlyDistributiveFunctors1999,
  title    = {Linearly Distributive Functors},
  author   = {Cockett, J. R. B. and Seely, R. A. G.},
  year     = 1999,
  month    = nov,
  journal  = {Journal of Pure and Applied Algebra},
  volume   = {143},
  number   = {1},
  pages    = {155--203},
  issn     = {0022-4049},
  doi      = {10.1016/S0022-4049(98)00110-8},
  url      = {https://www.sciencedirect.com/science/article/pii/S0022404998001108},
  urldate  = {2025-06-17},
}

@article{cockettProofTheoryFull1997,
  title   = {Proof Theory for Full Intuitionistic Linear Logic, Bilinear Logic, and {{MIX}} Categories.},
  author  = {Cockett, J. R. B. and Seely, R. A. G.},
  year    = 1997,
  journal = {Theory and Applications of Categories},
  volume  = {3},
  pages   = {85--131},
  issn    = {1201-561X},
  url     = {http://www.tac.mta.ca/tac/volumes/1997/n5/n5.pdf},
  urldate = {2025-08-18},
  langid  = {english}
}

@inproceedings{CordeiroDGIJKKLMSW25,
  author    = {Lucas C. Cordeiro and
               Matthew L. Daggitt and
               Julien Girard{-}Satabin and
               Omri Isac and
               Taylor T. Johnson and
               Guy Katz and
               Ekaterina Komendantskaya and
               Augustin Lemesle and
               Edoardo Manino and
               Artjoms Sinkarovs and
               Haoze Wu},
  title     = {Neural Network Verification is a Programming Language Challenge},
  booktitle = {34th European Symposium on Programming ({ESOP} 2025), Hamilton, ON, Canada, May 5--8, 2025},
  series    = {Lecture Notes in Computer Science},
  volume    = {15694},
  pages     = {206--235},
  publisher = {Springer},
  year      = {2025},
  doi       = {10.1007/978-3-031-91118-7\_9}
}

@article{daggittVehicleBridgingEmbedding2025,
  title = {Vehicle: {{Bridging}} the {{Embedding Gap}} in the {{Verification}} of {{Neuro-Symbolic Programs}}},
  shorttitle = {Vehicle},
  author = {Daggitt, Matthew L. and Kokke, Wen and Atkey, Robert and Komendantskaya, Ekaterina and Slusarz, Natalia and Arnaboldi, Luca},
  editor = {Fern{\'a}ndez, Maribel},
  year = 2025,
  journal = {LIPIcs, Volume 337, FSCD 2025},
  volume = {337},
  pages = {2:1-2:20},
  publisher = {Schloss Dagstuhl -- Leibniz-Zentrum f\"ur Informatik},
  issn = {1868-8969},
  doi = {10.4230/LIPICS.FSCD.2025.2},
  url = {https://drops.dagstuhl.de/entities/document/10.4230/LIPIcs.FSCD.2025.2},
  urldate = {2026-05-14},
  copyright = {Creative Commons Attribution 4.0 International license, info:eu-repo/semantics/openAccess},
  isbn = {9783959773744},
  langid = {english}
}

@incollection{dayClosedCategoriesFunctors1970,
  title     = {On Closed Categories of Functors},
  booktitle = {Reports of the {{Midwest Category Seminar IV}}},
  author    = {Day, Brian},
  editor    = {MacLane, S. and Applegate, H. and Barr, M. and Day, B. and Dubuc, E. and {Phreilambud} and Pultr, A. and Street, R. and Tierney, M. and Swierczkowski, S.},
  year      = 1970,
  volume    = {137},
  pages     = {1--38},
  publisher = {Springer Berlin Heidelberg},
  address   = {Berlin, Heidelberg},
  doi       = {10.1007/BFb0060438},
  url       = {http://link.springer.com/10.1007/BFb0060438},
  urldate   = {2026-01-22},
  copyright = {http://www.springer.com/tdm},
  isbn      = {978-3-540-04926-5 978-3-540-36292-0}
}

@book{definettiTheoryProbabilityCritical2017,
  title      = {Theory of Probability: {{A}} Critical Introductory Treatment},
  shorttitle = {Theory of Probability},
  author     = {De Finetti, Bruno},
  year       = 2017,
  publisher  = {John Wiley \& Sons},
  urldate    = {2026-01-20}
}

@inproceedings{fischer2019dl2,
  title      = {{DL2}: Training and Querying Neural Networks with Logic},
  shorttitle = {{DL2}},
  author     = {Fischer, Marc and Balunovic, Mislav and {Drachsler-Cohen}, Dana and Gehr, Timon and Zhang, Ce and Vechev, Martin},
  editor     = {Chaudhuri, Kamalika and Salakhutdinov, Ruslan},
  booktitle  = {Proceedings of the 36th International Conference on Machine Learning (ICML 2019)},
  series     = {Proceedings of Machine Learning Research},
  volume     = {97},
  pages      = {1931--1941},
  publisher  = {PMLR},
  year       = {2019},
  month      = jun,
  issn       = {2640-3498},
  url        = {https://proceedings.mlr.press/v97/fischer19a.html},
  urldate    = {2025-12-02},
  langid     = {english},
}

@article{flinkow2025comparing,
  title     = {Comparing differentiable logics for learning with logical constraints},
  author    = {Flinkow, Thomas and Pearlmutter, Barak A and Monahan, Rosemary},
  journal   = {Science of Computer Programming},
  volume    = {244},
  pages     = {103280},
  year      = {2025},
  publisher = {Elsevier}
}

@misc{flinkowQuantitativeLinearLogic2026,
  title = {Quantitative {{Linear Logic}} for {{Neuro-Symbolic Learning}} and {{Verification}}},
  author = {Flinkow, Thomas and Komendantskaya, Ekaterina and Capucci, Matteo and Monahan, Rosemary},
  year = 2026,
  month = may,
  publisher = {arXiv},
  doi = {10.48550/ARXIV.2605.13845},
  url = {https://arxiv.org/abs/2605.13845},
  urldate = {2026-05-14},
  copyright = {Creative Commons Attribution 4.0 International}
}

@book{galatos2007residuated,
  author     = {Galatos, Nikolaos and Jipsen, Peter and Kowalski, Tomasz and Ono, Hiroakira},
  title      = {Residuated Lattices: An Algebraic Glimpse at Substructural Logics, Volume 151},
  series     = {Studies in Logic and the Foundations of Mathematics},
  year       = {2007},
  isbn       = {0444521410},
  publisher  = {Elsevier Science},
  optaddress = {San Diego, CA, USA},
  optedition = {1st}
}

@inproceedings{Gir95,
  author    = {Girard, Jean-Yves},
  title     = {Linear logic: its syntax and semantics},
  year      = {1995},
  isbn      = {0521559618},
  publisher = {Cambridge University Press},
  address   = {USA},
  booktitle = {Proceedings of the Workshop on Advances in Linear Logic},
  pages     = {1–42},
  numpages  = {42}
}

@article{grandisCategoriesNormsWeights2007,
  title    = {Categories, Norms and Weights},
  author   = {Grandis, Marco},
  year     = 2007,
  journal  = {Journal of Homotopy and Related Structures},
  volume   = {2},
  langid   = {english},
}

@inproceedings{ijcai2022p767,
  title     = {Deep Learning with Logical Constraints},
  author    = {Giunchiglia, Eleonora and Stoian, Mihaela Catalina and Lukasiewicz, Thomas},
  booktitle = {31st International Joint Conference on Artificial Intelligence ({IJCAI-22})},
  publisher = {International Joint Conferences on Artificial Intelligence Organization},
  opteditor = {Lud De Raedt},
  pages     = {5478--5485},
  year      = {2022},
  month     = {7},
  note      = {Survey Track},
  doi       = {10.24963/ijcai.2022/767}
}

@inproceedings{katz2019marabou,
  title        = {The marabou framework for verification and analysis of deep neural networks},
  author       = {Katz, Guy and Huang, Derek A and Ibeling, Duligur and Julian, Kyle and Lazarus, Christopher and Lim, Rachel and Shah, Parth and Thakoor, Shantanu and Wu, Haoze and Zelji{\'c}, Aleksandar and others},
  booktitle    = {Computer Aided Verification: 31st International Conference, CAV 2019, New York City, NY, USA, July 15-18, 2019, Proceedings, Part I 31},
  pages        = {443--452},
  year         = {2019},
  organization = {Springer}
}

@book{kellyBasicConceptsEnriched1982,
  title     = {Basic Concepts of Enriched Category Theory},
  author    = {Kelly, Max},
  year      = 1982,
  volume    = {64},
  publisher = {CUP Archive}
}

@article{lawvereMetricSpacesGeneralized1973,
  title     = {Metric Spaces, Generalized Logic, and Closed Categories},
  author    = {Lawvere, F. William},
  year      = 1973,
  journal   = {Rendiconti del seminario mat\'ematico e fisico di Milano},
  volume    = {43},
  pages     = {135--166},
  publisher = {Springer}
}

@inproceedings{ldl-coq,
  author    = {Reynald Affeldt and Alessandro Bruni and
               Ekaterina Komendantskaya and Natalia Slusarz and
               Kathrin Stark},
  title     = {Taming Differentiable Logics with Coq Formalisation},
  booktitle = {15th International Conference on Interactive Theorem Proving ({ITP}
               2024), September 9--14, 2024, Tbilisi, Georgia},
  series    = {LIPIcs},
  volume    = {309},
  pages     = {4:1--4:19},
  publisher = {Schloss Dagstuhl - Leibniz-Zentrum f{\"{u}}r Informatik},
  year      = {2024},
  opturl    = {https://doi.org/10.4230/LIPIcs.ITP.2024.4},
  doi       = {10.4230/LIPICS.ITP.2024.4}
}

@article{leinsterMultiplicativeCharacterizationPower2012,
  title         = {A Multiplicative Characterization of the Power Means},
  author        = {Leinster, Tom},
  year          = 2012,
  month         = feb,
  journal       = {Bulletin of the London Mathematical Society},
  volume        = {44},
  number        = {1},
  eprint        = {1103.2574},
  primaryclass  = {cs, math},
  pages         = {106--112},
  issn          = {00246093},
  doi           = {10.1112/blms/bdr073},
  url           = {http://arxiv.org/abs/1103.2574},
  urldate       = {2024-08-19},
}

@misc{LLHandbook,
  author       = {The LLHandbook project},
  title        = {Handbook of Linear Logic},
  howpublished = {online draft},
  year         = {2023}
}

@book{lukasiewicz1920three,
  title     = {O logice trójwartościowej (in Polish).
               English translation: On Three-Valued Logic, in Borkowski, L.(ed.) 1970. Jan
               Łukasiewicz: Selected Works, Amsterdam: North Holland},
  author    = {Łukasiewicz, J},
  year      = {1920},
  publisher = { Ruch Filozoficzny},
  isbn      = {0-7204-2252-3},
  pages     = {87–-88}
}

@inproceedings{mardareQuantitativeAlgebraicReasoning2016,
  title     = {Quantitative {{Algebraic Reasoning}}},
  booktitle = {Proceedings of the 31st {{Annual ACM}}/{{IEEE Symposium}} on {{Logic}} in {{Computer Science}}},
  author    = {Mardare, Radu and Panangaden, Prakash and Plotkin, Gordon},
  year      = 2016,
  month     = jul,
  pages     = {700--709},
  publisher = {ACM},
  address   = {New York NY USA},
  doi       = {10.1145/2933575.2934518},
  url       = {https://dl.acm.org/doi/10.1145/2933575.2934518},
  urldate   = {2023-04-30},
  isbn      = {978-1-4503-4391-6},
  langid    = {english}
}

@article{martin-lofMeaningsLogicalConstants1996,
  title     = {On the Meanings of the Logical Constants and the Justifications of the Logical Laws},
  author    = {{Martin-L{\"o}f}, Per},
  year      = 1996,
  journal   = {Nordic Journal of Philosophical Logic},
  volume    = {1},
  number    = {1},
  pages     = {11--60},
  publisher = {Scandinavian University Press}
}

@inproceedings{mellies:categorical-sem-LL,
  title     = {Categorical semantics of linear logic},
  volume    = {27},
  booktitle = {Interactive models of computation and program behaviour, panoramas et synthèses},
  publisher = {Société Mathématique de France},
  author    = {Melliès, Paul-André},
  year      = {2009},
  pages     = {1--196}
}

@book{metcalfeProofTheoryFuzzy2009,
  title     = {Proof {{Theory}} for {{Fuzzy Logics}}},
  author    = {Metcalfe, George and Olivetti, Nicola and Gabbay, Dov},
  editor    = {Gabbay, Dov M. and Barwise, Jon},
  year      = 2009,
  series    = {Applied {{Logic Series}}},
  volume    = {36},
  publisher = {Springer Netherlands},
  address   = {Dordrecht},
  doi       = {10.1007/978-1-4020-9409-5},
  url       = {http://link.springer.com/10.1007/978-1-4020-9409-5},
  urldate   = {2026-01-09},
  copyright = {http://www.springer.com/tdm},
  isbn      = {978-1-4020-9408-8 978-1-4020-9409-5},
  langid    = {english}
}

@book{mitrinovicAnalyticInequalities1970,
  title     = {Analytic Inequalities},
  author    = {Mitrinovic, D. S.},
  year      = 1970,
  publisher = {Springer-Verlag}
}

@article{pottinger1983uniform,
  title   = {Uniform, cut-free formulations of T, S4 and S5},
  author  = {Pottinger, Garrel},
  journal = {Journal of Symbolic Logic},
  volume  = {48},
  number  = {3},
  pages   = {900},
  year    = {1983}
}

@article{yaacovModelTheoryMetric2008,
  title = {Model Theory for Metric Structures},
  author = {Yaacov, I. Ben and Berenstein, Alexander and Henson, C. Ward and Usvyatsov, Alexander},
  year = 2008,
  journal = {London Mathematical Society Lecture Note Series},
  volume = {350},
  pages = {315},
  publisher = {Cambridge University Press},
  url = {http://math.univ-lyon1.fr/homes-www/begnac/articles/mtfms.pdf},
  urldate = {2026-07-03}
}

@article{van2022analyzing,
  title         = {Analyzing Differentiable Fuzzy Logic Operators},
  author        = {{van Krieken}, Emile and Acar, Erman and {van Harmelen}, Frank},
  journal       = {Artificial Intelligence},
  volume        = {302},
  pages         = {103602},
  year          = {2022},
  month         = jan,
  doi           = {10.1016/j.artint.2021.103602},
  issn          = {0004-3702},
  urldate       = {2023-03-30},
  eprint        = {2002.06100},
  primaryclass  = {cs.AI},
  archiveprefix = {arXiv},
}

@inproceedings{varnaiRobustnessMetricsLearning2020,
  title     = {On {{Robustness Metrics}} for {{Learning STL Tasks}}},
  booktitle = {2020 {{American Control Conference}} ({{ACC}})},
  author    = {Varnai, Peter and Dimarogonas, Dimos V.},
  year      = 2020,
  month     = jul,
  pages     = {5394--5399},
  issn      = {2378-5861},
  doi       = {10.23919/ACC45564.2020.9147692},
  url       = {https://ieeexplore.ieee.org/document/9147692/citations#citations},
  urldate   = {2024-04-16},
}

@article{yagerGeneralClassFuzzy1980,
  title        = {On a General Class of Fuzzy Connectives},
  author       = {Yager, Ronald R.},
  year         = {1980},
  journal      = {Fuzzy Sets and Systems},
  shortjournal = {Fuzzy Sets and Systems},
  volume       = {4},
  number       = {3},
  pages        = {235--242},
  issn         = {0165-0114},
  doi          = {10.1016/0165-0114(80)90013-5},
  url          = {https://www.sciencedirect.com/science/article/pii/0165011480900135},
  urldate      = {2025-10-25}
}

@article{zadehFuzzySets1965,
  title     = {Fuzzy Sets},
  author    = {Zadeh, Lotfi Asker},
  year      = 1965,
  journal   = {Information and control},
  volume    = {8},
  number    = {3},
  pages     = {338--353},
  publisher = {Elsevier},
  url       = {https://www.sciencedirect.com/science/article/pii/S001999586590241X},
  urldate   = {2024-05-16}
}

@article{daggitt2026compositional,
  title     = {Compositional Neural-Cyber-Physical System Verification in the Interactive Theorem Prover of Your Choice},
  author    = {Daggitt, Matthew L. and Komendantskaya, Ekaterina and Sirman, Alistair and Bruni, Alessandro and Teuber, Samuel and Smart, Josh and Passmore, Grant},
  year      = 2026,
  journal   = {Proceedings of the ACM on Programming Languages},
  volume    = {10},
  number    = {ICFP},
  note      = {To appear}
}

@inproceedings{katzMarabouFrameworkVerification2019,
  title     = {The {{Marabou Framework}} for {{Verification}} and {{Analysis}} of {{Deep Neural Networks}}},
  author    = {Katz, Guy and Huang, Derek A. and Ibeling, Duligur and Julian, Kyle and Lazarus, Christopher and Lim, Rachel and Shah, Parth and Thakoor, Shantanu and Wu, Haoze and Zelji{\'c}, Aleksandar and Dill, David L. and Kochenderfer, Mykel J. and Barrett, Clark},
  year      = {2019},
  booktitle = {Computer {{Aided Verification}}},
  publisher = {Springer International Publishing},
  address   = {Cham},
  series    = {Lecture {{Notes}} in {{Computer Science}}},
  pages     = {443--452},
  doi       = {10.1007/978-3-030-25540-4_26},
  isbn      = {978-3-030-25540-4},
  editor    = {Dillig, Isil and Tasiran, Serdar},
  langid    = {english}
}

@misc{wuMarabouVersatileFormal2024,
  title         = {Marabou 2.0: {{A Versatile Formal Analyzer}} of {{Neural Networks}}},
  shorttitle    = {Marabou 2.0},
  author        = {Wu, Haoze and Isac, Omri and Zelji{\'c}, Aleksandar and Tagomori, Teruhiro and Daggitt, Matthew and Kokke, Wen and Refaeli, Idan and Amir, Guy and Julian, Kyle and Bassan, Shahaf and Huang, Pei and Lahav, Ori and Wu, Min and Zhang, Min and Komendantskaya, Ekaterina and Katz, Guy and Barrett, Clark},
  year          = {2024},
  month         = may,
  publisher     = {arXiv},
  number        = {arXiv:2401.14461},
  doi           = {10.48550/arXiv.2401.14461},
  urldate       = {2024-08-28},
  eprint        = {2401.14461},
  primaryclass  = {cs},
  archiveprefix = {arXiv}
}

@inproceedings{yuviler2025enhancing,
  title     = {Enhancing Neural Network Robustness via Synthesis of Repair Programs},
  author    = {Yuviler, Tom and Drachsler-Cohen, Dana},
  year      = 2025,
  booktitle = {Static Analysis (SAS 2025)},
  editor    = {Oh, Hakjoo and Sui, Yulei},
  series    = {Lecture Notes in Computer Science},
  volume    = {16100},
  pages     = {221--248},
  publisher = {Springer},
  doi       = {10.1007/978-3-032-07106-4_10},
  url       = {https://doi.org/10.1007/978-3-032-07106-4_10}
}

@article{Bailitis2026,
	title = {{Towards Quantitative Logics in Rocq}},
	author = {Bailitis, Janis and Affeldt, Reynald and Bruni, Alessandro and Coltellacci, Alessio and Komendantskaya, Ekaterina and Stark, Kathrin},
	year	= {2026},
	city = {Lisbon, Portugal},
	journal = {The Rocqshop}
}

@article{Laurent26,
  author       = {Laurent, Olivier},
  title        = {{YALLA: Yet Another Deep Embedding of Linear Logic in Rocq}},
  journal      = {Journal of Automated Reasoning},
  volume       = {70},
  number       = {1},
  pages        = {9},
  year         = {2026},
  doi          = {10.1007/S10817-026-09755-Y}}

@article{badreddineLogicTensorNetworks2022,
  title = {Logic {{Tensor Networks}}},
  author = {Badreddine, Samy and d'Avila Garcez, Artur and Serafini, Luciano and Spranger, Michael},
  year = 2022,
  month = feb,
  journal = {Artificial Intelligence},
  volume = {303},
  eprint = {2012.13635},
  primaryclass = {cs.AI},
  pages = {103649},
  issn = {00043702},
  doi = {10.1016/j.artint.2021.103649},
  url = {http://arxiv.org/abs/2012.13635},
  urldate = {2026-07-02},
  archiveprefix = {arXiv}
}

@article{liScallopLanguageNeurosymbolic2023,
  title      = {Scallop: A Language for Neurosymbolic Programming},
  shorttitle = {Scallop},
  author     = {Li, Ziyang and Huang, Jiani and Naik, Mayur},
  journal    = {Proceedings of the ACM on Programming Languages},
  volume     = {7},
  number     = {PLDI},
  articleno  = {166},
  pages      = {1463--1487},
  year       = {2023},
  month      = jun,
  publisher  = {Association for Computing Machinery},
  doi        = {10.1145/3591280},
  url        = {https://doi.org/10.1145/3591280},
  langid     = {english},
}

@inproceedings{manhaeveDeepProbLogNeuralProbabilistic2018,
  title      = {{DeepProbLog}: Neural Probabilistic Logic Programming},
  shorttitle = {DeepProbLog},
  author     = {Manhaeve, Robin and Duman{\v{c}}i{\'{c}}, Sebastijan and Kimmig, Angelika and Demeester, Thomas and De Raedt, Luc},
  booktitle  = {Advances in Neural Information Processing Systems 31 (NeurIPS 2018)},
  editor     = {Bengio, S. and Wallach, H. and Larochelle, H. and Grauman, K. and Cesa-Bianchi, N. and Garnett, R.},
  series     = {Advances in Neural Information Processing Systems},
  volume     = {31},
  pages      = {3753--3763},
  publisher  = {Curran Associates, Inc.},
  year       = {2018},
  url        = {https://proceedings.neurips.cc/paper_files/paper/2018/hash/dc5d637ed5e62c36ecb73b654b05ba2a-Abstract.html},
  langid     = {english},
}

\pagebreak
\appendix

\section{Algebraic lemmata}
\label[appendix]{app:algebra}



\begin{lemma}[Arithmetic of generalized fractions]
\label{lemma:simp}
	For all $a,b,c,d \in \PosReals$:
	\begin{enumerate}
		\item $1 \leq a \mulimp a$,
		\item $(b \mulimp a) \tensor (c \mulimp b) \leq (c \mulimp a)$,
		\item $(b \mulimp a) \tensor (d \mulimp c) \leq (b \tensor d) \mulimp (a \tensor c)$,
		\item $a \mulimp (b \mulimp c) = (a \tensor b) \mulimp c$.
	\end{enumerate}
\end{lemma}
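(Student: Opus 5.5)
We derive all four items from the residuation property $x \leq y \mulimp z \iff x \tensor y \leq z$, together with associativity, commutativity and unitality of $\tensor$ on $\PosReals$. We do not compute with the explicit table for $\mulimp$. Here $b \mulimp a$ is read as the generalized fraction $a/b$ (the paper's abuse of notation). The point of this choice is that the extremal values $0$ and $\infty$, where the conventions $0 \tensor \infty = 0$ and $0 \cotensor \infty = \infty$ make naive fraction arithmetic unsafe, are handled uniformly by the adjunction. No separate case split is needed.

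\textbf{(1) and (2).} For (1), residuation turns $1 \leq a \mulimp a$ into $1 \tensor a \leq a$, which holds because $1$ is the unit. For (2), residuation turns the claim into
\[
(b \mulimp a) \tensor (c \mulimp b) \tensor c \leq a .
\]
We apply the counit $(c \mulimp b) \tensor c \leq b$, which is the instance of residuation with $x = c \mulimp b$. This requires monotonicity of $\tensor$, which is immediate from its table. We then apply the counit $(b \mulimp a) \tensor b \leq a$.

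\textbf{(3).} The argument is the same. We reduce the claim to
\[
(b \mulimp a) \tensor (d \mulimp c) \tensor b \tensor d \leq a \tensor c ,
\]
rearrange the left-hand side by commutativity and associativity, and apply the two counits in parallel, using monotonicity of $\tensor$ in each argument.

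\textbf{(4).} We prove equality by Yoneda over the adjunction. For every $x$, the chain
\[
x \leq a \mulimp (b \mulimp c) \iff x \tensor a \leq b \mulimp c \iff x \tensor a \tensor b \leq c \iff x \leq (a \tensor b) \mulimp c
\]
holds by associativity of $\tensor$. Since $\leq$ is a total (in particular antisymmetric) order, the two elements coincide.

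The only step needing genuine care is justifying the residuation law itself at the boundary. The paper asserts $a \leq b \mulimp c \iff a \tensor b \leq c$ with $b \mulimp c = b^* \cotensor c$. To be safe I would verify it against the displayed table in the degenerate cases $b \in \{0,\infty\}$ and $c \in \{0,\infty\}$. For example, with $b = 0$ we have $0 \mulimp c = \infty$, and indeed $x \tensor 0 = 0 \leq c$ for all $x$. With $b = \infty$ and $c$ finite we have $\infty \mulimp c = 0$, and indeed $x \tensor \infty \leq c$ forces $x = 0$. Once the adjunction is checked, everything above is formal.
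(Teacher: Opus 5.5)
Your proof is correct and is essentially the paper's argument. The paper disposes of (2)--(4) by remarking that they hold in any symmetric monoidal closed category, and your residuation, counit and Yoneda reasoning is exactly that remark spelled out. For (1) the paper instead starts from $1 \tensor a \leq a^{**}$ and applies the $*$-autonomy law \eqref{eq:duality} to get $1 \leq (a \tensor a^*)^* = a \mulimp a$, while you apply residuation to $1 \tensor a \leq a$ directly; the two routes are interderivable, so this is only a cosmetic difference.
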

\begin{proof}
	For (1), start from $1 \tensor a \leq a^{**}$ and apply the defining property of $(-)^*$ to get $1 \leq (a \tensor a^*)^* = (a^* \cotensor a) = {a \mulimp a}$.
	Claim (2)--(4)  hold in any symmetric monoidal closed category.
\end{proof}

\begin{remark}
	Note the above inequalities are all equalities as soon as the numbers involved are finite and non-zero.
\end{remark}

\begin{lemma}
\label{lemma:max-prod-ineq}
	Let $a,b,c,d \in \PosReals$, $0 < p \leq \infty$. Then
	\begin{equation}
		(a \padd[p] b) \tensor (c \pcoadd[p] d) \leq (a \tensor c) \lor (b \tensor d).
	\end{equation}
\end{lemma}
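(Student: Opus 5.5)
The plan is to reduce the inequality to an elementary weighted-average (mediant) estimate on the finite positive reals, and then dispose of the boundary values $0$ and $\infty$ and the case $p=\infty$ separately, using the conventions of \eqref{eq:conj-multiplication}.

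\textbf{Finite case.} Take $0<p<\infty$ and $a,b,c,d\in(0,\infty)$. Raising both sides to the power $p$ is an order isomorphism, so it suffices to show
\[
\frac{a^p+b^p}{c^{-p}+d^{-p}} \leq (ac)^p \lor (bd)^p .
\]
Put $x=a^p$, $y=b^p$, $u=c^p$, $v=d^p$ and $M=(xu)\lor(yv)$. The left-hand side equals $(x+y)uv/(u+v)$, which can be rewritten as
\[
\frac{xu\cdot v + yv\cdot u}{u+v}.
\]
This is a convex combination of $xu$ and $yv$ with weights $v/(u+v)$ and $u/(u+v)$, so it is at most $M$. That gives the main inequality.

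\textbf{Hard case $p=\infty$.} Here the left-hand side is $(a\lor b)\tensor(c\land d)$. By symmetry assume $a\geq b$. Then $(a\lor b)\tensor(c\land d) = a\tensor(c\land d) \leq a\tensor c$, using only the monotonicity of $\tensor$ on $\PosReals$, which holds also with the extended conventions.

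\textbf{Boundary cases (finite $p$).} These are the main bookkeeping obstacle, since continuity arguments break at $0\tensor\infty=0$. I would do a direct case split.
\begin{itemize}
\item If $c=0$ or $d=0$, then $c\pcoadd[p]d=0$, because $0$ is absorbing for the harmonic $p$-sum. The left-hand side is then $0$ and there is nothing to prove.
\item Otherwise $c,d>0$. If $c=\infty$ then $c\pcoadd[p]d=d$, since $\infty$ is the unit.
\begin{itemize}
\item If moreover $a>0$, then $ac=\infty$ and the right-hand side is $\infty$.
\item If $a=0$, the left-hand side is $(0\padd[p]b)\tensor d = b\tensor d$, which is bounded by the right-hand side.
\end{itemize}
The case $d=\infty$ is symmetric.
\item Now $c,d\in(0,\infty)$, so $c\pcoadd[p]d$ is finite and positive.
\begin{itemize}
\item If $a=\infty$ (or $b=\infty$), then $ac=\infty$ (or $bd=\infty$) dominates.
\item If $a=0$, the left-hand side reduces to $b\tensor(c\pcoadd[p]d)\leq b\tensor d$ by \cref{lemma:p-sums-misc}(2) and monotonicity. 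The case $b=0$ is symmetric.
\end{itemize}
\end{itemize}
All remaining cases have $a,b,c,d\in(0,\infty)$, which is the finite case above.
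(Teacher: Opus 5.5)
Your proof is correct and follows the paper's overall plan: settle $a,b,c,d\in(0,\infty)$ by elementary algebra, then dispose of the values $0$ and $\infty$ by a case split. The finite step is argued differently, though. The paper assumes without loss of generality $bd\leq ac$ and runs a chain of equivalences ending in $(a^p+b^p)(c^{-p}+d^{-p})^{-1}\leq a^pc^p$. You instead use the identity $(a\padd[p] b)^p\,(c\pcoadd[p] d)^p=\tfrac{d^p}{c^p+d^p}(ac)^p+\tfrac{c^p}{c^p+d^p}(bd)^p$. This needs no symmetry reduction and explains why the inequality holds: the left-hand side is a weighted power mean of $ac$ and $bd$.

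Your edge cases are also more complete than the paper's. The paper's computation with $a^p$ does not literally cover $p=\infty$, which you treat directly via monotonicity of $\tensor$. The paper's boundary step asserting that $b=0$ forces $a\padd[p] b=0$ is a slip, since $0$ is the unit of $\padd[p]$, not an absorbing element. Your bounds from \cref{lemma:p-sums-misc} are exactly what that case requires: $b\tensor(c\pcoadd[p] d)\leq b\tensor d$ for $a=0$ and, symmetrically, $a\tensor(c\pcoadd[p] d)\leq a\tensor c$ for $b=0$.
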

\begin{proof}
	For simplicity, and without risk of ambiguity, below we write $\tensor$ as juxtaposition, i.e. we default to conjunctive multiplication as our multiplication of choice.
	We also assume, without loss of generality, that $ac = ac \lor bd$.
	We proceed by case disjunction.

	If $a,b,c,d \in (0,\infty)$, then:
	\begin{eqalign}
		\ & bd \leq ac\\
		\iff\ & b^p d^p \leq a^p c^p \\
		\iff\ & b^p \leq a^p c^p d^{-p} \\
		\iff\ & a^p + b^p \leq a^p + a^p c^p d^{-p} \\
		\iff\ & a^p + b^p \leq a^p c^p \left(c^{-p} + d^{-p} \right) \\
		\iff\ & \left(a^p + b^p\right) \left(c^{-p} + d^{-p}\right)^{-1} \leq a^p c^p \\
		\iff\ & \left(a^p + b^p\right)^{1/p} \left(c^{-p} + d^{-p}\right)^{-1/p} \leq ac \\
		\iff\ & (a \padd[p] b) (c \pcoadd[p] d) \leq a c \\
	\end{eqalign}

	If $a = 0$ or $c = 0$, since $ac \geq bd$, it follows that $b = 0$ or $d = 0$. We need to prove that $(a \padd[p] b) (c \pcoadd[p] d) = 0$.
	\begin{itemize}
		\item If $a = 0$ and $b = 0$ then $a \padd[p] b = 0$ and 0 is absorbing for $\tensor$.
		\item If $a = 0$ and $d = 0$ then $c \pcoadd[p] d = 0$ as 0 is absorbing for $\pcoadd[p]$. Then we conclude using that $0$ is absorbing for $\tensor$.
	\end{itemize}
	By symmetry, the same proof works for $c = 0$.

	If $b = 0$ or $d = 0$ then either $a \padd[p] b = 0$ or $c \pcoadd[p] d = 0$, so that $(a \padd[p] b) \tensor (c \pcoadd[p] d) = 0$.

	If $a = \infty$ or $c = \infty$, since $a = 0$ and $c = 0$ have already been treated, we consider here that either $a = \infty$ and  $c > 0$ ($\infty$ included) or the converse way.
	This rules out $0 \tensor \infty = 0$, and forces $ac = \infty$, from which we conclude.

	If $b = \infty$ or $d = \infty$, the case $d = 0$ has already been treated. So we can assume $d > 0$, but then $bd = \infty$. Since $ac \geq bd$ by hypothesis, $ac = \infty$ and the statement holds.
\end{proof}



\section{\texorpdfstring{Cut-elimination}{Cut-elimination}}
\label[appendix]{app:cut-elim}

As \pQLL's closest relative is multiplicative-additive linear logic~\cite{Gir95}, our cut-elimination proof resembles the cut-elimination proof given for linear logic in~\cite{mellies:categorical-sem-LL} by Melliès.
Note that Melliès' presentation is for \emph{intuitionistic} linear logic, so that a sequent always has a single formula as consequence. We have carefully adapted the argument to \pQLL, in which multiple formulas are allowed in the conclusion of a sequent.
The Linear Logic Handbook~\cite{LLHandbook}, in which another cut-elimination proof is provided for classical linear logic (but in a single-sided sequent style), has been an inspiration for this adaptation in several cases.

\begin{figure*}
	\centering
	\footnotesize{
		\begin{tikzpicture}[
				every node/.style={font=\small\sffamily},
				block/.style={text width=3.4cm, align=center}
			]
			\node [block] (rule) {Rules preceding \CutRule};
			\node [block, below left=0.7cm and -0.5cm of rule] (struct) {\emph{Structural Rules}\\(\cref{sec:cutting-structural-rules}):\\(\AxRule\ref{par:CvsA}); (\ExFalsoRule~\ref{par:CvsEFQ}); (\MixRule \ref{par:CvsM}); (\EmptyRule \ref{par:CvsE})};
			\node [block, below right=0.7cm and -0.5cm of rule] (other) {\emph{Logical Rules}\\
			(\cref{sec:principal,sec:non-principal})};
			\node [block, below left=0.7cm and -0.5cm of other] (principal) {Principal vs Principal\\ (\cref{sec:principal})\\ ($\tensor$:
				\ref{sec:tensor}), ($\parr$: \ref{sec:cotensor}), ($\plor$: \ref{sec:sdis}), ($\pland$: \ref{sec:scon}), ($(-)^\dual$: \ref{sec:neg}), ($\One$: \ref{sec:mult-units}), ($\bot/\top$: \ref{sec:add-units})};
			\node [block, below right=0.7cm and -0.7cm of other] (non-principal) {Non-principal vs principal\\(\cref{sec:non-principal})};
			\node [block, below left=1.3cm and -1.5cm of non-principal] (hyp) {Principal in Antecedent of Conclusion\\ (\cref{sec:prihyp})};
			\node [block, below right=1.3cm and -1.5cm of non-principal] (conc) {Principal in Consequent of Conclusion\\ (\cref{sec:pricon})};
			\draw[->] (rule.south) to (struct.north);
			\draw[->] (rule.south) to (other.north);
			\draw[->] (other.south) to (principal.north);
			\draw[->] (other.south) to (non-principal.north);
			\draw[->] (non-principal.south) to (hyp.north);
			\draw[->] (non-principal.south) to (conc.north);

		\end{tikzpicture}}
	\caption{The structure of the cut-elimination cases, with active hyperreferences to the relevant subsections.}
	\label{fig:cut-structure}
\end{figure*}

\subsection{The General Setup}
\label[appendix]{sec:general}
Consider a proof $\pi$ ending in \CutRule:
\begin{equation}
	\label{eq:exemplar-cut}
	\begin{prooftree}
		\hypo{}
		\ellipsis{$\pi_1$}{}
		\infer1[$R_1$]{\Gamma \entails A, \Delta}
		\hypo{\red{\tensor}}
		\hypo{}
		\ellipsis{$\pi_2$}{}
		\infer1[$R_2$]{\Gamma', A \entails \Delta'}
		\infer3[\CutRule]{\Gamma, \Gamma' \entails \Delta, \Delta'}
	\end{prooftree}
\end{equation}
Here $A$ is called the \textbf{cut formula}, and the formulae appearing in the other cedents of the premises are thus \textbf{non-cut}.
Since \pQLL is counary (recall \cref{sec:QLL}), we know $\pi$ does indeed decompose into two proofs as depicted above.
The sequent $\Gamma \entails A, \Delta$ is called the \textbf{conclusion} while $\Gamma', A \entails \Delta'$ is the \textbf{hypothesis}.
We thus say that this proof presents `\textbf{$R_1$ vs $R_2$}'.

Here we present an inductive argument that proceeds by rewriting $\pi$ so as to make \CutRule either appear deeper in the proof or concerning `smaller' cut formulae.
Formally, for each occurrence $c$ of a \CutRule in $\pi$, we define its \textbf{depth} $\operatorname{depth}(c)$ as the number of rules from the leaves of $\pi$ to $c$, and its \textbf{rank} $\operatorname{rank}(c)$ as the number of connectives in the cut formula.

In line with other linear logic approaches to cut-elimination~\cite{mellies:categorical-sem-LL,LLHandbook}, our inductive argument proceeds by induction on the depth of the \CutRule and the formula rank.

The cases to analyze are distinguished by whether $R_1$ and $R_2$ are structural (i.e. from \cref{subfig:struct-frag}) or non-structural.
In the former case, we will proceed, in \cref{sec:cutting-structural-rules}, by showing that $\pi$ can be rewritten to a $\pi'$ that either contains fewer {\CutRule}s or where the {\CutRule}s decrease their depth.
In the latter case, we must additionally reason by induction on the formula rank.
Recall that a formula in the conclusion of a rule (from e.g. \pQLL) is \textbf{principal} when it does not occur in the cedents of its premises---note that, by definition, non-structural rules (\cref{subfig:mult-frag,subfig:add-frag}) have precisely one principal formula.\footnote{Note all the formulae in the conclusion of \ExFalsoRule are principal. However, for the purposes of cut-elimination, we still treat this rule as a structural one in \cref{par:CvsEFQ}.}
This will be the formula whose rank decreases in this case of rewriting.


Thus, we first consider cases when the cut formula appears as principal formula in either $R_1$ or $R_2$ (\cref{sec:principal}), and then when a non-cut formula appears as the principal formula in either the hypothesis or conclusion (\cref{sec:non-principal}).
The resulting case structure is shown in \cref{fig:cut-structure}, with references to our subsections detailing the proof.

Finally, we give a more rigorous termination argument in \cref{app:termination}.

\subsection{Structural rules}\label[appendix]{sec:cutting-structural-rules}

For these cases, we consider rules applied on the left branch.
By admissibility of \ExtExchRule, the other case is completely analogous.

As for termination, note that for \cref{par:CvsA,par:CvsEFQ,par:CvsE} the number of \CutRule applications decreases from the original proof to its transformed version, while for \cref{par:CvsM} the depth of the \CutRule application decreases.

\subsubsection{\AxRule vs hypothesis}\label{par:CvsA}

Rewrite
\begin{equation*}
	\begin{prooftree}
		\hypo{\red{1}}
		\infer1[\AxRule]{A \entails A}
		\hypo{\red{\tensor}}
		\hypo{}
		\ellipsis{$\pi$}{\Gamma,A \entails \Delta}
		\infer3[\CutRule]{\Gamma, A \entails \Delta}
	\end{prooftree}
\end{equation*}
to
\begin{equation*}
	\begin{prooftree}
		\hypo{}
		\ellipsis{$\pi$}{\Gamma, A \entails \Delta}
	\end{prooftree}
\end{equation*}
Validity is preserved since $1 \tensor \validity{\pi} = \validity{\pi}$.


\subsubsection{\ExFalsoRule vs hypothesis}\label{par:CvsEFQ}

Rewrite
\begin{equation*}
	\begin{prooftree}
		\hypo{\red{0}}
		\infer1[\ExFalsoRule]{\Gamma \entails A, \Delta}
		\hypo{\red{\tensor}}
		\hypo{}
		\ellipsis{$\pi$}{\Gamma', A \entails \Delta'}
		\infer3[\CutRule]{\Gamma,\Gamma' \entails \Delta,\Delta'}
	\end{prooftree}
\end{equation*}
to
\begin{equation*}
	\begin{prooftree}
		\hypo{\red{0}}
		\infer1[\ExFalsoRule]{\Gamma,\Gamma' \entails \Delta,\Delta'}
	\end{prooftree}
\end{equation*}
Validity is preserved as $0 \tensor \validity{\pi} = 0$.

\subsubsection{\MixRule vs hypothesis}\label{par:CvsM}

Rewrite
\begin{equation*}
	\begin{prooftree}
		\hypo{}
		\ellipsis{$\pi_1$}{\Gamma_1' \entails \Delta_1'}
		\hypo{\red{\tensor}}
		\hypo{}
		\ellipsis{$\pi_2$}{\Gamma_2' \entails \Delta_2', A}
		\infer3[\MixRule]{\Gamma_1',\Gamma'_2 \entails A, \Delta_1',\Delta_2'}
		\hypo{\red{\tensor}}
		\hypo{}
		\ellipsis{$\pi_3$}{\Gamma, A \entails \Delta}
		\infer3[\CutRule]{\Gamma,\Gamma_1',\Gamma_2' \entails \Delta, \Delta_1',\Delta_2'}
	\end{prooftree}
\end{equation*}
to
\begin{equation*}
	\begin{prooftree}
		\hypo{}
		\ellipsis{$\pi_2$}{\Gamma_2' \entails \Delta_2',A}
		\hypo{\red{\tensor}}
		\hypo{}
		\ellipsis{$\pi_3$}{\Gamma, A \entails \Delta}
		\infer3[\CutRule]{\Gamma,\Gamma'_2 \entails \Delta,\Delta_2'}
		\hypo{\red{\tensor}}
		\hypo{}
		\ellipsis{$\pi_1$}{\Gamma_1' \entails \Delta_1'}
		\infer3[\MixRule]{\Gamma,\Gamma_1',\Gamma_2' \entails \Delta, \Delta_1',\Delta_2'}
	\end{prooftree}
\end{equation*}

Then $(\validity{\pi_1} \tensor \validity{\pi_2}) \tensor \validity{\pi_3} \leq \validity{\pi_1} \tensor (\validity{\pi_2} \tensor \validity{\pi_3})$ holds by associativity of $\tensor$.
The case in which $A$ appears in the other branch of \MixRule is analogous.

\subsubsection{\EmptyRule vs hypothesis}\label{par:CvsE}

This case is vacuous since the conclusion of \EmptyRule does not contain any formula, and thus cannot be the premise of a \CutRule.

\subsection{Principal formula vs principal formula}\label[appendix]{sec:principal}

In all the cases of this subsection (except for \ref{sec:mult-units} and \ref{sec:add-units}), our proof shows that the rank of the cut formula decreases from the original proof to its transformed version.
In those last two cases, we reduce the number of cut occurrences.

\subsubsection{Dualities (rules \DualRightRule and \DualLeftRule).}\label[appendix]{sec:neg}

Rewrite
\begin{equation*}
	\begin{prooftree}
		\hypo{}
		\ellipsis{$\pi_1$}{\Gamma,A \entails \Delta}
		\infer1[\DualRightRule]{\Gamma \entails A^\dual,\Delta}
		\hypo{\red{\tensor}}
		\hypo{}
		\ellipsis{$\pi_2$}{\Gamma \entails A, \Delta'}
		\infer1[\DualLeftRule]{\Gamma',A^\dual \entails \Delta'}
		\infer3[\CutRule]{\Gamma,\Gamma' \entails \Delta, \Delta'}
	\end{prooftree}
\end{equation*}
to
\begin{equation*}
	\begin{prooftree}
		\hypo{}
		\ellipsis{$\pi_2$}{\Gamma' \entails A, \Delta'}
		\hypo{\red{\tensor}}
		\hypo{}
		\ellipsis{$\pi_1$}{\Gamma,A \entails \Delta}
		\infer3[\CutRule]{\Gamma,\Gamma' \entails \Delta, \Delta'}
	\end{prooftree}
\end{equation*}
We know that $\validity{\pi_1} \tensor \validity{\pi_2} = \validity{\pi_2} \tensor \validity{\pi_1}$ since $\tensor$ is commutative.

\subsubsection{Tensor (rules \TensorLeftRule and \TensorRightRule)}\label[appendix]{sec:tensor}

Rewrite
\begin{equation*}
	\begin{prooftree}
		\hypo{}
		\ellipsis{$\pi_1$}{\Gamma \entails A, \Delta}
		\hypo{\red{\tensor}}
		\hypo{}
		\ellipsis{$\pi_2$}{\Gamma' \entails B, \Delta'}
		\infer3[\TensorRightRule]{\Gamma, \Gamma' \entails A \tensor B, \Delta,\Delta'}
		\hypo{\red{\tensor}}
		\hypo{}
		\ellipsis{$\pi_3$}{\Gamma'', A, B \entails \Delta''}
		\infer1[\TensorLeftRule]{\Gamma'', A \tensor B \entails \Delta''}
		\infer3[\CutRule]{\Gamma,\Gamma',\Gamma'' \entails \Delta,\Delta',\Delta''}
	\end{prooftree}
\end{equation*}
to
\begin{equation*}
	\begin{prooftree}
		\hypo{}
		\ellipsis{$\pi_1$}{\Gamma \entails A, \Delta}
		\hypo{\red{\tensor}}
		\hypo{}
		\ellipsis{$\pi_2$}{\Gamma' \entails B, \Delta'}
		\hypo{\red{\tensor}}
		\hypo{}
		\ellipsis{$\pi_3$}{\Gamma'', A, B \entails \Delta''}
		\infer3[\CutRule]{\Gamma',\Gamma'',A \entails \Delta',\Delta''}
		\infer3[\CutRule]{\Gamma,\Gamma',\Gamma'' \entails \Delta,\Delta',\Delta''}
	\end{prooftree}
\end{equation*}
Then $(\validity{\pi_1} \tensor \validity{\pi_2}) \tensor \validity{\pi_3}  = \validity{\pi_1} \tensor (\validity{\pi_2} \tensor \validity{\pi_3})$, by associativity of tensor.

\subsubsection{Par (rules \ParLeftRule and \ParRightRule)}\label[appendix]{sec:cotensor}
Follows by \cref{sec:neg} and \cref{sec:tensor}, since, by duality, a cut along $A \parr B$ corresponds to a cut along $A^\dual \tensor B^\dual$.
Indeed the proof:
\begin{equation*}
	\begin{prooftree}
		\hypo{}
		\ellipsis{$\pi_1$}{\Gamma \entails \Delta,A,B}
		\infer1[\ParRightRule]{\Gamma \entails \Delta, A \parr B}
		\hypo{\red{\tensor}}
		\hypo{}
		\ellipsis{$\pi_2$}{\Gamma',A \entails \Delta'}
		\hypo{\red{\tensor}}
		\hypo{}
		\ellipsis{$\pi_3$}{\Gamma'',B \entails \Delta''}
		\infer3[\ParLeftRule]{\Gamma',\Gamma'', A \parr B \entails \Delta',\Delta''}
		\infer3[\CutRule]{\Gamma, \Gamma',\Gamma'' \entails \Delta,\Delta',\Delta''}
	\end{prooftree}
\end{equation*}
can be rewritten to the following proof, which falls into the cases previously covered:
\begin{equation*}
	\begin{prooftree}
		\hypo{}
		\ellipsis{$\pi_2$}{\Gamma', A \entails \Delta'}
		\infer1[\DualRightRule]{\Gamma' \entails A^\dual, \Delta'}
		\hypo{\red{\tensor}}
		\hypo{}
		\ellipsis{$\pi_3$}{\Gamma'', B \entails \Delta''}
		\infer1[\DualRightRule]{\Gamma'' \entails B^\dual, \Delta''}
		\infer3[\TensorRightRule]{\Gamma', \Gamma'' \entails A^\dual \tensor B^\dual, \Delta',\Delta''}
		\hypo{\red{\tensor}}
		\hypo{}
		\ellipsis{$\pi_1$}{\Gamma \entails \Delta,A,B}
		\infer1[\DualLeftRule]{\Gamma, B^\dual \entails \Delta,A}
		\infer1[\DualLeftRule]{\Gamma, A^\dual, B^\dual \entails \Delta}
		\infer1[\TensorLeftRule]{\Gamma, A^\dual \tensor B^\dual \entails \Delta}
		\infer3[\CutRule]{\Gamma',\Gamma'',\Gamma \entails \Delta',\Delta'',\Delta}
	\end{prooftree}
\end{equation*}

\subsubsection{Soft Disjunction (rules \pOrLeftRule and \pOrRightRule)}\label[appendix]{sec:sdis}

We have:
\begin{equation*}
	\begin{prooftree}
		\hypo{}
		\ellipsis{$\pi_1$}{\Gamma \vdash A, \Delta}
		\hypo{\red{\padd[p]}}
		\hypo{}
		\ellipsis{$\pi_2$}{\Gamma \vdash B, \Delta}
		\infer3[\pOrRightRule]{\Gamma \vdash A \plor B, \Delta}
		\hypo{\red{\tensor}}
		\hypo{}
		\ellipsis{$\pi_3$}{\Gamma', A \vdash \Delta'}
		\hypo{\red{\pcoadd[p]}}
		\hypo{}
		\ellipsis{$\pi_4$}{\Gamma', B \vdash \Delta'}
		\infer3[\pOrLeftRule]{\Gamma', A \plor B \vdash \Delta'}
		\infer3[\CutRule]{\Gamma,\Gamma' \vdash \Delta,\Delta'}
	\end{prooftree}
\end{equation*}
There are two possible cut-elimination strategies here.
\begin{description}
	\item[\emph{Choice $A$.}] Rewrite to:
	      \begin{equation*}
		      \begin{prooftree}
			      \hypo{}
			      \ellipsis{$\pi_1$}{\Gamma \vdash A, \Delta}
			      \hypo{\red{\tensor}}
			      \hypo{}
			      \ellipsis{$\pi_3$}{\Gamma', A \vdash \Delta'}
			      \infer3[\CutRule]{\Gamma,\Gamma' \vdash \Delta,\Delta'}
		      \end{prooftree}
	      \end{equation*}

	\item[\emph{Choice $B$.}] Rewrite to:
	      \begin{equation*}
		      \begin{prooftree}
			      \hypo{}
			      \ellipsis{$\pi_2$}{\Gamma \vdash B, \Delta}
			      \hypo{\red{\tensor}}
			      \hypo{}
			      \ellipsis{$\pi_4$}{\Gamma', B \vdash \Delta'}
			      \infer3[\CutRule]{\Gamma,\Gamma' \vdash \Delta,\Delta'}
		      \end{prooftree}
	      \end{equation*}
\end{description}
We showed in \cref{lemma:max-prod-ineq} that:
\begin{equation*}
	(\validity{\pi_1} \padd[p] \validity{\pi_2}) \tensor (\validity{\pi_3} \pcoadd[p] \validity{\pi_4}) \leq (\validity{\pi_1} \tensor \validity{\pi_3}) \lor (\validity{\pi_2} \tensor \validity{\pi_4}).
\end{equation*}
Thus perform the rewrite of maximal validity (and, by convention, the one involving $A$ if both validities are equal).
Note this is a deterministic rewrite strategy.

\subsubsection{Soft Conjunction (rules \pAndLeftRule and \pAndRightRule)}\label[appendix]{sec:scon}
Follows by \cref{sec:neg} and \cref{sec:sdis}, since a cut along $A \pland B$ corresponds to a cut along $A^\dual \plor B^\dual$ (analogously to \cref{sec:cotensor}).

\subsubsection{Multiplicative unit (\OneLeftRule and \OneRightRule)}\label[appendix]{sec:mult-units}
Due to the context restrictions of \OneLeftRule and \OneRightRule, there is only one possible such proof:
\begin{equation*}
	\begin{prooftree}
		\hypo{\redbbin{1}}
		\infer1[\OneRightRule]{\ \entails \One}
		\hypo{\red{\tensor}}
		\hypo{\redbbin{1}}
		\infer1[\OneLeftRule]{\One \entails\ }
		\infer3[\CutRule]{\quad\ \entails }
	\end{prooftree}
\end{equation*}
Rewrite this to
\begin{equation*}
	\begin{prooftree}
		\hypo{\red{1}}
		\infer1[\EmptyRule]{\ \entails\ }
	\end{prooftree}
\end{equation*}
Clearly $1 \tensor 1 = 1$, and the number of cut applications decreases.

\subsubsection{Additive units (\BotLeftRule and \TopRightRule)}\label[appendix]{sec:add-units}
Since $\bot$ (resp. $\top$) only has an introduction rule for the left (resp. right) side of a sequent,
the only way for a \CutRule to use $\top$ or $\bot$ as cut formula is to rely on \ExFalsoRule to cover the missing introduction rule.
Then we proceed as in (\cref{par:CvsEFQ}), by rewriting the cut to a \ExFalsoRule rule introducing the same conclusion, and this is validity-preserving (always with validity 0).

\subsection{Non-principal formula vs principal formula}\label[appendix]{sec:non-principal}
The final case we consider is when the cut formula is not principal in both of the rules applied to the premises of the \CutRule.
Let us use the usual terminology and say that in $\Gamma \entails \Delta$, the cedent $\Gamma$ is the antecedent and $\Delta$ is the consequent.
This case splits into four subcases, depending on:
\begin{itemize}
	\item whether the principal formula appears in the conclusion or in the hypothesis of the \CutRule rule,
	\item and whether the principal formula is in the antecedent or the consequent of that conclusion/hypothesis.
\end{itemize}

We treat all the cases in which this principal formula belongs to the conclusion (left premise) of the \CutRule. The cases when the principal formula belongs to the \CutRule's hypothesis are symmetric.

\subsubsection{Principal Formula is in the antecedent of the conclusion of the \CutRule}\label[appendix]{sec:prihyp}

In this subsection, we consider the following cases when a principal formula $A \bullet B$ is contained in the antecedent of the conclusion of the \CutRule.
%
In all cases of this section, the proof transformation decreases the depth of the cut.

\begin{enumerate}[label=\ref{sec:prihyp}.\arabic*, leftmargin=*]
	\item \emph{Rule \TensorLeftRule.}
	      Rewrite
	      \begin{equation*}
		      \begin{prooftree}
			      \hypo{}
			      \ellipsis{$\pi_1$}{\Gamma,C,D \entails A,\Delta}
			      \infer1[\TensorLeftRule]{\Gamma, C \tensor D \entails A,\Delta}
			      \hypo{\red{\tensor}}
			      \hypo{}
			      \ellipsis{$\pi_2$}{\Gamma',A \entails \Delta'}
			      \infer3[\CutRule]{\Gamma, C \tensor D, \Gamma' \entails \Delta,\Delta'}
		      \end{prooftree}
	      \end{equation*}
	      to
	      \begin{equation*}
		      \begin{prooftree}
			      \hypo{}
			      \ellipsis{$\pi_1$}{\Gamma, C, D \entails A,\Delta}
			      \hypo{\red{\tensor}}
			      \hypo{}
			      \ellipsis{$\pi_2$}{\Gamma',A \entails \Delta'}
			      \infer3[\CutRule]{\Gamma, C, D, \Gamma' \entails \Delta,\Delta'}
			      \infer1[\TensorLeftRule]{\Gamma, C \tensor D, \Gamma' \entails \Delta,\Delta'}
		      \end{prooftree}
	      \end{equation*}

	\item \emph{Rule \ParLeftRule.}
	      Without loss of generality, suppose the cut formula comes from the left premise of \ParLeftRule.
		  Rewrite:
	      \begin{equation*}
			      \begin{prooftree}
				      \hypo{}
				      \ellipsis{$\pi_1$}{\Gamma_1, C \entails A, \Delta_1}
				      \hypo{\red{\tensor}}
				      \hypo{}
				      \ellipsis{$\pi_2$}{\Gamma_2,D \entails \Delta_2}
				      \infer3[\ParLeftRule]{\Gamma_1, C \parr D,\Gamma_2 \entails A, \Delta_1,\Delta_2}
				      \hypo{\red{\tensor}}
				      \hypo{}
				      \ellipsis{$\pi_3$}{\Gamma', A \entails \Delta'}
				      \infer3[\CutRule]{\Gamma_1, C \parr D,\Gamma_2, \Gamma' \entails \Delta_1,\Delta_2,\Delta'}
			      \end{prooftree}
	      \end{equation*}
	      to
	      \begin{equation*}
			      \begin{prooftree}
				      \hypo{}
				      \ellipsis{$\pi_1$}{\Gamma_1, C \entails A, \Delta_1}
				      \hypo{\red{\tensor}}
				      \hypo{}
				      \ellipsis{$\pi_3$}{\Gamma', A \entails \Delta'}
				      \infer3[\CutRule]{\Gamma_1, C ,\Gamma' \entails \Delta_1,\Delta'}
				      \hypo{\red{\tensor}}
				      \hypo{}
				      \ellipsis{$\pi_2$}{\Gamma_2, D \entails \Delta_2}
				      \infer3[\ParLeftRule]{\Gamma_1, C \parr D,\Gamma_2, \Gamma' \entails \Delta_1,\Delta_2,\Delta'}
			      \end{prooftree}
	      \end{equation*}
	      We have $\left(\validity{\pi_1} \tensor \validity{\pi_2} \right) \tensor \validity{\pi_3} =\left(\validity{\pi_1} \tensor \validity{\pi_3} \right) \tensor \validity{\pi_2}$ by associativity and commutativity of $\tensor$.


	\item \emph{Rule \pOrLeftRule.}
	      Rewrite
	      \begin{equation*}
			      \begin{prooftree}
				      \hypo{}
				      \ellipsis{$\pi_1$}{\Gamma, A \entails C,\Delta}
				      \hypo{\red{\pcoadd[p]}}
				      \hypo{}
				      \ellipsis{$\pi_2$}{\Gamma, B \entails C,\Delta}
				      \infer3[\pOrLeftRule]{\Gamma, A \plor B \entails C, \Delta}
				      \hypo{\red{\tensor}}
				      \hypo{}
				      \ellipsis{$\pi_3$}{\Gamma', C \entails \Delta'}
				      \infer3[\CutRule]{\Gamma, A \plor B, \Gamma' \entails \Delta,\Delta'}
			      \end{prooftree}
	      \end{equation*}
	      to
	      \begin{equation*}
			      \begin{prooftree}
				      \hypo{}
				      \ellipsis{$\pi_1$}{\Gamma, A \entails C, \Delta}
				      \hypo{\red{\tensor}}
				      \hypo{}
				      \ellipsis{$\pi_3$}{\Gamma',C \entails \Delta'}
				      \infer3[\CutRule]{\Gamma, A , \Gamma' \entails \Delta,\Delta'}
				      \hypo{\red{\pcoadd[p]}}
				      \hypo{}
				      \ellipsis{$\pi_2$}{\Gamma, B \entails C, \Delta}
				      \hypo{\red{\tensor}}
				      \hypo{}
				      \ellipsis{$\pi_3$}{\Gamma',C \entails \Delta'}
				      \infer3[\CutRule]{\Gamma, B, \Gamma' \entails \Delta,\Delta'}
				      \infer3[\pOrLeftRule]{\Gamma, A \plor B, \Gamma' \entails \Delta,\Delta'}
			      \end{prooftree}
	      \end{equation*}
	      Validity is preserved by \cref{lemma:cotensor-over-add-conormal-dist}.

	\item \emph{Rule \pAndLeftRule.}
	      Rewrite
	      \begin{equation*}
			      \begin{prooftree}
				      \hypo{}
				      \ellipsis{$\pi_1$}{\Gamma, A \entails C,\Delta}
				      \hypo{\red{\padd[p]}}
				      \hypo{}
				      \ellipsis{$\pi_2$}{\Gamma, B \entails C,\Delta}
				      \infer3[\pAndLeftRule]{\Gamma, A \pland B \entails C, \Delta}
				      \hypo{\red{\tensor}}
				      \hypo{}
				      \ellipsis{$\pi_3$}{\Gamma', C \entails \Delta'}
				      \infer3[\CutRule]{\Gamma, A \pland B, \Gamma' \entails \Delta,\Delta'}
			      \end{prooftree}
	      \end{equation*}
	      to
	      \begin{equation*}
			      \begin{prooftree}
				      \hypo{}
				      \ellipsis{$\pi_1$}{\Gamma, A \entails C, \Delta}
				      \hypo{\red{\tensor}}
				      \hypo{}
				      \ellipsis{$\pi_3$}{\Gamma',C \entails \Delta'}
				      \infer3[\CutRule]{\Gamma, A , \Gamma' \entails \Delta,\Delta'}
				      \hypo{\red{\padd[p]}}
				      \hypo{}
				      \ellipsis{$\pi_2$}{\Gamma, B \entails C, \Delta}
				      \hypo{\red{\tensor}}
				      \hypo{}
				      \ellipsis{$\pi_3$}{\Gamma',C \entails \Delta'}
				      \infer3[\CutRule]{\Gamma, B, \Gamma' \entails \Delta,\Delta'}
				      \infer3[\pAndLeftRule]{\Gamma, A \pland B, \Gamma' \entails \Delta,\Delta'}
			      \end{prooftree}
	      \end{equation*}
	      Validity is preserved by \cref{lemma:tensor-over-add-dist}.

	\item \emph{Rule \DualLeftRule.}
	      Rewrite
	      \begin{equation*}
		      \begin{prooftree}
			      \hypo{}
			      \ellipsis{$\pi_1$}{\Gamma \entails A,B, \Delta}
			      \infer1[\DualLeftRule]{\Gamma, A^\dual \entails B,\Delta}
			      \hypo{\red{\tensor}}
			      \hypo{}
			      \ellipsis{$\pi_2$}{\Gamma',B \entails \Delta'}
			      \infer3[\CutRule]{\Gamma, A^\dual, \Gamma' \entails \Delta, \Delta'}
		      \end{prooftree}
	      \end{equation*}
	      to
	      \begin{equation*}
		      \begin{prooftree}
			      \hypo{}
			      \ellipsis{$\pi_1$}{\Gamma \entails B,A,\Delta}
			      \hypo{\red{\tensor}}
			      \hypo{}
			      \ellipsis{$\pi_2$}{\Gamma',B \entails \Delta'}
			      \infer3[\CutRule]{\Gamma, \Gamma' \entails A,\Delta,\Delta'}
			      \infer1[\DualLeftRule]{\Gamma, A^\dual, \Gamma' \entails \Delta,\Delta'}
		      \end{prooftree}
	      \end{equation*}

	\item \emph{Rule \OneLeftRule.}
	This rule's conclusion has no side cedents, therefore it cannot appear as non-principal, and this case is vacuous.
	\item \emph{Rule \BotLeftRule.}
	Rewrite
	\begin{equation*}
		\begin{prooftree}
			\hypo{\red{\infty}}
			\infer1[\BotLeftRule]{\Gamma, \bot \entails A, \Delta}
			\hypo{\red{\tensor}}
			\hypo{}
			\ellipsis{$\pi$}{\Gamma', A \entails \Delta'}
			\infer3[\CutRule]{\Gamma, \bot, \Gamma' \entails \Delta, \Delta'}
		\end{prooftree}
	\end{equation*}
	to
	\begin{equation*}
		\begin{prooftree}
			\hypo{\red{\infty}}
			\infer1[\BotLeftRule]{\Gamma, \bot, \Gamma' \entails \Delta, \Delta'}
		\end{prooftree}
	\end{equation*}
	Validity is non-decreasing since $\infty \tensor \validity{\pi} \leq \infty$, and the number of cut occurrences decreases.
\end{enumerate}

\subsubsection{Principal Formula is in the consequent of the conclusion of the \CutRule}\label[appendix]{sec:pricon}
We consider now the case in which the cut formula is again not principal in \emph{both} premises of the \CutRule, but now we focus on a principal formula that appears as the consequent of the conclusion of the \CutRule rule.

This case is perfectly dual to the previous one.

Indeed, consider the following case of induction, where \CutRule needs to be commuted with a right introduction rule for some logical connective we denote by $\bullet$:
\begin{equation*}
	\begin{prooftree}
		\hypo{}
		\ellipsis{$\pi_1$}{}
		\infer1[$\bullet\rightrule$]{\Gamma \entails \Delta, C \bullet D, A}
		\hypo{\red{\tensor}}
		\hypo{}
		\ellipsis{$\pi_2$}{\Gamma',A \entails \Delta'}
		\infer3[\CutRule]{\Gamma, \Gamma' \entails \Delta,\Delta', C \bullet D}
	\end{prooftree}
\end{equation*}
This can be rewritten to the following proof, which falls into the cases previously covered:
\begin{equation*}
	\begin{prooftree}
		\hypo{}
		\ellipsis{$\pi_3$}{}
		\infer1[$\bullet^\dual\rightrule$]{\Gamma, C^\dual \bullet^\dual D^\dual \entails \Delta, A}
		\hypo{\red{\tensor}}
		\hypo{}
		\ellipsis{$\pi_2$}{\Gamma',A \entails \Delta'}
		\infer3[\CutRule]{\Gamma, \Gamma', C^\dual \bullet^\dual D^\dual \entails \Delta,\Delta'}
		\infer1[\DualRightRule]{\Gamma, \Gamma' \entails \Delta,\Delta', C \bullet D}
	\end{prooftree}
\end{equation*}

To define $\pi_3$, note that $\bullet^\dual\rightrule$ has the same amount of hypotheses as $\bullet\leftrule$ does.
In the unary case, $\pi_3$ is obtained by applying \DualLeftRule twice (to $C$ and $D$), and then composing with $\pi_1$.
In the binary case, $\pi_1$ has for conclusion hypersequent a tensor of two sequents, so that $\pi_1\ =\ \pi_{1L} \redbin{\bullet} \pi_{1R}$ for the appropriate $\redbin{\bullet}$ used by the rule under study. Then define $\pi_{3L}$ (resp. $\pi_{3R}$) by applying \DualLeftRule twice (to $C$ and $D$), and then composing with $\pi_{1L}$ (resp. $\pi_{1R}$). Remark that $\pi_3\ =\ \pi_{3L} \redbin{\bullet} \pi_{3R}$ provides a proof of the same validity.








\subsection{The Termination argument}
\label[appendix]{app:termination}
Let us use the notation from \cref{sec:general}.
For a proof $\pi$, define the metrics $\operatorname{met}(\pi) = (r_m,n_m,n,d)$ such that if $\pi$ is cut-free, then $\operatorname{met}(\pi) = (0,0,0,0)$, and else:
\begin{enumerate}
	\item $r_m$ is the maximal rank of the cut formula of a cut occurrence in $\pi$,
	\item $n_m$ is the number of cut occurrences of maximal rank in $\pi$,
	\item $n$ is the total number of cut occurrences in $\pi$,
	\item $d$ is the sum of the depths of all cut occurrences in $\pi$.
\end{enumerate}

We consider the lexicographic order over such tuples, which is well-founded.
Now every step of the cut-elimination process is strictly decreasing for this order, so that the process needs to terminate.
Note that the process is defined for every possible case involving a cut occurrence, so it can only halt on a cut-free proof: else, a rewrite could be applied.

\subsection{Decidability}
\label[appendix]{app:decidability}
Recall the definition of \ref{eq:provability}.
As observed above, the cut-elimination process is validity non-decreasing and thus we know the provability of a sequent, if it is realized, can be realized by a \CutRule-free proof.
Note moreover that:
\begin{enumerate}
	\item Explicit dualities in \pQLL (introduced via the rules \DualLeftRule and \DualRightRule) can be eliminated without affecting validity, so that we can restrict our attention to derivations of $\Gamma \entails \Delta$ that do not use \CutRule nor \DualLeftRule nor \DualRightRule. Call these \textbf{reduced}.
	\item We can, again without change in validity, remove any \MixRule which uses the empty sequent in one of its branches.
	\item By inspecting the remaining rules and cases, we observe a form of subformula property: each premise of a rule, if any, is of strictly smaller complexity (understood in the traditional sense as the sum of the number of logical connectives in each formula, over all the formulas appearing in contexts on both sides of the turnstile).
\end{enumerate}
Due to this subformula property, reduced proofs are in finite number, and they can be effectively enumerated. The validity of each \pQLL derivation is computable up to any desired precision by real arithmetic.%
\footnote{Observe, in particular, that it is trivial to decide whether the validity of a proof is non-zero, since such validity is given as a finite expression involving only $0$, $1$, and $\infty$ whose positivity can be checked inductively.}
It follows that provability is realized by at least one proof (since there are finitely many proofs of potential maximal validity), and that proof is computable via the arguments just raised.

\section{\texorpdfstring{Equivalence of \ISOMIXMALL{} and \pQLL[\infty]}{Equivalence of ISOMIX-MALL and hard QLL}}
\label[appendix]{app:isomix-mall}


\begin{figure}
	\centering
	\begin{adjustbox}{width=\linewidth,center}
		\begin{tabular}{c}
			\begin{subfigure}{\linewidth}
				\caption{Structural Rules}
				\vspace*{1ex}
				\label{subfig:mall-struct-frag}
				\centering
				\begin{tabular}{c}
					\begin{tabular}{rl}
						 \begin{prooftree}
							\hypo{\phantom{T}}
							\infer1[\MALLAxRule]{A \entails A}
						\end{prooftree}
						&
						\begin{prooftree}
							\hypo{\phantom{T}}
							\infer1[\MALLEmptyRule]{\ \entails \ }
						\end{prooftree}
						\\[4ex]
						\begin{prooftree}
							\hypo{\Gamma \entails A, \Delta}
							\hypo{\Gamma' , A \entails \Delta'}
							\infer2[\MALLCutRule]{\Gamma, \Gamma' \entails \Delta, \Delta'}
						\end{prooftree}
						&
						\begin{prooftree}
							\hypo{\Gamma \entails \Delta}
							\hypo{\Gamma' \entails \Delta'}
							\infer2[\MALLMixRule]{\Gamma, \Gamma' \entails \Delta, \Delta'}
						\end{prooftree}
					\end{tabular}
				\end{tabular}
			\end{subfigure}
			\\[2ex]
			\begin{subfigure}{\linewidth}
				\caption{Rules for Multiplicatives}
				\vspace*{1ex}
				\label{subfig:mall-mult-frag}
				\centering
				\begin{tabular}{c}
					\begin{tabular}{rl}
						\begin{prooftree}
							\hypo{\Gamma, A,B \entails \Delta}
							\infer1[\MALLTensorLeftRule]{\Gamma, A \tensor B \entails \Delta}
						\end{prooftree}
						&
							\begin{prooftree}
								\hypo{\Gamma \entails A,B, \Delta}
								\infer1[\MALLParRightRule]{\Gamma \entails A \parr B, \Delta}
							\end{prooftree}
					\\[4ex]
						\begin{prooftree}
							\hypo{\Gamma, A \entails \Delta}
							\hypo{\Gamma', B \entails \Delta'}
							\infer2[\MALLParLeftRule]{\Gamma, \Gamma', A \parr B \entails \Delta, \Delta'}
						\end{prooftree}
						&
							\begin{prooftree}
								\hypo{\Gamma \entails A, \Delta}
								\hypo{\Gamma' \entails B, \Delta'}
								\infer2[\MALLTensorRightRule]{\Gamma, \Gamma' \entails A \tensor B, \Delta, \Delta'}
							\end{prooftree}
					\end{tabular}
					\\[8ex]
					\begin{tabular}{rrll}
						\begin{prooftree}
							\hypo{}
							\infer1[\MALLOneLeftRule]{\One \entails}
						\end{prooftree}
						&
						\begin{prooftree}
							\hypo{\Gamma \entails A, \Delta}
							\infer1[\MALLStarLeftRule]{\Gamma, A^\dual \entails \Delta}
						\end{prooftree}
						&
						\begin{prooftree}
							\hypo{\Gamma, A \entails \Delta}
							\infer1[\MALLStarRightRule]{\Gamma \entails A^\dual, \Delta}
						\end{prooftree}
						&
							\begin{prooftree}
								\hypo{}
								\infer1[\MALLOneRightRule]{\entails \One}
							\end{prooftree}
					\end{tabular}
				\end{tabular}
			\end{subfigure}
			\\[2ex]
			\begin{subfigure}{\linewidth}
				\caption{Rules for Additives}
				\vspace*{1ex}
				\label{subfig:mall-add-frag}
				\centering
				\begin{tabular}{c}
					\begin{tabular}{rl}
						\begin{prooftree}
							\hypo{\Gamma, A \entails \Delta}
							\hypo{\Gamma, B \entails \Delta}
							\infer2[\MALLOrLeftRule]{\Gamma, A \lor B \entails \Delta}
						\end{prooftree}
						&
						\begin{prooftree}
							\hypo{\Gamma \entails A, \Delta}
							\hypo{\Gamma \entails B, \Delta}
							\infer2[\MALLAndRightRule]{\Gamma \entails A \land B, \Delta}
						\end{prooftree}
					\end{tabular}
					\\[4ex]
					\begin{tabular}{rcl}
						\begin{prooftree}
							\hypo{\Gamma, A_i \entails \Delta}
							\infer1[\MALLAndLeftRule{i}]{\Gamma, A_1 \land A_2 \entails \Delta}
						\end{prooftree}
						&
						$i=1,2$
						&
						\begin{prooftree}
							\hypo{\Gamma \entails A_i, \Delta}
							\infer1[\MALLOrRightRule{i}]{\Gamma \entails A_1 \lor A_2, \Delta}
						\end{prooftree}
					\end{tabular}
					\\[4ex]
					\begin{tabular}{rl}
						\begin{prooftree}
							\hypo{}
							\infer1[\MALLBotLeftRule]{\Gamma, \bot \entails \Delta}
						\end{prooftree}
						\quad & \quad
							\begin{prooftree}
								\hypo{}
								\infer1[\MALLTopRightRule]{\Gamma \entails \top, \Delta}
							\end{prooftree}
					\end{tabular}
				\end{tabular}
			\end{subfigure}
		\end{tabular}
	\end{adjustbox}
	\caption{\ISOMIXMALL}
	\label{fig:isomixmall}
\end{figure}

\begin{figure}
	\centering
	\begin{adjustbox}{width=\linewidth,center}
		\begin{tabular}{c}
			\begin{subfigure}{\linewidth}
				\caption{Structural Rules}
				\vspace*{1ex}
				\label{subfig:qmall-struct-frag}
				\centering
				\begin{tabular}{c}
					\begin{tabular}{rcl}
						\begin{prooftree}
							\hypo{\red{\true}}
							\infer1[\qMALLAxRule]{A \entails A}
						\end{prooftree}
						&
						\begin{prooftree}
							\hypo{\red{\true}}
							\infer1[\qMALLEmptyRule]{\ \entails \ }
						\end{prooftree}
						&
						\begin{prooftree}
							\hypo{\red{\false}}
							\infer1[\qMALLExFalsoRule]{\Gamma \entails \Delta}
						\end{prooftree}
					\end{tabular}
					\\[4ex]
					\begin{tabular}{rlcc}
						\begin{prooftree}
							\hypo{\Gamma \entails A, \Delta \redbbin{\land} \Gamma' , A \entails \Delta'}
							\infer1[\qMALLCutRule]{\Gamma, \Gamma' \entails \Delta, \Delta'}
						\end{prooftree}
						&
						\begin{prooftree}
							\hypo{\Gamma \entails \Delta \redbbin{\land} \Gamma' \entails \Delta'}
							\infer1[\qMALLMixRule]{\Gamma, \Gamma' \entails \Delta, \Delta'}
						\end{prooftree}
					\end{tabular}
				\end{tabular}
			\end{subfigure}
			\\[2ex]
			\begin{subfigure}{\linewidth}
				\caption{Rules for Multiplicatives}
				\vspace*{1ex}
				\label{subfig:qmall-mult-frag}
				\centering
				\begin{tabular}{c}
					\begin{tabular}{rl}
						\begin{prooftree}
							\hypo{\Gamma, A,B \entails \Delta}
							\infer1[\qMALLTensorLeftRule]{\Gamma, A \tensor B \entails \Delta}
						\end{prooftree}
						&
						\begin{prooftree}
							\hypo{\Gamma \entails A,B, \Delta}
							\infer1[\qMALLParRightRule]{\Gamma \entails A \parr B, \Delta}
						\end{prooftree}
					\\[4ex]
						\begin{prooftree}
							\hypo{\Gamma, A \entails \Delta \redbbin{\land} \Gamma', B \entails \Delta'}
							\infer1[\qMALLParLeftRule]{\Gamma, \Gamma', A \parr B \entails \Delta, \Delta'}
						\end{prooftree}
						&
						\begin{prooftree}
							\hypo{\Gamma \entails A, \Delta \redbbin{\land} \Gamma',\entails B, \Delta'}
							\infer1[\qMALLTensorRightRule]{\Gamma,\Gamma' \entails A \tensor B, \Delta, \Delta'}
						\end{prooftree}
					\end{tabular}
					\\[8ex]
					\begin{tabular}{rrll}
						\begin{prooftree}
							\hypo{\red{\true}}
							\infer1[\qMALLOneLeftRule]{\One \entails}
						\end{prooftree}
						&
						\begin{prooftree}
							\hypo{\Gamma \entails A, \Delta}
							\infer1[\qMALLStarLeftRule]{\Gamma, A^\dual \entails \Delta}
						\end{prooftree}
						&
						\begin{prooftree}
							\hypo{\Gamma, A \entails \Delta}
							\infer1[\qMALLStarRightRule]{\Gamma \entails A^\dual, \Delta}
						\end{prooftree}
						&
						\begin{prooftree}
							\hypo{\red{\true}}
							\infer1[\qMALLOneRightRule]{\entails \One}
						\end{prooftree}
					\end{tabular}
				\end{tabular}
			\end{subfigure}
			\\[2ex]
			\begin{subfigure}{\linewidth}
				\caption{Rules for Additives}
				\vspace*{1ex}
				\label{subfig:qmall-add-frag}
				\centering
				\begin{tabular}{c}
					\begin{tabular}{rl}
						\begin{prooftree}
							\hypo{\Gamma, A \entails \Delta \redbbin{\land} \Gamma, B \entails \Delta}
							\infer1[\qMALLpOrLeftRule]{\Gamma, A \plor B \entails \Delta}
						\end{prooftree}
						&
						\begin{prooftree}
							\hypo{\Gamma \entails A , \Delta \redbbin{\land} \Gamma \entails B , \Delta}
							\infer1[\qMALLpAndRightRule]{\Gamma \entails A \pland B, \Delta}
						\end{prooftree}
					\\[4ex]
						\begin{prooftree}
							\hypo{\Gamma, A \entails \Delta \redbbin{\lor} \Gamma, B \entails \Delta}
							\infer1[\qMALLpAndLeftRule] {\Gamma, A \pland B \entails \Delta}
						\end{prooftree}
						&
						\begin{prooftree}
							\hypo{\Gamma \entails A , \Delta \redbbin{\lor} \Gamma \entails B, \Delta}
							\infer1[\qMALLpOrRightRule]{\Gamma \entails A \plor B , \Delta }
						\end{prooftree}
					\\[4ex]
						\begin{prooftree}
							\hypo{\red{\true}}
							\infer1[\qMALLBotLeftRule]{\Gamma, \bot \entails \Delta}
						\end{prooftree}
						\quad & \quad
						\begin{prooftree}
							\hypo{\red{\true}}
							\infer1[\qMALLTopRightRule]{\Gamma \entails \top, \Delta}
						\end{prooftree}
					\end{tabular}
				\end{tabular}
			\end{subfigure}
		\end{tabular}
	\end{adjustbox}
	\caption{The qualitative version of \pQLL.}
	\label{fig:isomixmall-q}
\end{figure}

\begin{figure}
	\centering
	\begin{adjustbox}{width=\linewidth,center}
		\begin{tabular}{c}
			\begin{subfigure}{\linewidth}
				\caption{Internal Structural Rules}
				\vspace*{1ex}
				\label{subfig:hmall-struct-frag}
				\centering
				\begin{tabular}{c}
					\begin{tabular}{rl}
						 \begin{prooftree}
							\hypo{}
							\infer1[\hMALLAxRule]{\hyper{G} \mid A \entails A}
						\end{prooftree}
						&
						\begin{prooftree}
							\hypo{}
							\infer1[\hMALLEmptyRule]{\hyper{G} \mid \ \entails \ }
						\end{prooftree}
						\\[4ex]
						\begin{prooftree}
							\hypo{\hyper{G} \mid \Gamma \entails A, \Delta}
							\hypo{\hyper{G} \mid \Gamma' , A \entails \Delta'}
							\infer2[\hMALLCutRule]{\hyper{G} \mid \Gamma, \Gamma' \entails \Delta, \Delta'}
						\end{prooftree}
						&
						\begin{prooftree}
							\hypo{\hyper{G} \mid \Gamma \entails \Delta}
							\hypo{\hyper{G} \mid \Gamma' \entails \Delta'}
							\infer2[\hMALLMixRule]{\hyper{G} \mid \Gamma, \Gamma' \entails \Delta, \Delta'}
						\end{prooftree}
					\end{tabular}
				\end{tabular}
			\end{subfigure}
			\\[4ex]
			\begin{subfigure}{\linewidth}
				\caption{External Structural Rules}
				\vspace*{1ex}
				\label{subfig:hmall-external-structural}
				\centering
				\begin{tabular}{c}
					\begin{tabular}{rl}
						\begin{prooftree}
							\hypo{\hyper{G}}
							\infer1[\hMALLExtWeakRule]{\hyper{G} \mid \hyper{H}}
						\end{prooftree}
						&
						\begin{prooftree}
							\hypo{\hyper{G} \mid \hyper{H} \mid \hyper{H}}
							\infer1[\hMALLExtContRule]{\hyper{G} \mid \hyper{H}}
						\end{prooftree}
					\end{tabular}
				\end{tabular}
			\end{subfigure}
			\\[2ex]
			\begin{subfigure}{\linewidth}
				\caption{Rules for Multiplicatives}
				\vspace*{1ex}
				\label{subfig:hmall-mult-frag}
				\centering
				\begin{tabular}{c}
					\begin{tabular}{rl}
						\begin{prooftree}
							\hypo{\hyper{G} \mid \Gamma, A,B \entails \Delta}
							\infer1[\hMALLTensorLeftRule]{\hyper{G} \mid \Gamma, A \tensor B \entails \Delta}
						\end{prooftree}
						&
							\begin{prooftree}
								\hypo{\hyper{G} \mid \Gamma \entails A,B, \Delta}
								\infer1[\hMALLParRightRule]{\hyper{G} \mid \Gamma \entails A \parr B, \Delta}
							\end{prooftree}
					\\[4ex]
						\begin{prooftree}
							\hypo{\hyper{G} \mid \Gamma, A \entails \Delta}
							\hypo{\hyper{G} \mid \Gamma', B \entails \Delta'}
							\infer2[\hMALLParLeftRule]{\hyper{G} \mid \Gamma, \Gamma', A \parr B \entails \Delta, \Delta'}
						\end{prooftree}
						&
							\begin{prooftree}
								\hypo{\hyper{G} \mid \Gamma \entails A, \Delta}
								\hypo{\hyper{G} \mid \Gamma' \entails B, \Delta'}
								\infer2[\hMALLTensorRightRule]{\hyper{G} \mid \Gamma, \Gamma' \entails A \tensor B, \Delta, \Delta'}
							\end{prooftree}
					\end{tabular}
					\\[8ex]
					\begin{tabular}{rrll}
						\begin{prooftree}
							\hypo{}
							\infer1[\hMALLOneLeftRule]{\hyper{G} \mid \One \entails\ }
						\end{prooftree}
						&
						\begin{prooftree}
							\hypo{\hyper{G} \mid \Gamma \entails A, \Delta}
							\infer1[\hMALLStarLeftRule]{\hyper{G} \mid \Gamma, A^\dual \entails \Delta}
						\end{prooftree}
						&
						\begin{prooftree}
							\hypo{\hyper{G} \mid \Gamma, A \entails \Delta}
							\infer1[\hMALLStarRightRule]{\hyper{G} \mid \Gamma \entails A^\dual, \Delta}
						\end{prooftree}
						&
							\begin{prooftree}
								\hypo{}
								\infer1[\hMALLOneRightRule]{\hyper{G} \mid \ \entails \One}
							\end{prooftree}
					\end{tabular}
				\end{tabular}
			\end{subfigure}
			\\[2ex]
			\begin{subfigure}{\linewidth}
				\caption{Rules for Additives}
				\vspace*{1ex}
				\label{subfig:hmall-add-frag}
				\centering
				\begin{tabular}{c}
					\begin{tabular}{rl}
						\begin{prooftree}
							\hypo{\hyper{G} \mid \Gamma, A \entails \Delta}
							\hypo{\hyper{G} \mid \Gamma, B \entails \Delta}
							\infer2[\hMALLOrLeftRule]{\hyper{G} \mid \Gamma, A \lor B \entails \Delta}
						\end{prooftree}
						&
							\begin{prooftree}
								\hypo{\hyper{G} \mid \Gamma \entails A, \Delta}
								\hypo{\hyper{G} \mid \Gamma \entails B, \Delta}
								\infer2[\hMALLAndRightRule]{\hyper{G} \mid \Gamma \entails A \land B, \Delta}
							\end{prooftree}
						\\[4ex]
						\begin{prooftree}
							\hypo{\hyper{G} \mid \Gamma, A \entails \Delta \mid \Gamma, B \entails \Delta}
							\infer1[\hMALLAndLeftRule]{\hyper{G} \mid \Gamma, A \land B \entails \Delta}
						\end{prooftree}
						&
						\begin{prooftree}
							\hypo{\hyper{G} \mid \Gamma \entails A, \Delta \mid \Gamma \entails B, \Delta}
							\infer1[\hMALLOrRightRule]{\hyper{G} \mid \Gamma \entails A \lor B, \Delta}
						\end{prooftree}
					\end{tabular}
					\\[8ex]
					\begin{tabular}{rl}
						\begin{prooftree}
							\hypo{}
							\infer1[\hMALLBotLeftRule]{\hyper{G} \mid \Gamma, \bot \entails \Delta }
						\end{prooftree}
						\quad & \quad
							\begin{prooftree}
								\hypo{}
								\infer1[\hMALLTopRightRule]{\hyper{G} \mid \Gamma \entails \top, \Delta}
							\end{prooftree}
					\end{tabular}
				\end{tabular}
			\end{subfigure}
		\end{tabular}
	\end{adjustbox}
	\caption{The hypersequent calculus for isomix MALL (adapted from \cite[Fig~4.1]{metcalfeProofTheoryFuzzy2009}), \hISOMIXMALL.}
	\label{fig:isomixmall-h}
\end{figure}

\subsection{Proof of \cref{th:isomixmall-cons-and-adeq-over-qual-pqll}}
To show that qualitative \pQLL is equivalent to \hISOMIXMALL we turn to the observations of \cref{sec:hyperseq} regarding hypersequent and $\Prop$-enriched calculi.
In particular, it is straightforward to translate proof rules from a hypersequent calculus to admissible proof rules in a $\Prop$-enriched one admitting \ExFalsoRule, following the same procedure outlined in \cref{lemma:struct-schema}.
Specifically, given proofs $\pi_i$ as below left, any of them either proves $\hyper{G}$ or $\hyper{H}_i$.
Thus define $\pi_{\hyper{G}}$ to be any of those proving $\hyper{G}$, or \ExFalsoRule in case there is none, and likewise for $\pi_{\hyper{H}_i}$.
We thus obtain a proof as below right:
\begin{equation*}
		\hspace*{-8ex}
		\begin{prooftree}
					\hypo{}
				\ellipsis{$\pi_1$}{\red{(}\hyper{G} \redbin{\lor} \hyper{H}_1 \red{)}}
				\hypo{\red{\land}\ \cdots\ \red{\land}}
					\hypo{}
				\ellipsis{$\pi_n$}{\red{(}\hyper{G} \redbin{\lor} \hyper{H}_n \red{)}}
			\infer[separation=.5em]3{\hyper{G} \redbin{\lor} \hyper{K}}
		\end{prooftree}
		\quad \mapsto \quad
		\begin{prooftree}
				\hypo{}
			\ellipsis{$\pi_{\hyper{G}}$}{\hyper{G}}
			\hypo{\red{\lor}}
				\hypo{}
			\ellipsis{$\pi_{\hyper{H}_1} \redbin{\land} \cdots \redbin{\land} \pi_{\hyper{H}_n}$}{\red{(}\hyper{H}_1 \redbin{\land} \cdots \redbin{\land} \hyper{H}_n \red{)}}
		\infer[separation=.5em]3{\hyper{G} \redbin{\lor} \hyper{K}}
		\end{prooftree}
\end{equation*}
\emph{Vice versa}, given proofs in a $\Prop$-enriched calculus like qualitative \pQLL, we define
\begin{eqalign}
	\begin{prooftree}
			\hypo{}
		\ellipsis{$\pi_1$}{\hyper{H}_1}
		\hypo{\red{\lor}}
			\hypo{}
		\ellipsis{$\pi_2$}{\hyper{H}_2}
		\infer[separation=.5em]3{\hyper{K}}
	\end{prooftree}
	\qquad &\mapsto \qquad
	\begin{prooftree}
		\hypo{}
		\ellipsis{$\pi_2 \mid \pi_2$}{\varnothing \mid \hyper{H}_1 \mid \hyper{H}_2}
		\infer1{\varnothing \mid \hyper{K}}
	\end{prooftree}
	\\
	\begin{prooftree}
			\hypo{}
		\ellipsis{$\pi_1$}{\hyper{H}_1}
		\hypo{\red{\land}}
			\hypo{}
		\ellipsis{$\pi_2$}{\hyper{H}_2}
		\infer3{\hyper{K}}
	\end{prooftree}
	\qquad &\mapsto \qquad
	\begin{prooftree}
		\hypo{}
		\ellipsis{$\pi_2$}{\varnothing \mid \hyper{H}_1}
		\hypo{}
		\ellipsis{$\pi_2$}{\varnothing \mid \hyper{H}_2}
		\infer2{\varnothing \mid \hyper{K}}
	\end{prooftree}
\end{eqalign}
The language of \ISOMIXMALL is, up to relabeling of connectives, the same as the one introduced in \cref{def:qll-formulae} for $p=\infty$.
We can then easily draw a correspondence between the rules of one and the other (compare \cref{fig:isomixmall-q} and \cref{fig:isomixmall-h}), with a few exceptions to highlight:
\begin{enumerate}
	\item \ExFalsoRule in qualitative \pQLL actually translates to external weakening in the hypersequent calculus, with $\hyper{G} = \varnothing$,
	\item \emph{vice versa}, the analogue of \cref{lemma:struct-schema} for $\Prop$-enriched calculi makes the translations of external weakening and contraction admissible in quantitative \pQLL.
\end{enumerate}

Thus we obtain \cref{th:isomixmall-cons-and-adeq-over-qual-pqll}.

\subsection{Proof of \cref{th:isomixmall-cons-and-adeq-over-infty-qll}}
The claim boils down to a correspondence between rules that allows us to translate proofs directly between the two systems.
Such a correspondence is mostly evident by comparing \cref{fig:pqll} with \cref{fig:isomixmall}.

Indeed, when going from \ISOMIXMALL to \pQLL[\infty], the only sticking point is the admissibility of the additive fragment of the former in the latter, which we established in \cref{lemma:adm-of-MALL-additives}.
As for the converse, note that a proof in \pQLL[\infty] can only have $0$, $1$, or $\infty$ as validities.
Moreover, a proof translated from \ISOMIXMALL does not use \ExFalsoRule, and thus has validity greater or equal than $1$, thus proving one half of the claim.

Conversely, let $P$ be a \pQLL[\infty] proof of $\Gamma \entails \Delta$ of validity greater than $1$.
By \cref{lemma:adm-of-MALL-additives} and \cref{lemma:efq-elimination}, we can assume $P$ is \ExFalsoRule-free and only uses the unary versions of \pOrRightRule[\infty] and \pAndLeftRule[\infty].
Then the translation of rules in this direction is straightforward, once noted that, since $a \tensor b = 0$ iff either $a=0$ or $b=0$, the rules of \pQLL[\infty] involving $\red{\tensor}$ can be replaced as-is with those in \ISOMIXMALL.

\section{Soundness}
\label[appendix]{app:soundness}

We prove now \cref{prop:soundness}.

For brevity and without loss of generality, when a cedent $\Gamma^\monop$ or $\Delta^\dumonop$ has to appear, we replace it by a single variable for an element of $\cat{S}$.

We start by noting that our translation of cedents is well-posed (that is, respects multiset structure) thanks to the unitality, associativity, and commutativity axioms, that is \eqref{eq:softale-unit-ass-comm}.
Likewise, we can use $\One$ on both sides because $\One^{\dual\dual} = \One$, that is \eqref{eq:softale-involutivity}.
We thus employ these facts tacitly below.

We start from the following rules, which together with the axioms for involutivity \eqref{eq:softale-involutivity} and duality \eqref{eq:softale-duality} allow us to cut the subsequent cases in half.
\begin{description}
	\item[\DualRightRule] is sound by \eqref{eq:softale-star-aut}: $(a, b \eleq c) = (a \eleq (b \monop c^\dual)^\dual) = (a \eleq b^\dual \dumonop c)$.
	\item[\DualLeftRule] holds by duality, whose application we illustrate here once and for all:
	\begin{eqalign}
		(a \eleq b \dumonop c)
		\overset{\eqref{eq:softale-duality}}{=} (b \monop c \eleq a)
		\overset{\DualRightRule}{=} (c \eleq b^\dual \dumonop a)
		\overset{\eqref{eq:softale-duality}}{=} (a \monop b \eleq c)
	\end{eqalign}
	Note the use of commutativity as well as the fact $a \dumonop b = (a^\dual \monop b^\dual)^\dual$.
\end{description}

We then have:

\begin{description}
	\item[\AxRule] $1 \leq (a \eleq a)$ by \eqref{eq:enriched-refl} for enriched categories.

	\item[\EmptyRule, \OneLeftRule, and \OneRightRule] These all amount (thanks also to $\One^\dual = \One$ from \eqref{eq:softale-involutivity}) to checking that $1 \leq (\One \eleq \One)$, which holds by \eqref{eq:enriched-refl}.

	\item[\ExFalsoRule] This rule corresponds to the trivial fact $0 \leq (a \eleq b)$.

	\item[\CutRule] After moving the context to the outer cedents, it is enough to check that $(a \eleq b) \tensor (b \eleq c) \leq (a \eleq c)$, which holds by \eqref{eq:enriched-trans}.

	\item[\MixRule] This follows from \eqref{eq:softale-interchange} and \eqref{eq:softale-mix} (right):
	\begin{eqalign}
		(a \eleq b) \tensor (a' \eleq b') \tensor 1 &\overset{\eqref{eq:softale-interchange} \tensor \eqref{eq:softale-mix}} \leq ((a \monop a') \eleq (b \monop b')) \tensor ((b \monop b') \eleq (b \dumonop b'))\\
		&\leq (a \monop a') \eleq (b \dumonop b').
	\end{eqalign}

	\item[\TensorLeftRule and \ParRightRule] hold by definition,

	\item[\TensorRightRule and \ParLeftRule] It suffices to prove the first as the second is formally dual, and, once the context is all moved on the left side, the first is directly \eqref{eq:softale-interchange}.

	\item[\pOrLeftRule and \pAndRightRule] It suffices to prove the first as the second is formally dual, and the first is directly \eqref{eq:softale-p-joins-minimality}.
	Specifically, we move all the context on the right then invoke \eqref{eq:softale-p-joins-minimality}.

	\item[\pAndLeftRule and \pOrRightRule] It suffices to prove the first as the second is formally dual, and the first is directly \eqref{eq:softale-p-joins-upper-bound}.

	\item[\BotLeftRule and \TopRightRule] It suffices to prove the first as the second is formally dual, and the first is directly \eqref{eq:softale-soft-bottom}.
\end{description}

\section{Construction of the classifying softale}
\label[appendix]{app:syn-softale}

Here we prove that \cref{def:classifying-softale} is well-posed.

We start by observing $\SynSoftale{\Theory}$ is a $\MulReals$-preorder.
Indeed, reflexivity is granted by \AxRule and transitivity by \CutRule.
The way this is proven is paradigmatic for all the other inequations below, namely by exhibiting a derivation starting from the left-hand side and ending in the right-hand side, e.g.
\begin{eqalign*}
	1 \leq \validity{A \entails A} &\iff
	\begin{prooftree}
		\hypo{\red{1}}
		\infer1[\AxRule]{A \entails A}
	\end{prooftree}
	\\[2ex]
	\validity{A \entails B} \tensor \validity{B \entails C} \leq \validity{A \entails C} &\iff
	\begin{prooftree}
		\hypo{A \entails B \redbbin{\tensor} B \entails C}
		\infer1[\CutRule]{A \entails C}
	\end{prooftree}
\end{eqalign*}

We now similarly prove each structural equation (the items below are numbered by the equation they prove in \cref{def:p-softale}).
\begin{description}

	\item[\eqref{eq:softale-interchange}] $\validity{A \entails B} \tensor \validity{D \entails C} \leq \validity{A \tensor D \entails B \tensor C}$ is witnessed by the following derivation:
	\begin{equation}
	   \begin{prooftree}
		   \hypo{A \entails B}
		   \hypo{\red{\tensor}}
		   \hypo{D \entails C}
		   \infer3[\TensorRightRule]{A \tensor D \entails B, C}
		   \infer1[\TensorLeftRule]{A \tensor D \entails B \tensor C}
	   \end{prooftree}
	\end{equation}

	\item[\eqref{eq:softale-unit-ass-comm}] We must prove $1 \leq \validity{A \tensor \One \entails A}$ and $1 \leq \validity{A \entails A \tensor \One}$:
	\begin{equation}
		\begin{prooftree}
				\hypo{\red{1}}
				\infer1[\AxRule]{A \entails A}
				\hypo{\red{\tensor}}
				\hypo{\red{1}}
				\infer1[\OneLeftRule]{\One \entails\ }
			\infer3[\MixRule]{A, \One \entails A}
			\infer1[\TensorLeftRule]{A \tensor \One \entails A}
		\end{prooftree}
		\hspace*{10ex}
		\begin{prooftree}
					\hypo{\red{1}}
				\infer1[\AxRule]{A \entails A}
			\hypo{\red{\tensor}}
					\hypo{\red{1}}
				\infer1[\OneRightRule]{\ \entails \One}
			\infer3[\TensorRightRule]{A &\entails A \tensor \One}
		\end{prooftree}
	\end{equation}
	The associativity and commutativity of $\tensor$ are witnessed by the same proofs that witness them for \ISOMIXMALL.
	Here are the two directions of associativity, both witnessed to have validity greater than $1$:
	\begin{equation}
		\begin{prooftree}
					\hypo{\red{1}}
				\infer1[\AxRule]{A \entails A}
				\hypo{\red{\tensor}}
						\hypo{\red{1}}
					\infer1[\AxRule]{C \entails C}
					\hypo{\red{\tensor}}
						\hypo{\red{1}}
					\infer1[\AxRule]{B \entails B}
				\infer3[\TensorRightRule]{B, C \entails B \tensor C}
			\infer3[\TensorRightRule]{A, B, C \entails A \tensor (B \tensor C)}
			\infer1[\TensorLeftRule]{A \tensor B, C \entails A \tensor (B \tensor C)}
			\infer1[\TensorLeftRule]{(A \tensor B) \tensor C \entails A \tensor (B \tensor C)}
		\end{prooftree}
	\end{equation}
	\begin{equation}
		\begin{prooftree}
						\hypo{\red{1}}
					\infer1[\AxRule]{A \entails A}
					\hypo{\red{\tensor}}
						\hypo{\red{1}}
					\infer1[\AxRule]{B \entails B}
				\infer3[\TensorRightRule]{A, B \entails A \tensor B}
				\hypo{\red{\tensor}}
					\hypo{\red{1}}
				\infer1[\AxRule]{C \entails C}
			\infer3[\TensorRightRule]{A, B, C \entails (A \tensor B) \tensor C}
			\infer1[\TensorLeftRule]{A, B \tensor C \entails (A \tensor B) \tensor C}
			\infer1[\TensorLeftRule]{A \tensor (B \tensor C) \entails (A \tensor B) \tensor C}
		\end{prooftree}
	\end{equation}
	Commutativity is witnessed in both directions by the same proof:
	\begin{equation}
		\begin{prooftree}
					\hypo{\red{1}}
				\infer1[\AxRule]{A \entails A}
				\hypo{\red{\tensor}}
					\hypo{\red{1}}
				\infer1[\AxRule]{B \entails B}
			\infer3[\TensorRightRule]{B, A &\entails B \tensor A}
			\infer1{A, B &\entails B \tensor A}
			\infer1[\TensorLeftRule]{A \tensor B &\entails B \tensor A}
		\end{prooftree}
	\end{equation}
	where we freely reordered the antecedent by means of our convention that cedents are not considered to have an order (see the \emph{incipit} of \cref{sec:defs}).

	\item[\eqref{eq:softale-involutivity}] holds definitionally, recall \eqref{eq:qll-negation},
	\item[]
	\item[\eqref{eq:softale-duality}] $\validity{A \entails B} = \validity{B^\dual \entails A^\dual}$ by \DualLeftRule and \DualRightRule,
	\item[\eqref{eq:softale-star-aut}] $\validity{A \tensor B \entails C^\dual} = \validity{A \entails (B \tensor C)^\dual}$ is witnessed by the proofs
	\begin{equation}
		\begin{prooftree}
			\hypo{A, B \entails C^\dual}
			\infer1[\DualLeftRule]{A, B, C \entails}
			\infer1[\TensorLeftRule]{A, B \tensor C \entails}
			\infer1[\DualRightRule]{A \entails (B \tensor C)^\dual}
		\end{prooftree}
		\hspace*{10ex}
		\begin{prooftree}
				\hypo{\red{1}}
				\infer1[\AxRule]{B \entails B}
				\hypo{\red{\tensor}}
				\hypo{\red{1}}
				\infer1[\AxRule]{C \entails C}
			\infer3[\TensorRightRule]{B, C \entails B \tensor C}
			\hypo{\red{\tensor}}
				\hypo{A \entails (B \tensor C)^\dual}
			\infer1[\DualLeftRule]{A, B \tensor C \entails}
			\infer3[\CutRule]{A, B, C \entails}
			\infer1[\DualRightRule]{A, B \entails C^\dual}
		\end{prooftree}
	\end{equation}

	\item[\eqref{eq:softale-soft-bottom}] $\validity{\bot \entails A} = \infty$ is directly witnessed by \BotLeftRule.
	\item[\eqref{eq:softale-p-joins-minimality} and \eqref{eq:softale-p-joins-upper-bound}] are proven by \pOrRightRule and \pOrLeftRule, respectively.
\end{description}

\section{Proof of Grounded Completeness (\cref{th:grounded-completeness})}\label[appendix]{app:grounded-completeness-proof}

In the following, we fix the valuation $v:\Atoms \to \PosReals$ and work in the grounded theory $\Theory_v$, so that $\entails$ will mean entailment in that theory.

We start by noting that \MixStarRule does not interfere with cut-elimination: inspecting the proof at \cref{app:cut-elim} for the cases involving \MixRule, one can note that \cref{eq:duality} suffices to make the analogous proof transformations go through for \MixStarRule.

The next tool is the following observation:

\begin{lemma}\label{lemma:dualism}
	For all formulae $A,B$ in the language of $\Theory_v$,
	\begin{equation}
		\validity{A \entails B}^* \leq \validity{B \entails A}
	\end{equation}
\end{lemma}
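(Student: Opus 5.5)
The plan is to pass through the canonical model $\sem{-}_v$ of the grounded theory $\Theory_v$ in $\PosReals$, and to factor the inequality as
\[
\validity{A \entails B}^* \;\leq\; (\sem{A}_v \mulimp \sem{B}_v)^* \;\leq\; \sem{B}_v \mulimp \sem{A}_v \;\leq\; \validity{B \entails A}.
\]
Each step is then handled separately.

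\textbf{First inequality.} This is soundness (\cref{prop:soundness}, in the form \eqref{eq:soundness} for the model $\sem{-}_v$), which gives $\validity{A \entails B} \leq \sem{A}_v \mulimp \sem{B}_v$. Applying the order-reversing duality $(-)^*$ turns this into the first inequality.

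\textbf{Middle inequality.} This is pure arithmetic in $\pMulReals$. Writing $a = \sem{A}_v$ and $b = \sem{B}_v$, we have $(a \mulimp b)^* = (a^* \cotensor b)^* = a \tensor b^*$. The right-hand side is $b \mulimp a = b^* \cotensor a$. So the claim is $a \tensor b^* \leq b^* \cotensor a$, which is the isomix inequality $x \tensor y \leq x \cotensor y$ noted in \cref{sec:pos-reals}.

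\textbf{Last inequality.} This is the real content. I would first prove an auxiliary claim, by induction on formulae and \emph{without} invoking \cref{th:grounded-completeness} (whose proof presumably relies on the present lemma, so I must avoid circularity): for every formula $C$,
\[
\sem{C}_v \leq \validity{\entails C} \qquad\text{and}\qquad \sem{C}_v^* \leq \validity{C \entails}.
\]
The second bound follows from the first applied to $C^\dual$, because $\validity{C \entails} = \validity{\entails C^\dual}$ via \DualLeftRule and \DualRightRule, and $\sem{C^\dual}_v = \sem{C}_v^*$. Negation is defined by De Morgan \eqref{eq:qll-negation}, so it suffices to treat $\entails C$ with $C$ ranging over atoms, negated atoms, units and the four binary connectives. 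The cases go as follows.
\begin{enumerate}
\item Atoms and negated atoms are exactly the grounding schema \eqref{rule:grounding-schema}; for $a^\dual$ one passes from \AtomLeftRule{a} through \DualRightRule.
\item $\One$ is given by \OneRightRule, and $\top$ by \TopRightRule with value $\infty$. For $\bot$ the required bound is $0$, which \ExFalsoRule supplies trivially.
\item $C_1 \tensor C_2$ uses \TensorRightRule, giving $\validity{\entails C_1} \tensor \validity{\entails C_2}$.
\item $C_1 \parr C_2$ uses \MixStarRule followed by \ParRightRule, giving $\validity{\entails C_1} \cotensor \validity{\entails C_2}$.
\item $C_1 \plor C_2$ and $C_1 \pland C_2$ use \pOrRightRule and \pAndRightRule with empty contexts, giving $\padd$ and $\pcoadd$ respectively.
\end{enumerate}
In every compound case the inductive hypothesis is pushed through the operation using monotonicity of $\tensor$, $\cotensor$, $\padd$ and $\pcoadd$ (\cref{lemma:p-sums-misc}).

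With the auxiliary claim in hand, apply \MixStarRule to the structure $B \entails \ \redbin{\cotensor}\ \entails A$. This yields
\[
\validity{B \entails A} \geq \validity{B \entails} \cotensor \validity{\entails A} \geq \sem{B}_v^* \cotensor \sem{A}_v = \sem{B}_v \mulimp \sem{A}_v ,
\]
again by monotonicity of $\cotensor$.

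The main obstacle I expect is the bookkeeping at the extreme values $0$ and $\infty$. This concerns the middle arithmetic step and the monotonicity arguments, where $0 \tensor \infty = 0$ but $0 \cotensor \infty = \infty$, so some care is needed there. It also concerns making sure the induction genuinely avoids using \cref{th:grounded-completeness} itself. If the $\cotensor$ in \MixStarRule turns out to be awkward at $0 \cotensor \infty$, I would fall back on ordinary \MixRule wherever one side is $0$ or $\infty$, and check that case separately.
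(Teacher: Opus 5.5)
Your first link is justified in the wrong direction. Soundness gives $\validity{A \entails B} \leq \sem{A}_v \mulimp \sem{B}_v$. Because $(-)^*$ reverses order, dualizing yields $\validity{A \entails B}^* \geq (\sem{A}_v \mulimp \sem{B}_v)^*$, which is the opposite of what your chain needs. Since $(-)^*$ is an order-reversing involution, the inequality $\validity{A \entails B}^* \leq (\sem{A}_v \mulimp \sem{B}_v)^*$ is equivalent to the \emph{completeness}-type bound $\validity{A \entails B} \geq \sem{A}_v \mulimp \sem{B}_v$, and no soundness statement can provide that.

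The repair is already in your proposal. Run the \MixStarRule argument from your last link with $A$ and $B$ exchanged:
\[
\validity{A \entails B} \;\geq\; \validity{A \entails} \cotensor \validity{\entails B} \;\geq\; \sem{A}_v^* \cotensor \sem{B}_v \;=\; \sem{A}_v \mulimp \sem{B}_v .
\]
Dualizing this, now correctly flipping the inequality, gives the first link. The corrected proof uses the bound $\validity{X \entails Y} \geq \sem{X}_v \mulimp \sem{Y}_v$ twice, with the isomix step $a \tensor b^* \leq b^* \cotensor a$ in between. Soundness plays no role.

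Your middle step and auxiliary induction are fine. The induction is also genuinely non-circular. It is exactly the nontrivial half of \cref{th:grounded-completeness}, and the case $\varphi = \psi^\dual$ in the paper's proof of that theorem, where the paper invokes the present lemma, is redundant because negation is a defined operation.

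One small tightening: phrase the induction as producing an explicit proof $\pi_C$ of $\entails C$ with $\validity{\pi_C} \geq \sem{C}_v$. Then monotonicity is applied to validities of concrete proofs rather than to suprema. This matters because $\cotensor$ does not commute with suprema in general: for example $\sup_n (n \cotensor 0) = 0$, whereas $\infty \cotensor 0 = \infty$.

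Once repaired, your route differs genuinely from the paper's. The paper never mentions $\sem{-}_v$. It uses invertibility of \ParRightRule and the duality rules to reduce to the one-sided claim $\validity{A \entails} \geq \validity{\entails A}^*$ for positive $A$. It then proves that claim by induction, comparing the two provabilities directly: \MixStarRule against \TensorRightRule for $\tensor$, and \pOrLeftRule against \pOrRightRule for $\plor$.

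Your argument instead bounds both $\validity{A \entails B}$ and $\validity{B \entails A}$ from below by explicit semantic values. The lemma then collapses to the arithmetic fact $(a \mulimp b)^* \leq b \mulimp a$. Your route handles two-sided sequents directly, with neither the $\parr$-inversion nor the positivity reduction. It also shows that the paper's order of dependencies can be reversed: grounded completeness first, with the lemma as a corollary. The paper's argument, in exchange, stays entirely inside the calculus and never needs a semantic value.
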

\begin{proof}
	We start by observing that\footnotemark~$\validity{A \entails B} = \validity{\ \entails A^\dual, B} = \validity{\ \entails A^\dual \parr B}$, thus without loss of generality we can prove $\validity{A \entails\ } \geq \validity{\ \entails A}^*$.
	\footnotetext{
		This holds because \ParRightRule is invertible. Its inverse is obtained by cutting $\ \entails A \parr B$ with $A \parr B \entails A, B$.
	}
	Moreover, we also have $\validity{A \entails\ } = \validity{\ \entails A^\dual}$, and thus proving the inequality for $A$ proves it for $A^\dual$ too.
	Therefore, without loss of generality, we can assume $A$ is positive (i.e. made of propositional variables, $\bot$, $\One$, $\plor$, and $\tensor$).

	We prove the theorem by induction on the structure of $A$.

	For constants, we show that for every proof $\pi$ of $A \entails\ $ there is a proof $\overline{\pi}$ of $\ \entails A$ such that $\validity{\pi} \geq \validity{\overline{\pi}}^*$.
	It follows that $\validity{A \entails\ } \geq \validity{\pi} \geq \validity{\overline{\pi}}^* \geq \validity{\ \entails A}^*$.
	This is how we get $\overline{\pi}$:
	\begin{align}
		\begin{prooftree}
			\hypo{\red{v(a)}}
			\infer1[\AtomRightRule{a}]{a^\dual \entails\ }
		\end{prooftree}
		\qquad &\mapsto \qquad
		\begin{prooftree}
			\hypo{\red{v(a)^*}}
			\infer1[\AtomLeftRule{a}]{\ \entails a^\dual}
		\end{prooftree}
		\\[1.5ex]
		\begin{prooftree}
			\hypo{\red{v(a)^*}}
			\infer1[\AtomLeftRule{a}]{a \entails\ }
		\end{prooftree}
		\qquad &\mapsto \qquad
		\begin{prooftree}
			\hypo{\red{v(a)}}
			\infer1[\AtomRightRule{a}]{\ \entails a}
		\end{prooftree}
		\\[1.5ex]
		\begin{prooftree}
			\hypo{\red{\infty}}
			\infer1[\BotLeftRule]{\bot \entails\ }
		\end{prooftree}
		\qquad &\mapsto \qquad
		\begin{prooftree}
			\hypo{\red{0}}
			\infer1[\ExFalsoRule]{\ \entails \bot}
		\end{prooftree}
		\\[1.5ex]
		\begin{prooftree}
			\hypo{\red{1}}
			\infer1[\OneLeftRule]{\One \entails\ }
		\end{prooftree}
		\qquad &\mapsto \qquad
		\begin{prooftree}
			\hypo{\red{1}}
			\infer1[\OneRightRule]{\ \entails \One}
		\end{prooftree}
	\end{align}

	For $\tensor$, we first observe that $\validity{A \tensor B \entails} = \validity{A, B \entails}$ by invertibility of \TensorLeftRule.
	We then have
	\begin{equation}
		\validity{A \tensor B \entails} \geq \validity{A \entails\ } \cotensor \validity{B \entails} \geq \validity{\ \entails A}^* \cotensor \validity{\entails B}^*,
	\end{equation}
	with the first inequality witnessed by \MixStarRule and the second being the induction hypothesis.
	Now since $\validity{\ \entails A \tensor B} \geq \validity{\ \entails A} \tensor \validity{\entails B}$ as witnessed by \TensorRightRule, we have
	\begin{equation}
		\validity{\ \entails A}^* \cotensor \validity{\entails B}^* = \left(\validity{\ \entails A} \tensor \validity{\entails B}\right)^* \geq \validity{\ \entails A \tensor B}^*.
	\end{equation}
	Similarly, $\validity{A \lor B \entails} \geq \validity{A \entails\ } \coadd \validity{B \entails}$ by \pOrLeftRule, and thus
	\begin{equation}
		\validity{A \lor B \entails} \geq \validity{\ \entails A}^* \coadd \validity{\entails B}^*.
	\end{equation}
	Now since $\validity{\ \entails A \lor B} \geq \validity{\ \entails A} \add \validity{\entails B}$ because of \pOrRightRule, we have
	\begin{equation}
		\validity{\ \entails A}^* \coadd \validity{\entails B}^* = \left(\validity{\ \entails A} \add \validity{\entails B}\right)^* \geq \validity{\ \entails A \lor B}^*.
	\end{equation}
\end{proof}


\begin{proof}[Proof of \cref{th:grounded-completeness}]
	By induction on $\varphi$:
	\begin{enumerate}
		\item $\varphi = \One$, then $1 \leq 1$,
		\item $\varphi = \bot$, then $0$ is smaller than anything,
		\item $\varphi = \top$, then $\infty \leq \infty$,
		\item $\varphi = a$ propositional variable, then by definition of $\sem{a}_v$ and the subformula property,
		\begin{equation}
			v(a) = \sem{a}_v = \validity{\ \entails^{\Theory_v} a} = v(a),
		\end{equation}
		\item $\varphi = a^\dual$---same as above.
		\item $\varphi = \psi^\dual$, then $\sem{\psi^\dual}_v = \sem{\psi}_v^* = \validity{\ \entails^{\Theory_v} \psi}^*$ and by \cref{lemma:dualism} $\validity{\ \entails^{\Theory_v} \psi}^* \leq \validity{\ \entails^{\Theory_v} \psi^\dual}$.
		\item $\varphi = \alpha \ast \beta$ for $\ast \in \{\tensor, \parr, \pland, \plor\}$, we have
		\begin{equation}
			\sem{\alpha \ast \beta}_v = \sem{\alpha}_v \ast \sem{\beta}_v \leq \validity{\ \entails^{\Theory_v} \alpha} \ast \validity{\ \entails^{\Theory_v} \beta} \leq \validity{\ \entails^{\Theory_v} \alpha \ast \beta}
		\end{equation}
		Here the last inequality is induced by the right introduction rule of $\tensor, \plor, \pland$ for those connectives, and by \MixStarRule for $\parr$.
	\end{enumerate}
\end{proof}

\end{document}